\documentclass[journal,onecolumn]{IEEEtran}
\usepackage{bbding}
\usepackage{pifont}
\usepackage{amsfonts}
\usepackage{mathrsfs}
\usepackage{amsthm}
\usepackage{bm}
\usepackage{graphicx}
\usepackage{epstopdf}
\usepackage[tbtags]{amsmath}
\usepackage{mathtools}
\usepackage{color}
\usepackage{multirow}
\usepackage{amssymb}
\usepackage{nicematrix}
\usepackage{booktabs}
\usepackage{bookmark}
\usepackage{enumerate}
\usepackage{tabularray}
\usepackage{ninecolors}
\usepackage{float}
\usepackage{subcaption}
\usepackage[numbers,sort&compress]{natbib}
\graphicspath{{pics/}{profile_photos/}}
\newtheorem{theorem}{Theorem}

\newtheorem{lemma}{Lemma}
\newtheorem{corollary}{Corollary}

\newtheorem{definition}{Definition}

\newcommand{\adots}{\reflectbox{$\ddots$}}
\newcommand{\ncfrac}[2]{\genfrac{}{}{0pt}{0}{#1}{#2}}
\ifCLASSINFOpdf

\else

\fi

\begin{document}

\title{Robustness Enhancement of Consensus Networks: the Optimal Memory Depth}

\author{Jiamin~Wang,~Jian~Liu,~Feng~Xiao,~Haibin~Duan,~Yuanshi~Zheng
\thanks{This work is funded by the National Natural Science Foundation of China under Grant 62273267, the Key R\&D program of Shaanxi Province under Grant 2025CY-YBXM-100, and the China Postdoctoral Science Foundation under Grants GZB20250433 and 2025M781658.
The material in this paper was not presented at any conference.
\textit{(Corresponding author: Yuanshi Zheng)}}
\thanks{
	J. Wang, J. Liu, and Y. Zheng are with the Center for Multi-Agent,
School of Mechano-electronic Engineering, Xidian University, Xi'an,
710071, China.
E-mails: wangjiamin21@outlook.com; 
liujianzym@outlook.com; yiqunzhang@xidian.edu.cn; zhengyuanshi2005@163.com.
}
\thanks{
	F. Xiao is with the School of Control and Computer Engineering, North China Electric Power University, Beijing, 102206, China (E-mail: fengxiao@ncepu.edu.cn).
}
\thanks{
	H. Duan is with the School of Automation Science and Electrical Engineering, Beihang University, Beijing, 100083, China (E-mail: hbduan@buaa.edu.cn).
}
}
\maketitle

\begin{abstract}
Understanding what governs collective robustness and how it can be enhanced remains a central pursuit in network science.
This paper investigates the robustness of multi-agent consensus networks, quantified by the $H_2$ performance metric,
and delves into the enhancing effect of agents' local memory on it.
Inspired by the hierarchical temporal structure of memory observed in neuroscience,
we focus on the role of memory depth,
which reflects the temporal features of memory from recent to remote.
Building on linear extrapolation,
we propose a consensus protocol with single-step memory and tunable memory depth,
derive the necessary and sufficient condition for achieving consensus,
and show that the protocol exhibits an inheritable consensus property across memory depths.
Furthermore, 
analytical expressions for the $H_2$ performance metric, which depend on the memory factor, memory depth, coupling gain, and Laplacian spectrum, are established.
Under balanced usage of real-time and memory information,
we demonstrate that memory at any accessible depth enhances $H_2$ performance,
and the optimal memory depth occurs at either the most recent or the most remote memory, contingent upon certain parameter regions.
Further detailed discussions are provided to clarify the broader implications of our findings.
\end{abstract}

\begin{IEEEkeywords}
Multi-agent, consensus networks, $H_2$ performance, memory information, memory depth
\end{IEEEkeywords}

\IEEEpeerreviewmaketitle

\section{Introduction\label{Sec:1}}
\IEEEPARstart{I}{nspired} 
by the ubiquitous collective behaviors in natural and social swarms,
where sophisticated global functionalities stem from simple local interactions,
multi-agent networks with diverse individual characteristics and interaction patterns \cite{YuanshiZheng2011,JunyiYang2020,LiqiZhou2022,Jiamin2025SCL} have been designed to accomplish various coordinated control tasks.
However, interconnectivity among agents is a double-edged sword,
since it also allows local disturbances to propagate catastrophically,
as evidenced in complex networks \cite{OriolArtime2024}.
Consequently, it is paramount to disentangle the underlying principles that govern the robustness of multi-agent networks against disturbances,
thereby providing systematic pathways for enhancing robustness.

The study of robustness in multi-agent networks usually commences with the consensus network,
as it forms the cornerstone of many other coordinated control problems.
In general, the robustness of multi-agent consensus networks to vanishing bounded disturbances can be characterized by the $H_\infty$ norm of the closed-loop transfer function matrix, which represents the worst-case level of disturbance attenuation.
In contrast, the robustness to non-vanishing noises is typically captured by the square $H_2$ norm, which quantifies the quality of noise propagation throughout the network.
These two measures are commonly referred to as the $H_\infty$ and $H_2$ performance metric, respectively.
Most existing studies employ robust control methods in conjunction with the linear matrix inequality (LMI) technique to design and optimize distributed robust controllers \cite{ZhongkuiLi2011,ShimingChen2020,HaoZhang2023,ZhongkuiLi2023,HeLi2025}, ensuring that multi-agent networks achieve prescribed $H_\infty$ and $H_2$ performance levels.

In a quite different research avenue, some studies have revealed intrinsic relationships between robustness performance indices and network topological properties, with a primary focus on continuous-time multi-agent consensus networks.
By identifying the analytical relationship, it can be straightforwardly and systematically applied to guide the design of consensus networks with enhanced robustness.
It is shown that the $H_\infty$ performance is usually determined by a single Laplacian eigenvalue, whereas the $H_2$ performance depends on all Laplacian eigenvalues.
Specifically, for consensus networks with first-order dynamics over undirected graphs, the $H_\infty$ performance is inversely proportional to the minimum nonzero Laplacian eigenvalue \cite{MiladSiami2014}, while the $H_2$ performance is determined by the inverse of the harmonic mean of all Laplacian eigenvalues \cite{GeorgeForrestYoung2010}.
Interestingly, the latter result is equivalent to the effective resistance (i.e., the well-known Kirchhoff index) of the resistor network associated with the undirected graph.
These findings have subsequently been extended to directed graphs \cite{Ding2024,JiaminWang2024TAC}, leader-following multi-agent networks \cite{MohammadPirani2018}, time-delay multi-agent networks \cite{ShimaDezfulian2018}, and second-order multi-agent networks \cite{H.GirayOral2017,YixiongFeng2018,JiaminWang2024AUTO}.
Beyond the topological perspective, it should be noted that the impact of sampling control, which alters dynamic behaviors, on the $H_\infty$ performance was investigated in \cite{JiaminWang2024TAC}.
Furthermore, the impact of protocol structure on the $H_\infty$ performance of second-order consensus networks was analyzed \cite{JiaminWang2024AUTO},
in which a concise graph criterion for protocol selection was proposed.

Despite the extensive research on continuous-time consensus networks,
insights into the intrinsic determinants of robustness in discrete-time consensus networks remain remarkably scarce.
Moreover, 
the inherent structural limitation of conventional consensus algorithms leaves little space for developing novel performance enhancement methodologies.
In other words, conventional consensus protocols suffer from a kind of \textit{amnesia}, emphasizing instantaneous yet disposable information exchange among agents, where each agent's previous state is forgotten once its local information is updated.
Even though networks may inherently involve memory effects,
such as data shifts resulting from communication delays,
they are often treated as minor technical details in the algorithmic formulation rather than being deliberately exploited (See \cite{FengXiao2016,YafengLi2024,ZhenhuaWang2025}).
This treatment overlooks potential benefits that the agent's memory of past information might offer in consensus networks.
However, recent studies have highlighted some advantages of incorporating memory information into consensus protocols,
including enhanced privacy preservation \cite{GuilhermeRamos2023} and accelerated convergence rate \cite{TuncerCanAysal2009,Jing-WenYi2023,JiahaoDai2024,GianniPasolini2020,JiahaoDai2022}.
Specifically, Ramos et al. \cite{GuilhermeRamos2023} proposed a parameter-free discrete-time consensus algorithm to obtain higher privacy index,
which needs previous states of local neighbors.
To accelerate the convergence rate of consensus algorithms,
an early study \cite{TuncerCanAysal2009} introduced a linear predictor-based approach,
in which each agent’s local update relies on a convex sum of the conventional consensus term and a multi-step memory aggregation of its own past states.
It was shown that this design achieves a faster convergence rate than the standard consensus algorithm using the optimal weighting matrix.
Building upon this foundation, Yi et al. \cite{Jing-WenYi2023} further generalized the memory-based consensus algorithm and derived analytical expressions for the optimal convergence rate and control parameters under different memory window lengths.
Recognizing that the existing results mainly concentrated on first-order multi-agent networks,
Dai et al. \cite{JiahaoDai2024} proposed a second-order consensus protocol with velocity memory,
and proved that the velocity memory from immediately previous time step can accelerate convergence compared with memoryless schemes.
However, in all the aforementioned schemes, each agent was only able to access and exploit its own memory information, without utilizing the memory of neighbors.
To address this limitation, Pasolini et al. \cite{GianniPasolini2020} proposed a novel consensus protocol that relies on agents’ historical relative measurements of their neighbors.
Dai et al. \cite{JiahaoDai2022} provided a unified form of existing memory-based consensus protocols,
incorporating memory information from both the agents themselves and their neighbors.
The demonstrated effectiveness of memory mechanisms in consensus performances has also motivated advancements in related fields, such as accelerated distributed optimization \cite{XiaoxingRen2022}, accelerated Nash equilibrium seeking \cite{JiuanGao2025}, and minimum-time consensus prediction \cite{FulongHu2025}.
In our latest work \cite{JiaminWang2025TAC},
we proved that invoking the memory information of agents and their neighbors from immediately previous time step enhances the $H_\infty$ performance of first-order discrete-time consensus networks. 

In summary, although the beneficial effect of memory information on accelerating convergence has been widely acknowledged,
its role in enhancing robustness remains underexplored.
Furthermore, findings from neuroscience indicate that short-term and long-term memories shape qualitatively different cognitive responses.
By analogy, agent's memories recalled from different temporal distances may lead to distinct collective performances in consensus networks,
yet a systematic understanding of such effects remains elusive.
Motivated by neuroscience evidence that memory formation and retrieval exhibit hierarchical temporal structures \cite{TakashiKitamura2017},
the concept of \textit{memory depth} is introduced to characterize the temporal extent to which an agent invoke past information.
A larger memory depth corresponds to retrieve more remote memory.
This naturally raises a fundamental question: \textit{how does memory depth influence the robustness of consensus networks?}

With all aforementioned observations,
this paper focuses on the $H_2$ performance of discrete-time consensus networks,
and we aim to investigate the effect of memory depth on it. 
Our contributions are summarized as follows:
\begin{itemize}
	\item[1)] A concise consensus protocol with single-step memory and tunable memory depth is proposed.
	It is a fundamental form generalized from related works such that our conclusions are potential to be extended to a wide range of existing architectures.
	Furthermore, we provide the necessary and sufficient condition for achieving consensus,
	and show that this protocol exhibits an inheritable consensus property with respect to the memory depth,
	meaning that consensus attained under remote memory ensures consensus under recent memory.
	\item[2)] We derive an analytic expression of the $H_2$ performance under the proposed protocol,
	which depends on the memory factor, memory depth, coupling gain, and Laplacian spectrum.
	Notably, for the typical case where real-time and memory information are used in balance, we establish a more explicit form of this expression, characterized by a series of continued fractions.
	\item[3)] It is proved that invoking single-step memory information of any depth that ensures consensus always improves the $H_2$ performance of consensus networks under a balanced utilization of real-time and memory information.
	Moreover, we identify the optimal memory depth within certain parameter regions,
	thereby providing effective criteria for memory depth selection between  long-term memory and short-term memory.
\end{itemize}

The rest of this paper is organized as follows.
In Section \ref{Sec:2}, some preliminaries are presented.
The problem formulation is provided in Section \ref{Sec:3}.
Consensus analysis and robustness analysis are given in Section \ref{Sec:4}.
Further implications of our results are discussed in Section \ref{Sec:5}.
Finally, Section \ref{Sec:6} concludes this paper.

\textbf{Notations}:
Throughout this paper,
$\mathbb{N}$ refers to the set of non-negative integers,
$\mathbb{N}^+$ denotes the set of positive integers,
$\mathbb{R}$ is the set of real numbers,
$\mathbb{R}^n$ represents the
$n$-dimensional real column vector space, 
and $\mathbb{R}^{n\times m}$ denotes the $n\times m$ real matrix 
space.
For $x\in\mathbb{R}$,
$\lfloor x \rfloor$ denotes the floor function and gives the largest integer less than or equal to $x$.
The all-zero and all-one matrices of appropriate dimensions are denoted by $\mathbf{0}$ and $\mathbf{1}$, respectively.
$\mathbf{0}_n$ and $\mathbf{1}_n$ refer to the $n \times 1$ column vectors of all zeros and all ones, respectively.
$I_n$ indicates the $n$-dimensional identity matrix.
$\mathrm{diag}\{a_1, \dots, a_n\}$ designates a diagonal matrix whose $i$-th diagonal element is $a_i$.
For a matrix $M$, its transpose transpose and conjugate transpose are denoted by $M^\top$ and $M^*$, respectively.
$\det(\cdot)$ and $\mathrm{Tr}(\cdot)$ respectively represent the determinant and the trace of a square matrix.
The imaginary unit is denoted by $\mathbf{j}$.
$|\cdot|$ refers to the absolute value of a real number or the cardinality of a set.
For integers $a,b\in\mathbb{N}$ with $a<b$,
let $\mathcal{I}_{a,b} = \{a,a+1,\dots,b\}$ be the set of all integers from $a$ to $b$ inclusive.
The empty set is represented as $\emptyset$,
and the expectation operator is denoted by $\mathbb{E}[\cdot]$.
\section{Preliminaries}\label{Sec:2}
Consider a group of agents with a communication network represented by a weighted undirected graph $\mathcal{G}=(\mathcal{V},\mathcal{E},\mathcal{A})$, where $\mathcal{V}=\{v_1,\dots,v_n\}$ is the vertex set, $\mathcal{E}\subseteq \mathcal{V}\times\mathcal{V}$ is the edge set, and $\mathcal{A}=[a_{ij}]\in\mathbb{R}^{n\times n}$ is the symmetric adjacency matrix with $a_{ij}\geq 0$ and $a_{ii}=0$.
Edge $\varepsilon_{ij}=(v_i,v_j)\in\mathcal{E}$ implies that $v_i$ and $v_j$ can exchange information, with weight $a_{ij}>0$; otherwise $a_{ij}=0$. 
A \textit{path} between distinct vertices $v_i$ and $v_j$ is a sequence of distinct edges $(v_i,v_{k_1}),\dots,(v_{k_l},v_j)$. $\mathcal{G}$ is \textit{connected} if a path exists between any pair of distinct vertices. 
The Laplacian matrix $L=[l_{ij}]\in\mathbb{R}^{n\times n}$ associated with $\mathcal{G}$ is defined as $l_{ii}=\sum_{j=1,j\neq i}^n a_{ij}$, $l_{ij}=-a_{ij}$ for $i\neq j$.
The neighbor set of $v_i$ is $\mathcal{N}_i=\{v_j\in\mathcal{V}:(v_i,v_j)\in\mathcal{E}\}$.
If $\mathcal{G}$ is connected, the eigenvalues of $L$ can be sorted as $0=\lambda_1<\lambda_2\leq\cdots\leq\lambda_n$.

The following definitions and lemmas will be used to derive the main results.
\begin{lemma}[\cite{YangLiu2012}]\label{lem:bar_Psi_L}
	Consider a symmetric matrix $\Psi=I_n-\frac{1}{n}\mathbf{1}_n\mathbf{1}_n^\top$ and a connected undirected graph $\mathcal{G}$ with a Laplacian matrix $L$.
	Then, there exists an orthogonal matrix $U\in\mathbb{R}^{n\times n}$ (i.e., $U^\top U=UU^\top=I_n$) such that
	\begin{equation}\label{eq:bar_Psi}
		U^\top\Psi U
		=
		\bar{\Psi}
		=
		\begin{bmatrix}
			0 & \mathbf{0}_{n-1}^{\top} \\
			\mathbf{0}_{n-1} & I_{n-1}
		\end{bmatrix}
	\end{equation}
	and
	\begin{equation}\label{eq:bar_L}
		U^\top LU
		=
		\bar{L}
		=
		\begin{bmatrix}
			0 & \mathbf{0}_{n-1}^{\top} \\
			\mathbf{0}_{n-1} & \tilde{L}
		\end{bmatrix},
	\end{equation}
	where $\tilde{L}\in\mathbb{R}^{(n-1)\times(n-1)}$ is positive definite.
\end{lemma}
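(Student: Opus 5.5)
The idea is to build $U$ from an orthonormal eigenbasis of $L$ whose first column is $\frac{1}{\sqrt n}\mathbf{1}_n$. Since $\mathcal{G}$ is undirected, $L$ is real symmetric and has a real orthonormal eigenbasis. We also have $L\mathbf{1}_n=\mathbf{0}_n$, because each row of $L$ sums to zero by the definition of $l_{ii}$. Because $\mathcal{G}$ is connected, the eigenvalue $0$ is simple, so its eigenspace is exactly $\mathrm{span}\{\mathbf{1}_n\}$. I would take this simplicity as the standard fact quoted in the preliminaries, namely the ordering $0=\lambda_1<\lambda_2$.

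Set $u_1=\frac{1}{\sqrt n}\mathbf{1}_n$, and let $u_2,\dots,u_n$ be orthonormal eigenvectors of $L$ for $\lambda_2,\dots,\lambda_n$. These span $\mathbf{1}_n^\perp$, because eigenvectors of a symmetric matrix for distinct eigenvalues are orthogonal and $0$ is simple. Let $U=[u_1,\dots,u_n]$. Then $U$ is orthogonal, and $U^\top LU=\mathrm{diag}\{0,\lambda_2,\dots,\lambda_n\}$. This already has the block form \eqref{eq:bar_L} with $\tilde L=\mathrm{diag}\{\lambda_2,\dots,\lambda_n\}$, which is positive definite since every $\lambda_i>0$ for $i\ge 2$.

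For $\Psi$, compute $U^\top\Psi U=I_n-\frac1n (U^\top\mathbf{1}_n)(U^\top\mathbf{1}_n)^\top$. Since $u_1^\top\mathbf{1}_n=\sqrt n$ and $u_i^\top\mathbf{1}_n=0$ for $i\ge2$, we get $U^\top\mathbf{1}_n=\sqrt n\, e_1$. Hence $U^\top\Psi U=I_n-e_1e_1^\top=\bar\Psi$, which is \eqref{eq:bar_Psi}.

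The only step needing care is the simplicity of the zero eigenvalue for a connected graph. If I must justify it, I would use $x^\top Lx=\frac12\sum_{i,j}a_{ij}(x_i-x_j)^2$. If $Lx=0$, this quadratic form vanishes, so $x_i=x_j$ across every edge, and by connectivity $x\in\mathrm{span}\{\mathbf{1}_n\}$. The same identity shows $L\succeq0$, so $\tilde L\succ0$.
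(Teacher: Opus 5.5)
Your proof is correct and is essentially the standard argument behind this lemma. The paper gives no proof of its own and simply cites \cite{YangLiu2012}; the argument there is the same: take $\frac{1}{\sqrt{n}}\mathbf{1}_n$ as the first column of $U$, use $L\mathbf{1}_n=\mathbf{0}_n$ and the simplicity of the zero eigenvalue, and read off both block forms. Choosing the remaining columns as eigenvectors of $L$ is more than the lemma needs, since any orthonormal basis of $\mathbf{1}_n^\perp$ works and your quadratic-form argument then shows $\tilde{L}$ is positive definite. Your choice does, however, make $\tilde{L}=\mathrm{diag}\{\lambda_2,\dots,\lambda_n\}$ diagonal from the start, so the paper's later transformation $\tilde{V}$ can be taken to be the identity.
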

\begin{definition}[Continued fractions, \cite{WilliamJones1980}]\label{def:continued_fraction}
	Given two sequences of real numbers $r_1,r_2,\dots$ and $s_0,s_1,\dots$, a fraction of the form
	\begin{equation}\label{eq:continued_fraction}
		s_0 + \cfrac{r_1}{s_1 + \cfrac{r_2}{s_2 + \ncfrac{\vphantom{0}}{\ddots}}}
	\end{equation}
	is called a \textit{continued fraction} and denoted as $s_0+\mathcal{K}_{i=1}^\infty(\frac{r_i}{s_i})$,
	while
	\begin{equation*}
		s_0+\mathcal{K}_{i=1}^n(\frac{r_i}{s_i})
		=s_0 + \cfrac{r_1}{s_1 + \cfrac{r_2}{\ddots + \cfrac{r_n}{s_n}}}
	\end{equation*}
	is termed as the $n$-th approximant of \eqref{eq:continued_fraction}.
\end{definition}
\begin{lemma}\label{lem:cfrac_property}
	Let $\mathscr{F}_n(\tau)=s_0+\mathcal{K}_{i=1}^n(\frac{r_i}{s_i})$ and $\mathscr{G}_n(\tau)=y_0+\mathcal{K}_{i=1}^n(\frac{v_i}{y_i})$,
	where
	\begin{equation*}
		\begin{cases}
			r_1=\dots=r_n=-1, \\
			s_0=\dots=s_{n-1}=\frac{2}{\tau}, s_n=1, \\
			v_1=\dots=v_{n-1}=-1, v_n=-\tau, \mathrm{if}~n>1,\\
 			v_n=-\tau, \mathrm{if}~n=1,\\
			y_0=\dots=y_{n-1}=\frac{2}{\tau}, y_n=1,\\
			n\in\mathbb{N}^+, \tau\in(-1,0)\cup(0,1).
		\end{cases}
	\end{equation*}
	Then, for all $n\in\mathbb{N}^+$, we have
	\begin{equation}\label{eq:Gn}
		\begin{cases}
			\mathscr{G}_{n+1}(\tau)>\mathscr{G}_{n}(\tau)>1, & \mathrm{if}~\tau\in(0,1), \\
			\mathscr{G}_{n+1}(\tau)<\mathscr{G}_{n}(\tau)<-1, & \mathrm{if}~\tau\in(-1,0),
		\end{cases}
	\end{equation}
	and
	\begin{equation}\label{eq:Fn}
		\begin{cases}
			\mathscr{F}_{n+1}(\tau)>\mathscr{G}_{n}(\tau)>\mathscr{F}_{n}(\tau)>1, & \mathrm{if}~\tau\in(0,1), \\
			\mathscr{F}_{n}(\tau)<\mathscr{F}_{n+1}(\tau)<\mathscr{G}_{n}(\tau)<-1, & \mathrm{if}~\tau\in(-1,0).
		\end{cases}
	\end{equation}
\end{lemma}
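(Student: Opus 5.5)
The plan is to rewrite both approximants as iterates of one Möbius map, and then get every inequality from monotonicity of that map plus a comparison of starting points. Set
\begin{equation*}
T_\tau(x)=\frac{2}{\tau}-\frac{1}{x}.
\end{equation*}
Evaluating the continued fractions from the innermost level outward gives $\mathscr{F}_n(\tau)=T_\tau^{n}(1)$, since the last level is $s_{n-1}+r_n/s_n=\frac{2}{\tau}-1=T_\tau(1)$. Likewise the last level of $\mathscr{G}_n$ is $\frac{2}{\tau}-\tau=T_\tau(1/\tau)$, so $\mathscr{G}_n(\tau)=T_\tau^{n}(1/\tau)$; this formula also covers the case $n=1$. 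Consequently
\begin{equation*}
\mathscr{F}_{n+1}(\tau)=T_\tau^{n}\bigl(T_\tau(1)\bigr),\qquad \mathscr{G}_{n+1}(\tau)=T_\tau^{n}\bigl(T_\tau(1/\tau)\bigr).
\end{equation*}
Thus all claims reduce to comparing $T_\tau^n$ at the four seeds $1$, $1/\tau$, $T_\tau(1)=\frac{2}{\tau}-1$ and $T_\tau(1/\tau)=\frac{2}{\tau}-\tau$. The map $T_\tau$ has derivative $1/x^2>0$, so it is strictly increasing on each half-line $(0,\infty)$ and $(-\infty,0)$. The key step is to find an invariant interval on which the relevant seeds lie.

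\textbf{Case $\tau\in(0,1)$.} Take the interval $[1,\infty)$. For $x\ge 1$ we have $T_\tau(x)\ge \frac{2}{\tau}-1>1$, so $T_\tau$ maps $[1,\infty)$ into $(1,\infty)$ and is increasing there.
\begin{itemize}
\item The seeds are ordered $1<\frac{1}{\tau}<\frac{2}{\tau}-1$, because $\frac{1}{\tau}>1$ and $\frac{2}{\tau}-1-\frac{1}{\tau}=\frac{1}{\tau}-1>0$.
\item Applying the increasing map $T_\tau^n$ to this chain gives $\mathscr{F}_n<\mathscr{G}_n<\mathscr{F}_{n+1}$.
\item The bounds $\mathscr{F}_n>1$ and $\mathscr{G}_n>1$ follow from invariance: each is $T_\tau$ of a point in $[1,\infty)$.
\item Since $\frac{2}{\tau}-\tau-\frac{1}{\tau}=\frac{1}{\tau}-\tau>0$, we have $T_\tau(1/\tau)>1/\tau$. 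Applying $T_\tau^n$ gives $\mathscr{G}_{n+1}>\mathscr{G}_n$.
\end{itemize}

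\textbf{Case $\tau\in(-1,0)$.} Here the seed $1$ is positive, so it cannot be compared directly with the negative seeds. I would instead start the comparison one step later, from $\mathscr{F}_1=\frac{2}{\tau}-1$. Take the interval $(-\infty,-1]$. For $x\le-1$ we have $-1/x\in(0,1]$, hence $T_\tau(x)\in\bigl(\frac{2}{\tau},\frac{2}{\tau}+1\bigr]\subset(-\infty,-1)$, using $\frac{2}{\tau}<-2$. So $T_\tau$ maps this interval into itself and is increasing there.
\begin{itemize}
\item The seeds $\frac{2}{\tau}-1$, $\frac{1}{\tau}$ and $\frac{2}{\tau}-\tau$ all lie in $(-\infty,-1)$.
\item We have $\frac{2}{\tau}-1<\frac{1}{\tau}$, since the difference is $\frac{1}{\tau}-1<0$. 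Applying $T_\tau^{n}$ yields $\mathscr{F}_{n+1}<\mathscr{G}_n$.
\item We have $\frac{2}{\tau}-\tau<\frac{1}{\tau}$, since $\frac{1}{\tau}-\tau<0$ when $|1/\tau|>|\tau|$. Applying $T_\tau^n$ gives $\mathscr{G}_{n+1}<\mathscr{G}_n$.
\item For $\mathscr{F}_n<\mathscr{F}_{n+1}$, first check $\mathscr{F}_2-\mathscr{F}_1=1-\frac{1}{2/\tau-1}>0$. Then apply $T_\tau^{n-1}$ to the pair $\mathscr{F}_1<\mathscr{F}_2$.
\item Invariance of the interval gives $\mathscr{G}_n<-1$ and $\mathscr{F}_n<-1$.
\end{itemize}

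The main obstacle is exactly this domain issue. Monotonicity of $T_\tau$ holds only on each half-line separately, so every comparison must be made between points on the same side of $0$. When $\tau<0$, the seed $1$ and the seed $1/\tau$ lie on opposite sides of $0$. The plan handles this by shifting every comparison by one iterate, so that all compared points lie in the invariant set $(-\infty,-1]$ and the increasing-map argument applies.
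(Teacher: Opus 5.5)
Your proof is correct and takes essentially the paper's route. The paper also uses the recursion $\mathscr{F}_{n+1}=\frac{2}{\tau}-\frac{1}{\mathscr{F}_n}$ (and likewise for $\mathscr{G}_n$), the invariance of $(1,\infty)$ or $(-\infty,-1)$, and the identity $T_\tau(x)-T_\tau(y)=\frac{x-y}{xy}$, telescoped into a product, to propagate the sign of the initial differences; this is your monotonicity-on-a-half-line argument in explicit form, with the minor difference that you compare seeds at level $0$ (e.g.\ $1<\frac{1}{\tau}<\frac{2}{\tau}-1$) where the paper compares $\mathscr{G}_1-\mathscr{F}_1$, $\mathscr{F}_2-\mathscr{G}_1$, and so on.
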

\begin{proof}
	The proof is provided in the Appendix \ref{app:pro_cfrac}.
\end{proof}
For notational convenience in subsequent analysis, let $\mathscr{F}_0(\tau)=1$ and $\mathscr{G}_0(\tau)=\frac{1}{\tau}$.

Consider the following discrete-time system:
\begin{equation}\label{eq:example_DT_system}
	\begin{cases}
		x(t+1)=Ax(t)+Bu(t)+D\omega(t) \\
		y(t)=Cx(t)
	\end{cases}
\end{equation}
where $x(t)\in\mathbb{R}^{n}$ is the state, $u(t)\in\mathbb{R}^{p}$ is control input in the form of $u(t)=Kx(t)$, $y(t)\in\mathbb{R}^{q}$ is the output, $\omega(t)\in\mathbb{R}^{m}$ is the exogenous input signal, $A$, $B$, $C$, and $D$ are constant matrices with appropriate dimensions.
The transfer matrix from $\omega(t)$ to $y(t)$ can be given by $T(z)=C(zI_n-\bar{A})^{-1}D$, where $\bar{A}=A+BK$.

\begin{definition}[\cite{AlexeyBelov2018}]\label{def:H2_norm}
	If system \eqref{eq:example_DT_system} is asymptotically stable,
	the $H_2$ norm of $T (z)$ is given by
	\begin{equation*}
		\|T (z)\|_2
		=\sqrt{\frac{1}{2\pi}\int_{0}^{2\pi}\mathrm{Tr}\big[T(e^{\mathbf{j}\upsilon})^{^{\scriptstyle*}}T(e^{\mathbf{j}\upsilon})\big]\mathrm{d}\upsilon}.
	\end{equation*}
\end{definition}
\begin{lemma}[\cite{Chi-TsongChen1999}]\label{lem:unique_Gramian}
	If the pair $(\bar{A},B)$ is controllable and system \eqref{eq:example_DT_system} is asymptotically stable, then the unique solution of
	\begin{equation}\label{eq:Lyapunov_equation_example}
		\bar{A}\mathcal{W}\bar{A}^\top+BB^\top=\mathcal{W}
	\end{equation}
	is positive definite.
	This solution is called the discrete \textit{controllability Gramian} and can be expressed as $\mathcal{W}=\sum_{k=0}^{\infty}A^kBB^\top (A^\top)^k$.
\end{lemma}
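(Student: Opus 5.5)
The plan is to construct the solution explicitly as the series, check that it solves \eqref{eq:Lyapunov_equation_example}, then prove uniqueness and positive definiteness separately. Throughout, the powers in the series are read as powers of the closed-loop matrix, i.e.\ $\mathcal{W}=\sum_{k=0}^{\infty}\bar{A}^kBB^\top(\bar{A}^\top)^k$. This is the only reading consistent with \eqref{eq:Lyapunov_equation_example}.

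\emph{Convergence.} Asymptotic stability of \eqref{eq:example_DT_system} under $u(t)=Kx(t)$ means the spectral radius satisfies $\rho(\bar{A})<1$. Fix any $\rho$ with $\rho(\bar{A})<\rho<1$. Either the Jordan form of $\bar{A}$ or Gelfand's formula gives a constant $c>0$ with $\|\bar{A}^k\|\le c\rho^k$ for all $k\in\mathbb{N}$. Then the $k$-th term of the series is bounded in norm by $c^2\|B\|^2\rho^{2k}$, so the series converges absolutely and $\mathcal{W}$ is well defined and symmetric. This step is the only one that needs any care: the bound $\|\bar{A}^k\|\le c\rho^k$ is where stability is actually used, and it is standard.

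\emph{It solves the equation.} Multiply the series on the left by $\bar{A}$ and on the right by $\bar{A}^\top$, which shifts the summation index:
\begin{equation*}
\bar{A}\mathcal{W}\bar{A}^\top=\sum_{k=1}^{\infty}\bar{A}^kBB^\top(\bar{A}^\top)^k=\mathcal{W}-BB^\top .
\end{equation*}
This is exactly \eqref{eq:Lyapunov_equation_example}.

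\emph{Uniqueness.} Suppose $\mathcal{W}_1$ and $\mathcal{W}_2$ both solve \eqref{eq:Lyapunov_equation_example}, and set $\Delta=\mathcal{W}_1-\mathcal{W}_2$. Subtracting the two equations gives $\Delta=\bar{A}\Delta\bar{A}^\top$. Iterating, $\Delta=\bar{A}^k\Delta(\bar{A}^\top)^k$ for every $k$. The right-hand side has norm at most $c^2\rho^{2k}\|\Delta\|$, which tends to $0$, so $\Delta=\mathbf{0}$. An equivalent argument vectorizes the equation: $I-\bar{A}\otimes\bar{A}$ is nonsingular because its eigenvalues are $1-\lambda_i\lambda_j$ with $|\lambda_i\lambda_j|<1$.

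\emph{Positive definiteness.} For any vector $x$,
\begin{equation*}
x^\top\mathcal{W}x=\sum_{k=0}^{\infty}\|B^\top(\bar{A}^\top)^kx\|^2\ge 0 .
\end{equation*}
If $x^\top\mathcal{W}x=0$, every term vanishes, so $x^\top\bar{A}^kB=0$ for all $k$. In particular this holds for $k=0,\dots,n-1$, so $x^\top[B,\ \bar{A}B,\ \dots,\ \bar{A}^{n-1}B]=0$. Controllability of $(\bar{A},B)$ means this controllability matrix has full row rank $n$, which forces $x=\mathbf{0}$. Hence $\mathcal{W}$ is positive definite, and by the uniqueness step it is the unique solution of \eqref{eq:Lyapunov_equation_example}.
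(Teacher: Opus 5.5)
Your proof is correct and complete, and reading the series as $\sum_{k\ge 0}\bar{A}^kBB^\top(\bar{A}^\top)^k$ is right: the $A^k$ in the statement is a typo. The paper gives no proof of its own and only cites Chen's textbook, whose argument is this same standard one: the convergent series solves the equation by an index shift, uniqueness follows from $\rho(\bar{A})<1$, and positive definiteness follows from the full row rank of the controllability matrix.
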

\begin{lemma}[\cite{KeminZhou1996}]\label{lem:H2_trace}
	The $H_2$ norm of $T(z)$ can be computed as
	\begin{equation*}
		\|T (z)\|_2=\sqrt{\mathrm{Tr}(C\mathcal{W}C^\top)},
	\end{equation*}
	where $\mathcal{W}$ is the controllability Gramian in \eqref{eq:Lyapunov_equation_example}.
\end{lemma}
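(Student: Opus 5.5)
The plan is to compute the frequency-domain integral in Definition \ref{def:H2_norm} directly from the Laurent (impulse-response) expansion of $T(z)$, and then use Parseval's identity to turn it into a time-domain sum that is exactly the series defining the Gramian in Lemma \ref{lem:unique_Gramian}. Since the exogenous input enters through $D$, the Gramian appearing in Lemma \ref{lem:H2_trace} must be the one associated with the input matrix of $T(z)$. I therefore read \eqref{eq:Lyapunov_equation_example} with $B$ replaced by $D$:
\begin{equation*}
\bar{A}\mathcal{W}\bar{A}^\top+DD^\top=\mathcal{W},\qquad \mathcal{W}=\sum_{k=0}^{\infty}\bar{A}^kDD^\top(\bar{A}^\top)^k .
\end{equation*}
Controllability is not needed for the trace identity; it only guarantees $\mathcal{W}\succ 0$. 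Asymptotic stability alone gives existence and uniqueness of $\mathcal{W}$.

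\textbf{Step 1 (expansion).} Asymptotic stability means the spectral radius $\rho(\bar{A})<1$. Hence there are constants $c>0$ and $\rho\in(\rho(\bar{A}),1)$ with $\|\bar{A}^k\|\le c\rho^k$ for all $k$. For $|z|\ge 1$ the Neumann series
\begin{equation*}
(zI_n-\bar{A})^{-1}=\sum_{k\ge 1}\bar{A}^{k-1}z^{-k}
\end{equation*}
converges absolutely and uniformly. Therefore on the unit circle
\begin{equation*}
T(e^{\mathbf{j}\upsilon})=\sum_{k\ge1}G_k e^{-\mathbf{j}k\upsilon},\qquad G_k=C\bar{A}^{k-1}D .
\end{equation*}

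\textbf{Step 2 (Parseval).} Substitute the expansion into
\begin{equation*}
\mathrm{Tr}\big[T(e^{\mathbf{j}\upsilon})^*T(e^{\mathbf{j}\upsilon})\big]=\sum_{k,l}\mathrm{Tr}(G_k^\top G_l)\,e^{\mathbf{j}(k-l)\upsilon}.
\end{equation*}
The double series is dominated termwise by $c^2\|C\|^2\|D\|^2\rho^{k+l-2}$, which is summable, so summation and integration over $[0,2\pi]$ may be interchanged. Orthogonality, $\frac{1}{2\pi}\int_0^{2\pi}e^{\mathbf{j}(k-l)\upsilon}\mathrm{d}\upsilon=\delta_{kl}$, then gives
\begin{equation*}
\|T\|_2^2=\sum_{k\ge1}\mathrm{Tr}(G_k^\top G_k).
\end{equation*}

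\textbf{Step 3 (identify the Gramian).} By cyclicity of the trace,
\begin{equation*}
\mathrm{Tr}(G_k^\top G_k)=\mathrm{Tr}\big(C\bar{A}^{k-1}DD^\top(\bar{A}^\top)^{k-1}C^\top\big).
\end{equation*}
Summing over $k$ and pulling the convergent sum inside $C(\cdot)C^\top$ yields $\mathrm{Tr}(C\mathcal{W}C^\top)$. It remains to confirm that this series $\mathcal{W}$ solves the Lyapunov equation: multiply the series by $\bar{A}$ on the left and $\bar{A}^\top$ on the right, which shifts the index, and add $DD^\top$. Uniqueness follows because $\rho(\bar{A})<1$ means no eigenvalue product $\lambda_i\lambda_j$ equals $1$, so the linear map $X\mapsto X-\bar{A}X\bar{A}^\top$ is invertible. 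Taking square roots gives the lemma.

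The only delicate point is the interchange of the infinite double sum with the integral in Step 2. The plan handles it with the geometric bound from Step 1 and dominated convergence (equivalently, uniform convergence on the unit circle). Everything else is routine algebra.
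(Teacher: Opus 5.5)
The paper does not prove this lemma; it imports it from \cite{KeminZhou1996}. Your argument is the standard impulse-response/Parseval proof of this textbook fact: expand $T$ as a Neumann series on the unit circle, apply Parseval to get $\|T\|_2^2=\sum_{k\ge1}\mathrm{Tr}(G_k^\top G_k)$, and identify the resulting series with the unique solution of the Lyapunov equation. It is correct, including the dominated-convergence justification for swapping sum and integral and the uniqueness argument via $1-\lambda_i\lambda_j\neq 0$.

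Your reading of \eqref{eq:Lyapunov_equation_example} with $DD^\top$ in place of $BB^\top$, and $\bar{A}^k$ in place of the printed $A^k$, is in fact required for the identity to hold for $T(z)=C(zI_n-\bar{A})^{-1}D$. It is also exactly how the paper applies the lemma in the proof of Theorem~\ref{thm:H2_calculation}, where the noise input matrix $\mathcal{B}$ plays that role.
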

\section{Problem statement}\label{Sec:3}
Consider a multi-agent network composed of $n$ agents.
The communication relationship among these agents is modeled by a connected weighted undirected graph $\mathcal{G}$,
and each agent takes the following discrete-time dynamics
\begin{equation}\label{eq:DT_dynamics}
	x_i(t+1)=u_i(t)+\omega_i(t), i\in\mathcal{I}_{1,n},
\end{equation}
where $x_i(t)\in\mathbb{R}$ is the state, $u_i(t)\in\mathbb{R}$ is the control input, and $\omega_i(t)\in\mathbb{R}$ denote the white noise with zero mean and unitary covariance.
Here, we design a memory-based consensus protocol with tunable memory depth 
\begin{equation}\label{eq:memory_protocol}
	u_i(t)=\alpha\varphi_i(t)+(1-\alpha)\varphi_i(t-\theta)
\end{equation}
where
\begin{equation*}
	\varphi_i(t)=x_i(t)+\beta\sum_{j\in\mathcal{N}_i}a_{ij}[x_j(t)-x_i(t)],
\end{equation*}
$\alpha\in(0,1)$ is termed the memory factor, $\theta\in\mathcal{I}_{1,\hat{\theta}}$ is called the memory depth, indicating how long ago the invoked memory information was, $\hat{\theta}\in\mathbb{N}^+$ denotes the maximum accessible memory depth and characterizes the agent's limited memory capacity, $\varphi_i(t-\theta)$ is referred to as the $\theta$-depth memory, $\beta$ is the positive coupling gain, and $a_{ij}$ is the adjacency element.
The multi-agent network described by \eqref{eq:DT_dynamics} and \eqref{eq:memory_protocol} is said to achieve consensus if $\lim_{t\to\infty}[x_i(t)-x_j(t)]=0$ $\forall i,j\in\mathcal{I}_{1,n}$ under any initial conditions $x_i(-\theta),\dots,x_i(0)$.

Define the consensus error $\epsilon_i(t)=x_i(t)-\frac{1}{n}\sum_{j=1}^{n}x_j(t)$ to capture the disagreement among agents induced by the noises.
By substituting \eqref{eq:memory_protocol} into \eqref{eq:DT_dynamics} and letting the collective consensus error $\epsilon=[\epsilon_1(t),\dots,\epsilon_n(t)]^\top$ serve as the nominal output, we obtain the following augmented closed-loop system:
\begin{equation}\label{eq:augmented_system}
	\begin{cases}
		\bar{x}(t+1)
		=
		\Theta\bar{x}(t)
		+
		\begin{bmatrix}
			\mathbf{0}\\
			I_n
		\end{bmatrix}
		\omega(t),  \\
		\epsilon(t)
		=
		\begin{bmatrix}
			\mathbf{0} & \Psi
		\end{bmatrix}
		\bar{x}(t),
	\end{cases}
\end{equation}
where $\bar{x}(t)=[x(t-\theta)^\top,\dots,x(t)^\top]^\top\in\mathbb{R}^{h}$ is the aggregated state vector, $h=n(\theta+1)$, $x(t)=[x_1(t),\dots,x_n(t)]^\top$, 
$\omega(t)=[\omega_1(t),\dots,\omega_n(t)]^\top$,
and
$\Theta\in\mathbb{R}^{h\times h}$ is in the form of
\begin{equation}
	\Theta=\begin{bmatrix}
		\mathbf{0} &  I_n & \cdots & \mathbf{0} \\
		\vdots & \vdots & \ddots & \vdots \\
		\mathbf{0} & \mathbf{0} & \cdots &  I_n \\
		(1-\alpha)(I_n-\beta L) & \mathbf{0} & \cdots & \alpha(I_n-\beta L)
	\end{bmatrix}.
\end{equation}

Then, the robustness of the multi-agent consensus network can be given as the following steady-state (total) mean-square deviation
\begin{equation}\label{eq:time_domain_metric}
	\rho=\lim\limits_{t\to\infty}\mathbb{E}\big[\epsilon(t)^\top\epsilon(t)\big].
\end{equation}
A smaller value of this metric indicates better robustness against noises.
It is worth mentioning that since $\rho=\|T_{\alpha,\beta,\theta}(z)\|_2^2$,
this metric is also called the $H_2$ performance metric.
Hereafter, we analyze the robustness of consensus networks using $\|T_{\alpha,\beta,\theta}(z)\|_2^2$, which is more computationally tractable than the time-domain analysis for \eqref{eq:time_domain_metric}.

In this paper, by establishing the analytic expression of the $H_2$ performance metric, we will investigate how memory depth influences the robustness of consensus networks.
Specifically, we try to examine whether the invocation of memory information leads to robustness potentiation,
and compare the robustness under long-term memory (i.e., larger $\theta$) and short-term memory (i.e., smaller $\theta$) to determine the optimal memory depth.

\section{Main results}\label{Sec:4}
In this section, we will first provide consensus analysis (see Section \ref{subsec:consensus_analysis}) for the memory-based consensus protocol,
which is necessary for subsequent robustness analysis.
Then, in Section \ref{subsec:robustness_analysis},
we give a general expression of the $H_2$ performance metric and more explicit forms for a representative memory mechanism.
Based on these results, we reveal the effect of memory depth on the robustness of consensus networks.
\subsection{Consensus Analysis}\label{subsec:consensus_analysis}
For a connected undirected graph $\mathcal{G}$, the associated Laplacian matrix $L$ is symmetric.
According to the spectral theorem \cite{RogerHorn2012}, there exists an orthogonal matrix $V\in\mathbb{R}^{n\times n}$ such that $V^\top LV=\Lambda$, where $\Lambda=\mathrm{diag}\{\lambda_1,\dots,\lambda_n\}$ comprise all eigenvalues of $L$. 
Thus, for system \eqref{eq:augmented_system}, the characteristic polynomial of $\Theta$ can be given by
\begin{align*}
	&\det(\gamma I_{h}-\Theta) \\
	&=\left|\gamma^{\theta+1}I_n-\gamma^\theta\alpha(I_n-\beta \Lambda)-(1-\alpha)(I_n-\beta \Lambda)\right| \\
	&=\prod_{i=1}^{n}\mathscr{P}(\theta,\gamma,\lambda_{i}),
\end{align*}
where $\mathscr{P}(\theta,\gamma,\lambda_{i})=\gamma^{\theta+1}-\gamma^{\theta}\alpha(1-\beta\lambda_i)-(1-\alpha)(1-\beta\lambda_i)$.
Let $\gamma_{i,k}$ ($k=1,\dots,\theta+1$) be the eigenvalues of $\Theta$ associated with $\lambda_{i}$.
Here, we provide a necessary and sufficient condition for the noise-free multi-agent network \eqref{eq:DT_dynamics} reaching consensus under the memory-based protocol \eqref{eq:memory_protocol}.
\begin{lemma}\label{lem:consensus_condition}
	In the absence of noises,
	the multi-agent network \eqref{eq:DT_dynamics} achieves consensus under protocol \eqref{eq:memory_protocol} if and only if $|\gamma_{i,k}|<1\ \forall i\in\mathcal{I}_{2,n}, k\in\mathcal{I}_{1,\theta+1}$.
\end{lemma}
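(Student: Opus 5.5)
The plan is to decouple the noise-free closed-loop dynamics along the Laplacian eigenbasis and reduce consensus to the stability of $n-1$ scalar linear recurrences. Let $V$ be the orthogonal matrix with $V^\top LV=\Lambda$ and $V=[\frac{1}{\sqrt n}\mathbf{1}_n,\,V_2]$. Set $z(t)=V^\top x(t)$. With $\omega\equiv 0$, the dynamics read $x(t+1)=\alpha(I_n-\beta L)x(t)+(1-\alpha)(I_n-\beta L)x(t-\theta)$. Hence each mode obeys
\begin{equation*}
z_i(t+1)=(1-\beta\lambda_i)\big[\alpha z_i(t)+(1-\alpha)z_i(t-\theta)\big],
\end{equation*}
a scalar recurrence of order $\theta+1$. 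Its companion matrix has characteristic polynomial $\mathscr{P}(\theta,\gamma,\lambda_i)$, with roots $\gamma_{i,k}$, $k\in\mathcal{I}_{1,\theta+1}$. The consensus error is $\epsilon(t)=\Psi x(t)=V_2V_2^\top x(t)=V_2[z_2(t),\dots,z_n(t)]^\top$ (equivalently, by Lemma \ref{lem:bar_Psi_L}, $\Psi$ annihilates only the $\mathbf{1}_n$ direction). Moreover, $x_i-x_j\to0$ for all $i,j$ is equivalent to $\epsilon(t)\to0$. Since $V_2$ has orthonormal columns, consensus is therefore equivalent to $z_i(t)\to0$ for every $i\in\mathcal{I}_{2,n}$ and every choice of initial data $x(-\theta),\dots,x(0)$. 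The mode $z_1$ (with $\lambda_1=0$) is irrelevant, which is why $i=1$ is excluded from the condition.

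\emph{Sufficiency.} If all roots of $\mathscr{P}(\theta,\cdot,\lambda_i)$ lie strictly inside the unit circle, the companion matrix is Schur stable. Its powers then decay geometrically, even with repeated roots, so $z_i(t)\to0$ for all initial data.

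\emph{Necessity.} Suppose some $|\gamma_{i,k}|\ge 1$ with $i\ge2$. The recurrence admits the solution $z_i(t)=\gamma_{i,k}^t$ (or its real part, if complex), which does not converge to zero. I will realize it from genuine initial conditions: choose $z_i(s)=\mathrm{Re}(\gamma_{i,k}^{s})$ for $s\in\mathcal{I}_{-\theta,0}$ and $z_j(s)=0$ for $j\neq i$, and set $x(s)=Vz(s)$. Since the initial conditions are arbitrary, this is allowed, and the resulting $\epsilon(t)$ does not vanish. The one point needing care is complex $\gamma$. I will handle it by noting that the real part $\mathrm{Re}(\gamma^t)=|\gamma|^t\cos(t\phi)$ does not tend to zero when $|\gamma|\ge1$: the sequence $\cos(t\phi)$ cannot converge to $0$, because $\cos^2(t\phi)=\frac{1}{2}(1+\cos 2t\phi)$.

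I expect no serious obstacle. The only delicate points are this necessity construction and the observation that the eigenvalues of $\Theta$ are exactly the union of the roots of $\mathscr{P}(\theta,\cdot,\lambda_i)$ over $i$, which the preceding determinant factorization already gives.
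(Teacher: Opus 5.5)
Your proof is correct, but it takes a different route from the paper's.

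\textbf{The paper's route.} The paper works with the full matrix $\Theta$. It first analyzes the zero-eigenvalue mode: the roots $\gamma_{1,k}$ of $\gamma^{\theta+1}-\alpha\gamma^{\theta}-(1-\alpha)$ consist of an algebraically simple root $\gamma_{1,1}=1$ together with $\theta$ roots strictly inside the unit disk. This uses the estimate $|\gamma|^{\theta}|\gamma-\alpha|>1-\alpha$ for $|\gamma|>1$, plus the equality case when $|\gamma|=1$. The paper then invokes an external result (Theorem 1 of \cite{FengXiao2006}): consensus holds if and only if $1$ is a simple eigenvalue of $\Theta$ and all other eigenvalues lie inside the unit circle. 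What remains of that condition is exactly $|\gamma_{i,k}|<1$ for $i\in\mathcal{I}_{2,n}$.

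\textbf{Your route.} You project onto the Laplacian eigenbasis and note that $\epsilon=V_2[z_2,\dots,z_n]^\top$. The $\lambda_1$-mode is therefore annihilated by $\Psi$ and never needs to be examined. You then prove both directions by hand:
\begin{itemize}
\item sufficiency follows from Schur stability of each companion matrix;
\item necessity follows by exciting the non-decaying solution $\mathrm{Re}(\gamma^t)$ from admissible initial data.
\end{itemize}
In the necessity step, $\gamma\neq 0$ is guaranteed by $|\gamma|\ge 1$, so the negative powers are legitimate. Your identity $\cos^2(t\phi)=\frac{1}{2}(1+\cos 2t\phi)$, combined with the fact that $\cos(2t\phi)$ is a subsequence of $\cos(t\phi)$, does close the complex case.

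\textbf{Comparison.} Your argument is self-contained and matches the paper's definition of consensus (vanishing pairwise disagreements) exactly. It needs no cited theorem, and so never has to reconcile that definition with the notion of consensus used in the cited work. The paper's route requires the extra $\gamma_{1,k}$ analysis and an outside result. In exchange, it shows that $\Theta$ has exactly one marginal eigenvalue, equal to $1$ and simple. This yields, beyond your conclusion, that the agents' states actually converge to a common finite value. It also pins down precisely the single marginally stable mode that the paper later removes when passing to the reduced subsystem with $\hat{\Theta}$.
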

\begin{proof}
	The proof is given in Appendix \ref{app:pro_consensus_condition}.
\end{proof}

Given a connected undirected graph $\mathcal{G}$,
the set of parameter pairs $(\alpha,\beta)$ ensuring consensus under $\theta$-depth memory,
defined by $\Omega_\theta=\{(\alpha,\beta)\in(0,1)\times(0,\infty):|\gamma_{i,k}|<1\ \forall i\in\mathcal{I}_{2,n}, k\in\mathcal{I}_{1,\theta+1}\}$,
is called the $\theta$-depth consensus region.

We further present an inheritable consensus property of the protocol \eqref{eq:memory_protocol} to facilitate subsequent robustness analysis.
\begin{theorem}\label{thm:consensus_property}
	For fixed memory factor $\alpha$ and coupling gain $\beta\in(0,\frac{2}{\lambda_n})$,
	if the multi-agent network \eqref{eq:DT_dynamics} achieves consensus under protocol \eqref{eq:memory_protocol} with $(\theta+1)$-depth memory in the absence of noises,
	then the consensus is still guaranteed with $\theta$-depth memory.
\end{theorem}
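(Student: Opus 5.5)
The plan is to reduce the statement to a scalar root-location problem for each nonzero Laplacian eigenvalue, and then show that the hypothesis $\beta\in(0,\frac{2}{\lambda_n})$ already forces Schur stability for every memory depth. The inheritance property then follows immediately.

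By Lemma~\ref{lem:consensus_condition}, consensus under $\theta$-depth memory is equivalent to every root of $\mathscr{P}(\theta,\gamma,\lambda_i)$, $i\in\mathcal{I}_{2,n}$, lying strictly inside the unit disk. Set $c_i=1-\beta\lambda_i$. Since $0<\lambda_2\le\lambda_i\le\lambda_n$ and $0<\beta<\frac{2}{\lambda_n}$, we have $c_i\in(-1,1)$. So it suffices to prove the following claim. For any $\alpha\in(0,1)$, any $c\in(-1,1)$ and any $\theta\in\mathbb{N}^+$, the monic polynomial
\begin{equation*}
p_{\theta}(\gamma;c)=\gamma^{\theta+1}-\alpha c\gamma^{\theta}-(1-\alpha)c
\end{equation*}
has all roots in $\{|\gamma|<1\}$. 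Applying the claim with depth $\theta$ gives the theorem; the hypothesis about depth $\theta+1$ is then not even needed.

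First step: show that $p_\theta(\cdot;c)$ has no root on the unit circle when $|c|<1$. Suppose $\gamma=e^{\mathbf{j}\omega}$ were a root. Then $e^{\mathbf{j}\theta\omega}(e^{\mathbf{j}\omega}-\alpha c)=(1-\alpha)c$, and taking moduli gives $|e^{\mathbf{j}\omega}-\alpha c|=(1-\alpha)|c|$. The reverse triangle inequality gives $|e^{\mathbf{j}\omega}-\alpha c|\ge 1-\alpha|c|$. Moreover, $1-\alpha|c|>(1-\alpha)|c|$ is equivalent to $|c|<1$. This is a contradiction.

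Second step: a homotopy argument in $c$. Consider $s\mapsto p_\theta(\gamma;sc)$ for $s\in[0,1]$. The parameter $|sc|$ stays below $1$ throughout, so by the first step no root ever touches the unit circle. At $s=0$ the polynomial is $\gamma^{\theta+1}$, whose roots all sit at the origin. The polynomial is monic of fixed degree with coefficients continuous in $s$, so its roots depend continuously on $s$ (e.g.\ via Hurwitz's or Rouché's theorem). Hence the number of roots inside the open unit disk is constant along the path and equals $\theta+1$. Therefore all $\theta+1$ roots of $p_\theta(\cdot;c)$ lie strictly inside the disk.

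Combining the two steps with Lemma~\ref{lem:consensus_condition} yields consensus under $\theta$-depth memory whenever $\beta\in(0,\frac{2}{\lambda_n})$, which in particular gives the stated inheritance from $\theta+1$ to $\theta$. The only delicate point is the second step: justifying that roots cannot escape to or enter from infinity, and that they move continuously. Monicity of $p_\theta$ handles this, and I would invoke Rouché on a circle $|\gamma|=1$ using the uniform lower bound on $|p_\theta|$ there furnished by the first step.
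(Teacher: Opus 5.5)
Your argument is correct, and it proves more than the theorem asks. With $c_i=1-\beta\lambda_i\in(-1,1)$, the non-leading coefficients of the monic trinomial $\mathscr{P}(\theta,\gamma,\lambda_i)$ have absolute values summing to $\alpha|c_i|+(1-\alpha)|c_i|=|c_i|<1$. So $\mathscr{P}(\theta,\gamma,\lambda_i)$ is Schur stable for every depth and every $\alpha\in(0,1)$, and the depth-$(\theta+1)$ hypothesis is redundant, as you say. Your two steps can be merged: on $|\gamma|=1$ one has $|\alpha c_i\gamma^{\theta}+(1-\alpha)c_i|\leq|c_i|<1=|\gamma^{\theta+1}|$, so Rouch\'e against $\gamma^{\theta+1}$ puts all $\theta+1$ roots inside directly. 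Equivalently, your step-one modulus estimate works verbatim for all $|\gamma|\geq1$; it is the same estimate the paper uses for $\lambda_1$ in the proof of Lemma \ref{lem:consensus_condition}.

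The paper proceeds differently. It writes out the Jury tables of $\mathscr{P}(\theta+1,\gamma,\lambda_i)$ and $\mathscr{P}(\theta,\gamma,\lambda_i)$ and observes that the second table is the first with its last two rows removed. Hence every Jury inequality for depth $\theta$ already appears among those for depth $\theta+1$. It uses $\beta<\frac{2}{\lambda_n}$ only to check the endpoint conditions $\mathscr{P}(\theta,1,\lambda_i)=\beta\lambda_i>0$ and the required sign of $\mathscr{P}(\theta,-1,\lambda_i)$.

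What your route buys is brevity and the stronger fact $(0,1)\times(0,\frac{2}{\lambda_n})\subseteq\Omega_\theta$ for all $\theta$. This also makes the standing assumption $(\alpha,\beta)\in\Omega_\theta$ in Theorem \ref{thm:H2_calculation} and Corollary \ref{cor:opt_depth} automatic once $\beta\in(0,\frac{2}{\lambda_n})$.

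What the paper's route buys is that its table-truncation step does not depend on $\beta$, so it keeps content where your coefficient bound is unavailable. If some $c_i\leq-1$, every even depth is unstable: for even $\theta$, $\mathscr{P}(\theta,-1,\lambda_i)=-1-c_i\geq0$ while the degree $\theta+1$ is odd. Yet odd depths can remain stable (e.g.\ $\theta=1$, $\alpha=\frac{1}{2}$, $c_i=-1.2$), and the same truncation reasoning still passes stability from depth $\theta+2$ down to depth $\theta$.
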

\begin{proof}
	The proof can be found in Appendix \ref{app:pro_consensus_property}.
\end{proof}

By analogy, Theorem \ref{thm:consensus_property} is valid for any memory depth.
Consequently, if the multi-agent network \eqref{eq:DT_dynamics} achieves consensus under protocol \eqref{eq:memory_protocol} with $\theta$-depth memory in the absence of noises,
then the consensus is also guaranteed for any memory depth smaller than $\theta$.
Furthermore, the condition $\beta\in(0,\frac{2}{\lambda_n})$ ensures consensus for the memoryless case (i.e., $\alpha=1$),
as established by the classic first-order discrete-time consensus problem \cite{RezaOlfati-Saber2007}.
These results form the basis for the subsequent robustness comparison between memoryless, short-term memory, and long-term memory cases,
since the $H_2$ performance metric is well-defined only when the consensus, in the absence of noises, is guaranteed.

\subsection{Robustness Analysis}\label{subsec:robustness_analysis}
In this subsection, based on the property $\rho=\|T_{\alpha,\beta,\theta}(z)\|_2^2$,
we investigate the robustness of the multi-agent consensus network \eqref{eq:DT_dynamics} by analyzing $\|T_{\alpha,\beta,\theta}(z)\|_2^2$,
and derive its analytic expression.
Then, we investigate how the memory depth influences the robustness.

Definition \ref{def:H2_norm} shows that the $H_2$ norm is well-defined only for asymptotically stable systems.
However, the augmented system \eqref{eq:augmented_system} contains marginally stable mode.
Prior to any further analysis, we must examine the existence of $\|T_{\alpha,\beta,\theta}(z)\|_2$.

As $\mathcal{G}$ is connected,
it follows from Lemma \ref{lem:bar_Psi_L} that there exists an orthogonal matrix $U\in\mathbb{R}^{n\times n}$ such that \eqref{eq:bar_Psi} and \eqref{eq:bar_L} hold.
Note that the matrix $\tilde{L}\in\mathbb{R}^{(n-1)\times(n-1)}$ in \eqref{eq:bar_L} is positive definite and contains all nonzero eigenvalues of $L$.
Thus, there is an orthogonal matrix
$\tilde{V}\in\mathbb{R}^{(n-1)\times(n-1)}$
such that 
$\tilde{V}^\top\tilde{L}\tilde{V}=\tilde{\Lambda}=\mathrm{diag}\{\lambda_2,\dots,\lambda_n\}$.
Let
$Q=U\bar{V}$,
\begin{equation*}
		\bar{V}=
	\begin{bmatrix}
		1 & \mathrm{0}_{n-1}^\top \\
		\mathrm{0}_{n-1} & \tilde{V}
	\end{bmatrix},~
	\bar{Q}=\begin{bmatrix}
		Q & & \\
		& \ddots & \\
		& & Q
	\end{bmatrix}\in\mathbb{R}^{h\times h}.
\end{equation*}
Performing the following orthogonal transformation
\begin{equation*}
	\tilde{x}(t)=\bar{Q}^\top\bar{x}(t),
	\tilde{\omega}(t)=Q^\top\omega(t),
	\tilde{\epsilon}(t)=Q^\top\epsilon(t)
\end{equation*}
for system \eqref{eq:augmented_system} yields
\begin{equation}\label{eq:augmented_system_2}
	\begin{cases}
		\tilde{x}(t+1)=
		\underbrace{\begin{bmatrix}
			\mathbf{0} &  I_n & \cdots & \mathbf{0} \\
			\vdots & \vdots & \ddots & \vdots \\
			\mathbf{0} & \mathbf{0} & \cdots &  I_n \\
			(1-\alpha)\bar{\Phi} & \mathbf{0} & \cdots & \alpha\bar{\Phi}
		\end{bmatrix}}_{\tilde{\Theta}}
		\tilde{x}(t)
		+
		\begin{bmatrix}
			\mathbf{0}\\
			I_n
		\end{bmatrix}
		\tilde{\omega}(t), \\
		\tilde{\epsilon}(t)=
		\begin{bmatrix}
			\mathbf{0} & \bar{\Psi}
		\end{bmatrix}\tilde{x}(t),
	\end{cases}
\end{equation}
where $\tilde{\Theta}\in\mathbb{R}^{h\times h}$,
$\bar{\Phi}=
\begin{bmatrix}
	1 & 0 \\
	0 & \Phi
\end{bmatrix}$,
and $\Phi=I_{n-1}-\beta\tilde{\Lambda}$.
Decomposing \eqref{eq:augmented_system_2} gives the following subsystem 
\begin{equation}\label{eq:augmented_system_3}
	\begin{cases}
		\hat{x}(t+1)=
		\!\underbrace{\begin{bmatrix}
			\mathbf{0} &  \!\!\!\!\! I_{n-1} & \!\!\cdots & \!\!\!\!\mathbf{0} \\
			\vdots & \!\!\!\!\!\vdots & \!\!\ddots & \!\!\!\!\vdots \\
			\mathbf{0} & \!\!\!\!\!\mathbf{0} & \!\!\cdots &  \!\!\!\! I_{n-1} \\
			(1-\alpha)\Phi & \!\!\!\!\!\mathbf{0} & \!\!\cdots & \!\!\!\!\alpha\Phi
		\end{bmatrix}}_{\hat{\Theta}}
		\hat{x}(t)
		+
		\mathcal{B}
		\hat{\omega}(t), \\
		\hat{\epsilon}(t)=
		\mathcal{C}\hat{x}(t),
	\end{cases}
\end{equation}
where $\hat{\Theta}\in\mathbb{R}^{\hat{h}\times \hat{h}}$,
$\mathcal{C}=\mathcal{B}^\top=\begin{bmatrix}\mathbf{0} & I_{n-1}\end{bmatrix}\in\mathbb{R}^{(n-1)\times\hat{h}}$,
and $\hat{h}=(n-1)(\theta+1)$.
Let $T_{1}(z)$ and $T_{2}(z)$ be the transfer matrices of systems \eqref{eq:augmented_system_2} and \eqref{eq:augmented_system_3}, respectively.
Then, the following lemma proves the existence of $\|T_{\alpha,\beta,\theta}(z)\|_2$ and provides its equivalent formulations.

\begin{lemma}\label{lem:H2_property}
	For a memory depth $\theta$,
	if $(\alpha,\beta)\in\Omega_\theta$,
	then $\|T_{\alpha,\beta,\theta}(z)\|_2=\|T_{1}(z)\|_2=\|T_{2}(z)\|_2$.
\end{lemma}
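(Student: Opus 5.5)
The plan is to compare the three transfer matrices directly, in three steps: (i) $T_{\alpha,\beta,\theta}$ equals $T_1$ up to orthogonal factors; (ii) $T_1$ is block diagonal, with a zero block coming from the marginally stable consensus mode and the remaining block equal to $T_2$; (iii) $T_2$ is the transfer matrix of an asymptotically stable system, so its $H_2$ norm is well defined.

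For step (i), note that $T_{\alpha,\beta,\theta}(z)=[\mathbf{0}~\Psi](zI_h-\Theta)^{-1}[\mathbf{0};I_n]$. Since $\bar{Q}$ is block diagonal with $Q$ orthogonal, we have $\bar{Q}^\top\Theta\bar{Q}=\tilde{\Theta}$, because $Q^\top(I_n-\beta L)Q=\bar{\Phi}$ by Lemma~\ref{lem:bar_Psi_L} and the choice of $\bar V$. We also have $[\mathbf{0}~\Psi]\bar{Q}=Q[\mathbf{0}~\bar{\Psi}]$, since $Q^\top\Psi Q=\bar V^\top\bar\Psi\bar V=\bar\Psi$, and $\bar{Q}^\top[\mathbf{0};I_n]=[\mathbf{0};I_n]Q^\top$. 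Hence $T_{\alpha,\beta,\theta}(z)=Q\,T_1(z)\,Q^\top$. The trace formula in Definition~\ref{def:H2_norm} is invariant under this orthogonal similarity, because $\mathrm{Tr}[QT_1^*Q^\top QT_1Q^\top]=\mathrm{Tr}[T_1^*T_1]$. This gives $\|T_{\alpha,\beta,\theta}\|_2=\|T_1\|_2$, provided either side is defined.

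For step (ii), permute the coordinates of $\tilde{x}$ so that the first entry of each $n$-block (the consensus-direction coordinates) is grouped separately. Because $\bar{\Phi}$ is block diagonal with blocks $1$ and $\Phi$, the matrix $\tilde{\Theta}$ becomes block diagonal with two diagonal blocks:
\begin{itemize}
\item a $(\theta+1)\times(\theta+1)$ companion block with last row $(1-\alpha,0,\dots,0,\alpha)$, which has eigenvalue $1$;
\item the block $\hat{\Theta}$.
\end{itemize}
The input matrix splits compatibly. The output matrix $[\mathbf{0}~\bar{\Psi}]$ has a zero first column, so it annihilates every consensus-direction coordinate. Consequently $T_1(z)=\mathrm{diag}\{0,T_2(z)\}$ as rational matrices: the unstable mode is unobservable and cancels exactly. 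This step contains the main subtlety. $\|T_1\|_2$ cannot be invoked through Definition~\ref{def:H2_norm} directly, since \eqref{eq:augmented_system_2} is not asymptotically stable. I would therefore define it through the frequency integral of the cancelled rational function. I would also note that $\rho$ only involves the observable part, so the time-domain metric \eqref{eq:time_domain_metric} is likewise finite and equals the same quantity.

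For step (iii), $\hat{\Theta}$ decouples further into $n-1$ scalar companion blocks whose characteristic polynomials are $\mathscr{P}(\theta,\gamma,\lambda_i)$ for $i\in\mathcal{I}_{2,n}$. Since $(\alpha,\beta)\in\Omega_\theta$, every root $\gamma_{i,k}$ satisfies $|\gamma_{i,k}|<1$, so \eqref{eq:augmented_system_3} is asymptotically stable. Its transfer matrix $T_2$ therefore has no poles on the unit circle, $\|T_2\|_2$ exists by Definition~\ref{def:H2_norm}, and it can later be computed through Lemmas~\ref{lem:unique_Gramian} and~\ref{lem:H2_trace}. Finally, $\mathrm{Tr}[T_1^*T_1]=\mathrm{Tr}[T_2^*T_2]$ pointwise on the unit circle, which yields $\|T_1\|_2=\|T_2\|_2$. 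Combined with step (i), this proves the chain of equalities in the statement.
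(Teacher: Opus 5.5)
Your proposal is correct and follows essentially the same route as the paper. Both arguments relate $T_{\alpha,\beta,\theta}$ to $T_1$ by the orthogonal change of coordinates, show $T_1(z)=\mathrm{diag}\{0,T_2(z)\}$, and use $(\alpha,\beta)\in\Omega_\theta$ to get Schur stability of $\hat{\Theta}$, so the frequency-domain trace integral is well defined and equal for all three. The paper obtains $T_1$ by explicitly solving the last block column of $(zI_h-\tilde{\Theta})^{-1}$, giving $z^\theta[(\alpha-1)\Phi+z^\theta(zI-\alpha\Phi)]^{-1}$; you instead get the block-diagonal form by permuting states and noting the consensus mode is unobservable, and you are more explicit than the paper that $\|T_{\alpha,\beta,\theta}\|_2$ must be read as the integral of the cancelled rational function.
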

\begin{proof}
	The proof is presented in Appendix \ref{app:pro_H2_property}.
\end{proof}

\begin{table*}[htbp]
	\centering
	\caption{$\Gamma_i$ and $\xi_i$ under different memory depth}
	\label{tab:II}
	\begin{talltblr}
		[
		label = none,
		note{$\star$} = {$\psi_i=\frac{\alpha\zeta_i\phi_i}{\eta_i}-(1-\alpha)\phi_i.
		~^\dagger\chi_{i,1}
		=
		\begin{cases}
			\frac{1-2\alpha}{1-\alpha}\phi_i-1, & \mathrm{if}~\theta~\mathrm{is~odd}, \\
			\frac{1-2\alpha}{1-\alpha}\phi_i^2-1, & \mathrm{if}~\theta~\mathrm{is~even}.
		\end{cases}
		~^\ddagger\chi_{i,2}
		=
		\begin{cases}
			\frac{\alpha}{1-\alpha}, & \mathrm{if}~\theta~\mathrm{is~odd}, \\
			\frac{\alpha\phi_i}{1-\alpha}, & \mathrm{if}~\theta~\mathrm{is~even}.
		\end{cases}$}
		]
		{
			colspec={c|c|c}
		}
		\hline[1pt]
		 $\theta$ &  $\Gamma_i$ & $\xi_i$ \\
		\hline
		$\theta=1$ & $-\psi_i-1$\TblrNote{$\star$} & $\frac{\alpha\phi_i}{\eta_i}$ \\
		\hline
		$\theta=2$ & $-\psi_i\phi_i-1$ & $\frac{\alpha\phi_i^2}{\eta_i}$ \\
		\hline
		$\theta=3,4$
		& $\begin{bmatrix}
			\alpha\phi_i &  -(1-\alpha)\psi_i\phi_i-1 \\
			\chi_{i,1}\TblrNote{$\dagger$} & \chi_{i,2}\TblrNote{$\ddagger$}  \\
		\end{bmatrix}$
		& $\begin{bmatrix}
			\frac{\alpha(1-\alpha)\phi_i^2}{\eta_i} \\ 0
		\end{bmatrix}$ \\
		\hline
		$\theta>4$
		& $\begin{bmatrix}
			\mathbf{0} &  &  & \alpha\phi_i & -(1-\alpha)\psi_i\phi_i-1 \\
			&  & \alpha\phi_i & (1-2\alpha)\phi_i^2-1 & \alpha\phi_i \\
			& \adots & \adots & \adots &  \\
			\alpha\phi_i & (1-2\alpha)\phi_i^2-1 & \alpha\phi_i &  &  \\
			\chi_{i,1} & \chi_{i,2} &  &  & \mathbf{0} 
		\end{bmatrix}$
		& $\begin{bmatrix}
			\frac{\alpha(1-\alpha)\phi_i^2}{\eta_i} \\ \mathbf{0}
		\end{bmatrix}$ \\  
		\hline[1pt]
	\end{talltblr}
\end{table*}

Based on this property,
we propose a universal approach for building the analytic expression of $\|T_{\alpha,\beta,\theta}\|_2^2$ under any memory depth.
\begin{theorem}\label{thm:H2_calculation}
	For a memory depth $\theta$ and $\beta\in(0,\frac{2}{\lambda_{n}})$,
	if $(\alpha,\beta)\in\Omega_\theta$,
	we have
	\begin{equation}\label{eq:H2_analytic_expression}
		\|T_{\alpha,\beta,\theta}(z)\|_2^2
		=n-1-|\mathcal{S}|+\sum_{i\in\mathcal{S}}\frac{-1-\zeta_i w_i}{\eta_i},
	\end{equation}
	where $\zeta_i=2\alpha(1-\alpha)\phi_i^2$,
	$\eta_i=[(1-\alpha)^2+\alpha^2]\phi_i^2-1$,
	$\phi_i=1-\beta\lambda_{i}$,
	and $\mathcal{S}=\{i\in\mathcal{I}_{2,n}:\phi_i\neq 0\}$.
	In addition, $w_i$ is the last component of the unique solution to
	\begin{equation}\label{eq:linear_equations}
		\Gamma_i w=\xi_i,
	\end{equation}
	where $\Gamma_i\in\mathbb{R}^{\iota\times\iota}$ and $\xi_i\in\mathbb{R}^{\iota}$ are specified in Table \ref{tab:II},
	$w\in\mathbb{R}^\iota$,
	and $\iota=\lfloor \frac{\theta+1}{2} \rfloor$.
\end{theorem}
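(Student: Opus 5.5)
The plan is to reduce the computation to $n-1$ decoupled scalar autoregressive systems and then solve a Yule--Walker-type system for each one. By Lemma~\ref{lem:H2_property}, $\|T_{\alpha,\beta,\theta}(z)\|_2^2=\|T_2(z)\|_2^2$, so I would work with subsystem \eqref{eq:augmented_system_3}. Since $\Phi=\mathrm{diag}\{\phi_2,\dots,\phi_n\}$ is diagonal and $\mathcal{B},\mathcal{C}$ select the last block, a permutation of the state splits $\hat\Theta$ into $n-1$ companion blocks, one for each $i\in\mathcal{I}_{2,n}$. Consequently $\|T_2\|_2^2=\sum_{i=2}^n \|T^{(i)}\|_2^2$, where $T^{(i)}$ is the transfer function of the scalar recursion
\begin{equation*}
y(t+1)=a_i y(t)+b_i y(t-\theta)+\omega(t),\qquad a_i=\alpha\phi_i,\ b_i=(1-\alpha)\phi_i ,
\end{equation*}
with output $y(t)$ and unit-variance white noise input. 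Using Lemmas~\ref{lem:unique_Gramian} and \ref{lem:H2_trace}, $\|T^{(i)}\|_2^2$ is the $(\theta+1,\theta+1)$ entry of the Gramian of this block. That entry equals the stationary variance $r_0$ of $y$, and the other Gramian entries are the autocovariances $r_k=\mathbb{E}[y(t)y(t-k)]$. Stability is guaranteed by $(\alpha,\beta)\in\Omega_\theta$. Controllability holds because the block is in companion form with input entering the last coordinate, so the Gramian, and hence the sequence $(r_k)$, is unique.

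If $\phi_i=0$, then $y(t+1)=\omega(t)$ and $r_0=1$. These indices contribute the term $n-1-|\mathcal{S}|$ in \eqref{eq:H2_analytic_expression}. For $i\in\mathcal{S}$, I would square the recursion and take expectations. Because $\omega(t)$ is independent of the past, this gives $r_0=(a_i^2+b_i^2)r_0+2a_ib_i r_\theta+1$, that is,
\begin{equation*}
r_0=\frac{-1-\zeta_i r_\theta}{\eta_i}.
\end{equation*}
This is exactly the claimed summand with $w_i=r_\theta$. Note that $\eta_i\neq 0$ holds here: $\eta_i\ge \tfrac12\phi_i^2-1$, and one can check $\eta_i<0$ because $|\phi_i|<1$ for $\beta\in(0,2/\lambda_n)$.

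The remaining task is to identify $r_\theta$. Multiplying the recursion by $y(t+1-k)$ for $k=1,\dots,\theta$ and using $r_{-m}=r_m$ gives the Yule--Walker relations
\begin{equation*}
r_k=a_i r_{k-1}+b_i r_{|\theta+1-k|},\qquad k\in\mathcal{I}_{1,\theta}.
\end{equation*}
These relations couple $r_k$ with $r_{\theta+1-k}$. Substituting the expression for $r_0$ into the $k=1$ equation produces the inhomogeneous term, which should reproduce $\xi_i$ (for example, $\alpha\phi_i/\eta_i$ when $\theta=1$). Next I would pair the equations for $k$ and $\theta+1-k$ and eliminate one variable from each pair. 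This folding should leave $\iota=\lfloor(\theta+1)/2\rfloor$ unknowns, with $r_\theta$ as the last one. The resulting matrix should take the anti-tridiagonal form of Table~\ref{tab:II}, with entries $\alpha\phi_i$, $(1-2\alpha)\phi_i^2-1$, and the boundary rows $\chi_{i,1},\chi_{i,2}$. The boundary rows depend on the parity of $\theta$, because a middle index $k=(\theta+1)/2$ is self-paired when $\theta$ is odd. Uniqueness of the solution of $\Gamma_i w=\xi_i$ then follows from uniqueness of the Gramian, once I show the elimination is invertible, i.e.\ it only divides by nonzero quantities such as $1-\alpha$ and $\phi_i$.

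I expect the main obstacle to be this last elimination step. The difficulty is arranging the variables so that the reduced system matches the table exactly: the top-right entry $-(1-\alpha)\psi_i\phi_i-1$, where $\psi_i$ absorbs the substitution for $r_0$, and the parity-dependent last row. I would first verify the cases $\theta=1,2,3,4$ by hand and then carry out the general folding by induction on the pair index.
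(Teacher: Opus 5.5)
Your proposal is correct and follows essentially the paper's argument. Your Yule--Walker relations and the variance identity $r_0=(-1-\zeta_i r_\theta)/\eta_i$ are exactly the Toeplitz-Gramian equations \eqref{eq:system_of_linear_equaions_1}--\eqref{eq:system_of_linear_equaions_2} (with $r_k=\bar w_{k+1}^{[i]}$), only derived stochastically rather than from the Lyapunov equation. Your folding step is what the row operations $R_1,\dots,R_5$ implement: express $r_k$ for $k<\iota$ through the partner equation divided by $(1-\alpha)\phi_i$, then use either the self-paired equation (odd $\theta$) or $r_{\iota+1}=\phi_i r_\iota$ (even $\theta$); carried out this way it reproduces Table~\ref{tab:II} exactly, and uniqueness of the solution of $\Gamma_i w=\xi_i$ follows from uniqueness of the Gramian, as you indicate.
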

\begin{proof}
	The controllability matrix of system \eqref{eq:augmented_system_3} is denoted by
	\begin{equation*}
		\begin{bmatrix}
			\mathcal{B} & \hat{\Theta}\mathcal{B} & \hat{\Theta}^2\mathcal{B} & \cdots & \hat{\Theta}^{\hat{h}-1}\mathcal{B}
		\end{bmatrix}.
	\end{equation*}
	It can be verified that it has full row rank,
	which means that system \eqref{eq:augmented_system_3} is controllable.
	Furthermore, as $\mathcal{G}$ is connected and $(\alpha,\beta)\in\Omega_\theta$,
	system \eqref{eq:augmented_system_3} is asymptotically stable (see the proof of Lemma \ref{lem:H2_property}).
	It follows from Lemma \ref{lem:unique_Gramian}  that there is a unique and positive definite solution of
	\begin{equation}\label{eq:Lyapunov_equation}
		\hat{\Theta}\mathcal{W}\hat{\Theta}^\top+\mathcal{B}\mathcal{B}^\top=\mathcal{W},
	\end{equation}
	which is called the discrete \textit{controllability Gramian} and can be expressed as
	\begin{equation}\label{eq:Gramian_sum}
		\mathcal{W}=\sum_{k=0}^{\infty}\hat{\Theta}^k\mathcal{B}\mathcal{B}^\top (\hat{\Theta}^\top)^k.
	\end{equation}
	Then, it infers from Lemma \ref{lem:H2_trace} that $\|T_{2}(z)\|_2^2$ is computed as
	\begin{equation}\label{eq:T2_CWC'}
		\|T_2(z)\|_2^2=\mathrm{Tr}(\mathcal{C}\mathcal{W}\mathcal{C}^\top).
	\end{equation}
	Decompose $\mathcal{W}\in\mathbb{R}^{\hat{h}\times\hat{h}}$ into the following block form:
	\begin{equation*}
		\mathcal{W}=
		\begin{bmatrix}
			W_{1,1} & \dots & W_{1,\theta+1}\\
			\vdots & \ddots & \vdots \\
			W_{\theta+1,1} & \dots & W_{\theta+1,\theta+1}
		\end{bmatrix},
	\end{equation*}
	where $W_{i,j}\in\mathbb{R}^{(n-1)\times(n-1)}$ and $W_{i,j}=W_{j,i}$ $\forall i,j\in\mathcal{I}_{1,\theta+1}$.
	It follows from \eqref{eq:Lyapunov_equation} that
	\begin{equation}\label{eq:W_Toeplitz}
		W_{1,j}=W_{2,j+1}=\cdots=W_{\theta+1-(j-1),\theta+1}~\forall j\in\mathcal{I}_{1,\theta+1},
	\end{equation}
	i.e., $\mathcal{W}$ is a block Toeplitz matrix \cite{JesusGutierrez-Gutierrez2012}.
	Thus, we have $\mathcal{C}\mathcal{W}\mathcal{C}^\top=W_{\theta+1,\theta+1}=W_{1,1}$,
	and the $H_2$ performance metric can be readily determined as long as $W_{1,1}$ is solved.

	As each block in $\hat{\Theta}$ is diagonal,
	it follows from \eqref{eq:Gramian_sum} that each block in $\mathcal{W}$ is also diagonal.
	Combining this fact with \eqref{eq:W_Toeplitz},
	we can further deduce from \eqref{eq:Lyapunov_equation} that
	\begin{equation}\label{eq:system_of_linear_equaions_1}
		(1\!-\!\alpha)\Phi W_{1,j}\!+\!\alpha\Phi W_{1,\theta+2-j}\!=\!W_{1,\theta+3-j}~\forall j\in\mathcal{I}_{2,\theta+1}
	\end{equation}
	and
	\begin{equation}\label{eq:system_of_linear_equaions_2}
			[(1-2\alpha+2\alpha^2)\Phi^2-I_{n-1}]W_{1,1}+2\alpha(1-\alpha)\Phi^2W_{1,\theta+1}+I_{n-1}=\mathbf{0},
	\end{equation}
	where $\Phi=\mathrm{diag}\{\phi_2,\dots,\phi_n\}$.
	Integrating \eqref{eq:system_of_linear_equaions_1} and \eqref{eq:system_of_linear_equaions_2} yields a matrix equation regarding $W_{1,1},\dots,W_{1,\theta+1}$,
	which is equivalent to the following small-scale system of equations
	\begin{equation}\label{eq:system_of_linear_equaions_3}
		\bar{\Gamma}_i
		\begin{bmatrix}
			\bar{w}_1^{[i]} \\ \vdots \\ \bar{w}_{\theta+1}^{[i]} 
		\end{bmatrix}
		=
		\bar{\xi}
	\end{equation}
	for all $i\in\mathcal{I}_{2,n}$,
	where $\bar{\Gamma}_i\in\mathbb{R}^{(\theta+1)\times(\theta+1)}$ is given by
	\begin{equation*}
		\bar{\Gamma}_i=
		\begin{bmatrix}
			0 & \mu_i & & & & & \nu_i & -1 \\
			& & \ddots & & & \adots & \adots & \\
			& & & \mu_i & \nu_i & -1 & & \\
			& & & \nu_i & \mu_i-1 & 0 & & \\
			& & \adots & -1 & 0 & \mu_i & & \\
			& \adots & \adots & & & & \ddots & \\
			\nu_i & -1 & & & & & & \mu_i \\
			\eta_i & 0 & & & & & & \zeta_i
		\end{bmatrix}
	\end{equation*}
	when $\theta$ is odd and
	\begin{equation*}
		\bar{\Gamma}_i=
		\begin{bmatrix}
			0 & \mu_i & & & & & & \nu_i & -1 \\
			& & \ddots & & & & \adots & \adots & \\
			& & & \mu_i & 0 & \nu_i & \adots & & \\
			& & & 0 & \phi_i & -1 & & & \\
			& & & \nu_i & -1 & \mu_i & & & \\
			& & \adots & \adots & & & \ddots & & \\
			& \adots & \adots & & & & & \ddots & \\
			\nu_i & -1 & & & & & & & \mu_i \\
			\eta_i & 0 & & & & & & & \zeta_i  
		\end{bmatrix}
	\end{equation*}
	when $\theta$ is even;
	$\nu_i=\alpha\phi_i$, $\mu_i=(1-\alpha)\phi_i$; $\bar{w}_1^{[i]},\dots,\bar{w}_{\theta+1}^{[i]}$ are the $(i-1)$-th diagonal elements of diagonal matrices $W_{1,1},\dots,W_{1,\theta+1}$ respectively;
	$\bar{\xi}=[\mathbf{0}_\theta^\top,-1]^\top$.
	As previously mentioned,
	we just need to solve $\bar{w}_{1}^{[2]},\dots,\bar{w}_{1}^{[n]}$.
	
	If $\phi_i=0$, solving \eqref{eq:system_of_linear_equaions_3} gives $\bar{w}_{1}^{[i]}=1$.
	If $\phi_i\neq0$,
    we apply a series of row operations and complete the proof by enumeration.
	
	\textbf{Case \uppercase\expandafter{\romannumeral1}}: $\theta$ is odd and $\theta>3$.

	Define the row-addition matrix as $E_{i,j}(k)=I_{\theta+1}+k\mathbf{e}_i\mathbf{e}_j^\top\in\mathbb{R}^{(\theta+1)\times(\theta+1)}$,
	where $\mathbf{e}_i$ and $\mathbf{e}_j$ are the standard basis vectors in $\mathbb{R}^{\theta+1}$.
	Let $R=\prod_{j=1}^{5}R_j$, $\hat{\Gamma}_i=R\bar{\Gamma}_i$, and $\hat{\xi}=R\bar{\xi}_i$,
	where
	\begin{align*}
		R_1&=\prod_{m=1}^{\iota-2}E_{m+1,m}(-\nu_i), \\
		R_2&=E_{\iota,\iota-1}\left(\tfrac{\alpha}{\alpha-1}\right), \\
		R_3&=\prod_{m=1}^{\iota-1}E_{\iota-m,\theta+1+m-\iota}(\mu_i), \\
		R_4&=\prod_{m=1}^{\iota-1}E_{\theta+m-\iota,\theta+1+m-\iota}(\nu_i), \\
		R_5&=E_{\theta,\theta+1}\left(-\tfrac{\nu_i}{\eta_i}\right).
	\end{align*}
	Since $(\alpha,\beta)\in\Omega_\theta$,
	it follows from Lemma \ref{lem:consensus_condition} that $\mathscr{P}(\theta,\gamma,\lambda_{i})$ is Schur stable for all $i\in\mathcal{I}_{2,n}$.
	As $0<\beta<\frac{2}{\lambda_n}$,
	there must have $-1<\phi_i<1$.
	Through simple algebra, we have $\eta_i\neq 0$ $\forall i\in\mathcal{I}_{2,n}$ and thus the row-addition matrix $R_5$ is tenable.
	
	After the above operations, $\hat{\Gamma}_i\in\mathbb{R}^{(\theta+1)\times(\theta+1)}$ can be written as the block
	\begin{equation*}
		\hat{\Gamma}_i
		=
		\begin{bmatrix}
			\mathbf{0} & \Gamma_i \\
			M_1 & M_2
		\end{bmatrix}
	\end{equation*}
	and there holds $\hat{\xi}=[\xi_i^\top,\frac{\nu_i^{\iota-1}}{\eta_i},\dots,\frac{\nu_i}{\eta_i},-1]\in\mathbb{R}^{\theta+1}$,
	where $M_1\in\mathbb{R}^{\iota\times\iota}$ is an anti-diagonal matrix,
	$M_2\in\mathbb{R}^{\iota\times\iota}$ is an upper triangular matrix,
	$\Gamma_i$ and $\xi_i$ are given in Table \ref{tab:II}.
	Apparently, $\bar{w}_{\theta+1}^{[i]}$ is subject to
	\begin{equation*}
		\begin{bmatrix}
			\mathbf{0} & \Gamma_i
		\end{bmatrix}
		\begin{bmatrix}
			\bar{w}_{1}^{[i]} \\
			\vdots \\
			\bar{w}_{\theta+1}^{[i]}
		\end{bmatrix}
		=
		\xi_i,
	\end{equation*}
	which is exactly the last component of the unique solution to \eqref{eq:linear_equations}.
	By inspecting \eqref{eq:system_of_linear_equaions_3},
	$\bar{w}_{1}^{[i]}$ can be further derived from $\eta_i\bar{w}_{1}^{[i]}+\zeta_i\bar{w}_{\theta+1}^{[i]}=-1$.
	
	In brief, we have
	\begin{equation}\label{eq:phi_0_neq0}
		\bar{w}_{1}^{[i]}=
		\begin{cases}
			1, & \mathrm{if}~\phi_i=0, \\
			\frac{-1-\zeta_i\bar{w}_{\theta+1}^{[i]}}{\eta_i}, & \mathrm{if}~\phi_i\neq0.
		\end{cases}
	\end{equation}
	Therefore, combining with the fact $\mathcal{C}\mathcal{W}\mathcal{C}^\top=W_{1,1}$,
	it follows from \eqref{eq:T2_CWC'} and Lemma \ref{lem:H2_property} that $\|T_{\alpha,\beta,\theta}(z)\|_2^2=\mathrm{Tr}(W_{1,1})=\sum_{i=2}^{n}\bar{w}_{1}^{[i]}$, that is, \eqref{eq:H2_analytic_expression} is obtained.
	
	\textbf{Case \uppercase\expandafter{\romannumeral2}}: $\theta=1$ or $\theta=3$.
		
	For $\theta=1$,
	by applying the row operations $R_5\bar{\Gamma}_i$ and $R_5\bar{\xi}_i$,
	the system of equations \eqref{eq:system_of_linear_equaions_3} becomes
	\begin{equation*}
		\begin{bmatrix}
			0 & -\psi_i-1 \\
			\eta_i & \zeta_i
		\end{bmatrix}
		\begin{bmatrix}
			\bar{w}_1^{[i]} \\ \bar{w}_{2}^{[i]} 
		\end{bmatrix}
		=
		\begin{bmatrix}
			\frac{\nu_i}{\eta_i} \\ -1
		\end{bmatrix}.
	\end{equation*}
    Since $-1<\phi_i<1$ gives $(2\alpha-1)\phi_i+1>0$ and $(\alpha-1)\phi_i-1<0$,
    it is straightforward that
	$-\psi_i-1=\frac{(\phi_i-1)[(\alpha-1)\phi_i-1][(2\alpha-1)\phi_i+1]}{\eta_i}\neq 0$.
	Therefore, $\bar{w}_{2}^{[i]}$ is solvable and \eqref{eq:phi_0_neq0} still holds.
	
	For $\theta=3$,
	the system of equations \eqref{eq:system_of_linear_equaions_3} becomes
	\begin{equation}\label{eq:theta=3_1}
	\begin{bmatrix}
		0 & \mu_i & \nu_i & -1 \\
		0 & \nu_i & \mu_i-1 & 0 \\
		\nu_i & -1 & 0 & \mu_i \\
		\eta_i & 0 & 0 & \zeta_i
	\end{bmatrix}
	\begin{bmatrix}
		\bar{w}_1^{[i]} \\ \bar{w}_{2}^{[i]} \\ \bar{w}_3^{[i]} \\ \bar{w}_{4}^{[i]}
	\end{bmatrix}
	=
	\begin{bmatrix}
		0 \\ 0 \\ 0 \\ -1
	\end{bmatrix}.
	\end{equation}
Let $\bar{R}=\prod_{j=2}^{5}R_j$.
Performing the row operations $\bar{R}\bar{\Gamma}_i$ and $\bar{R}\bar{\xi}_i$,
\eqref{eq:theta=3_1} reduces to
	\begin{equation*}
	\begin{bmatrix}
		0 & \!0 & \!\nu_i & \!-\mu_i\psi_i-1 \\
		0 & \!0 & \!\frac{1-2\alpha}{1-\alpha}\phi_i-1 & \!\frac{\alpha}{1-\alpha} \\
		0 & \!-1 & \!0 & \!-\psi_i \\
		\eta_i & \!0 & \!0 & \!\zeta_i
	\end{bmatrix}\!\!
	\begin{bmatrix}
		\bar{w}_1^{[i]} \\ \bar{w}_{2}^{[i]} \\ \bar{w}_3^{[i]} \\ \bar{w}_{4}^{[i]}
	\end{bmatrix}
	\!=\!
	\begin{bmatrix}
		\frac{\nu_i\mu_i}{\eta_i} \\ 0 \\ \frac{\nu_i}{\eta_i} \\ -1
	\end{bmatrix}.
	\end{equation*}
Therefore, $\bar{w}_{4}^{[i]}$ is obtained in terms of \eqref{eq:linear_equations},
and \eqref{eq:phi_0_neq0} still holds.

In summary, it follows from $\|T_{\alpha,\beta,\theta}(z)\|_2^2=\sum_{i=2}^{n}\bar{w}_{1}^{[i]}$ that \eqref{eq:H2_analytic_expression} is derived.   

\textbf{Case \uppercase\expandafter{\romannumeral3}}: $\theta$ is even.

For the case $\theta>4$, $\Gamma_i$ is anti-tridiagonal.
Let $\hat{R}_2=E_{\iota,\iota-1}(\frac{\nu_i}{\alpha-1})E_{\iota,\iota+1}(\phi_i)$ and $\hat{R}=R_1\hat{R}_2R_3R_4R_5$.
After the row operations $\hat{R}\bar{\Gamma}_i$ and $\hat{R}\bar{\xi}_i$,
the system of equations \eqref{eq:system_of_linear_equaions_3} can be recast as
\begin{equation*}
	\begin{bmatrix}
		\mathbf{0} & \Gamma_i \\
		M_3 & M_4
	\end{bmatrix}
	\begin{bmatrix}
		\bar{w}_{1}^{[i]} \\
		\vdots \\
		\bar{w}_{\theta+1}^{[i]}
	\end{bmatrix}
	=
	\begin{bmatrix}
		\xi_i \\
		\tilde{\xi}_i
	\end{bmatrix},
\end{equation*}
where
$\Gamma_i$ and $\xi_i$ are shown in Table \ref{tab:II},
$M_3\in\mathbb{R}^{(\iota+1)\times(\iota+1)}$ is anti-diagonal,
$M_4\in\mathbb{R}^{(\iota+1)\times\iota}$,
and $\tilde{\xi}_i=[\frac{\nu_i^{\iota}}{\eta_i},\dots,\frac{\nu_i}{\eta_i},-1]^\top$.
It is obvious that $\bar{w}_{\theta+1}^{[i]}$ can be solved via \eqref{eq:linear_equations},
and \eqref{eq:phi_0_neq0} holds.
Therefore, \eqref{eq:H2_analytic_expression} is obtained.

The proof for $\theta=2$ and $\theta=4$ can be provided in a similar vein.
For saving space, we omit it here.
\end{proof}

For smaller memory depth,
we can further derive more explicit analytic expressions of the $H_2$ performance metric.
Recall that $-\psi_i-1\neq0$ and one can similarly verify that $-\psi_i\phi_i-1\neq0$.
By solving \eqref{eq:linear_equations} and substituting the solution into \eqref{eq:H2_analytic_expression},
we can get
\begin{equation*}
	\|T_{\alpha,\beta,1}(z)\|_2^2
	=\sum_{i=2}^{n}\frac{1-\mu_i}{(1-\phi_i)(1+\mu_i)[(2\alpha-1)\phi_i+1]}
\end{equation*}
and
\begin{equation*}
	\|T_{\alpha,\beta,2}(z)\|_2^2
	=\sum_{i=2}^{n}\frac{1-\mu_i\phi_i}{(1-\phi_i^2)[1-(1-2\alpha)\mu_i\phi_i]}.
\end{equation*}
For $\theta=3,4$,
integrating the analysis of the cases $\chi_{i,1}=0$ and $\chi_{i,1}\neq0$,
we can obtain
\begin{equation*}
	\|T_{\alpha,\beta,3}(z)\|_2^2=\sum_{i=2}^{n}\frac{(1-2\alpha)\mu_i\phi_i^2-\mu_i^2-\mu_i+1}{(1-\phi_i)[(1-2\alpha)\phi_i-1][(1-2\alpha)\mu_i\phi_i^2+\mu_i^2-\mu_i-1]}
\end{equation*}
and
\begin{equation*}
	\|T_{\alpha,\beta,4}(z)\|_2^2=\sum_{i=2}^{n}\frac{(1-2\alpha)\mu_i\phi_i^3-(2-\alpha)\mu_i\phi_i+1}{(1-\phi_i^2)[(1-2\alpha)^2\mu_i\phi_i^3+(3\alpha-2)\mu_i\phi_i+1]}.
\end{equation*}

For $\theta>4$, deriving the explicit analytic expressions of the $H_2$ performance metric is just a repetitive algebraic endeavor that relies on elaborate enumeration.
Consequently, they are not provided in the present work.
However, imposing on the special configuration $\alpha=\frac{1}{2}$,
which balances the usage of real-time information and memory information,
allows for the unified derivation of all explicit analytic expressions along the memory depth.
The results are given as follows.

\begin{corollary}\label{cor:alpha=1/2}
	For a memory depth $\theta$ and $\beta\in(0,\frac{2}{\lambda_{n}})$,
	if $\alpha=\frac{1}{2}$ and $(\alpha,\beta)\in\Omega_\theta$,
	there is
	\begin{equation}\label{eq:H2_analytic_expression_1/2}
		\|T_{\frac{1}{2},\beta,\theta}(z)\|_2^2
		=\sum_{i=2}^{n}\varpi_{\theta,i}(\phi_i),
	\end{equation}
	where
	\begin{equation*}
		\varpi_{\theta,i}(\phi_i)=\frac{\phi_i-2\mathscr{H}_{\iota-1,i}}{(\phi_i^2-2)\mathscr{H}_{\iota-1,i}+\phi_i}\geq1
	\end{equation*}
	and
	\begin{equation*}
		\mathscr{H}_{\iota-1,i}=
		\begin{cases}
			\mathscr{F}_{\iota-1}(\phi_i), & \mathrm{if}~\theta~\mathrm{is~odd}, \\
			\mathscr{G}_{\iota-1}(\phi_i), & \mathrm{if}~\theta~\mathrm{is~even}.
		\end{cases}
	\end{equation*}
\end{corollary}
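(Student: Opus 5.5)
With $\alpha=\tfrac12$ I would specialise Theorem \ref{thm:H2_calculation} and evaluate $w_i$ in closed form. At $\alpha=\tfrac12$ the data collapse to $\mu_i=\nu_i=\tfrac{\phi_i}{2}$, $\zeta_i=\tfrac{\phi_i^2}{2}$ and $\eta_i=\tfrac{\phi_i^2}{2}-1$. In Table \ref{tab:II} we get $\chi_{i,2}=1$ (odd $\theta$) or $\phi_i$ (even $\theta$), and $\chi_{i,1}=-1$ in both parities. The anti-diagonal band entry becomes $(1-2\alpha)\phi_i^2-1=-1$.

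So for $\theta>4$ the matrix $\Gamma_i$ is anti-tridiagonal with constant off-band entries $\tfrac{\phi_i}{2}$ and band entries $-1$. Its only non-constant entries are the top row $(\tfrac{\phi_i}{2},\,-\tfrac12\psi_i\phi_i-1)$ and the bottom row $(-1,\chi_{i,2})$. The right-hand side is $\xi_i=[\tfrac{\phi_i^2}{4\eta_i},0,\dots,0]^\top$.

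To solve $\Gamma_i w=\xi_i$ for the last component, I would reverse the order of the unknowns so that the system becomes tridiagonal. I would then eliminate from the bottom (the $\chi$-row) upward.

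Each elimination step produces a ratio $t_k=w_{k}/w_{k+1}$ obeying a recursion of the form $t_{k+1}=\tfrac{2}{\phi_i}-\tfrac{1}{t_k}$ after dividing by $\tfrac{\phi_i}{2}$. The initial value is fixed by the bottom row: $-w_1+\chi_{i,2}w_2=0$ gives $t=\chi_{i,2}$, which is $1$ for odd $\theta$ and $\phi_i$ for even $\theta$. These are precisely the approximants in Lemma \ref{lem:cfrac_property}, with $s_n=1$ (giving $\mathscr F$) in the odd case and $v_n=-\tau$ (giving $\mathscr G$) in the even case. The conventions $\mathscr F_0=1$ and $\mathscr G_0=\tfrac1\tau$ cover the short cases.

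After $\iota-1$ steps, the top equation involves only $w_\iota$ through $\mathscr H_{\iota-1,i}$. This gives $w_\iota$ as a rational function of $\phi_i$ and $\mathscr H_{\iota-1,i}$. Substituting into $\tfrac{-1-\zeta_i w_\iota}{\eta_i}$, and using $\psi_i=\tfrac{\phi_i^3}{2\eta_i}\cdot\tfrac12-\tfrac{\phi_i}{2}$, should simplify to $\varpi_{\theta,i}=\tfrac{\phi_i-2\mathscr H_{\iota-1,i}}{(\phi_i^2-2)\mathscr H_{\iota-1,i}+\phi_i}$. This is the main algebraic obstacle: the factor $\eta_i=\tfrac{\phi_i^2-2}{2}$ must cancel cleanly. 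I would check the result against the explicit $\theta=1,2,3,4$ formulas given after Theorem \ref{thm:H2_calculation}, e.g.\ $\theta=1$ with $\mathscr F_0=1$ gives $\tfrac{\phi-2}{\phi^2+\phi-2}=\tfrac{1}{1+\phi}\cdot\tfrac{\phi-2}{\phi-1}\cdot\tfrac{1}{\,\cdot\,}$, to be matched. The cases $\theta\le4$ are handled by the same computation on the small $\Gamma_i$ in the table.

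The indices with $\phi_i=0$ contribute $1$ by Theorem \ref{thm:H2_calculation}, and the formula also gives $\varpi=\tfrac{-2\mathscr H}{-2\mathscr H}=1$ there, provided $\mathscr H$ is finite. This lets us write the sum over all of $\mathcal I_{2,n}$ as in \eqref{eq:H2_analytic_expression_1/2}.

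For $\varpi_{\theta,i}\ge1$, I would treat $\phi:=\phi_i\in(-1,1)$, which holds since $\beta<\tfrac2{\lambda_n}$. If $\phi>0$, Lemma \ref{lem:cfrac_property} gives $\mathscr H>1$; if $\phi<0$, it gives $\mathscr H<-1$. Then
\[
\varpi-1=\frac{(1-\phi^2)\mathscr H}{(\phi^2-2)\mathscr H+\phi}.
\]
For $\phi>0$ the denominator is negative, since $(\phi^2-2)\mathscr H<-\mathscr H<-1<-\phi$. The numerator is positive, so this sign combination must be rechecked. Either the sign bookkeeping above is off, or the correct manipulation is $\varpi-1=\tfrac{\phi(1-\phi)-\mathscr H(\phi^2-\cdots)}{\cdots}$. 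I would recompute carefully and then argue sign-by-sign using $|\mathscr H|>1\ge|\phi|$, treating the odd and even cases uniformly via the bounds in \eqref{eq:Gn}–\eqref{eq:Fn}.
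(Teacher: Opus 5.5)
Your route is the paper's: specialize Theorem~\ref{thm:H2_calculation} to $\alpha=\tfrac12$, eliminate $\Gamma_i w=\xi_i$ from the $\chi$-row upward so that the recursion $x\mapsto\tfrac{2}{\phi_i}-\tfrac{1}{x}$ of Lemma~\ref{lem:cfrac_property} appears, and substitute into $(-1-\zeta_iw_i)/\eta_i$. The paper's $\check R=H\bar E_{\iota-1,\iota}(1)D$ does exactly this. Your $\alpha=\tfrac12$ data are right, but the elimination fails as written at the two points that matter.

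\textbf{The recursion and the top row.} Write $\phi=\phi_i$. The band rows read $\tfrac{\phi}{2}w_j-w_{j+1}+\tfrac{\phi}{2}w_{j+2}=0$. So the ratio obeying $x_{j+1}=\tfrac{2}{\phi}-\tfrac{1}{x_j}$ is $x_j=w_{j+1}/w_j$, not $w_j/w_{j+1}$. The $\chi$-row fixes $x_1=1/\chi_{i,2}$, i.e.\ $\mathscr F_0=1$ (odd $\theta$) or $\mathscr G_0=1/\phi$ (even $\theta$). With your orientation and start value $\phi$, the even case yields $1/\phi$ at the next step rather than $\mathscr G_1=\tfrac{2}{\phi}-\phi$.

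Next (for $\iota\ge2$; the scalar cases $\theta=1,2$ check directly), the $\iota-1$ lower rows only give $w_\iota/w_{\iota-1}=\mathscr H_{\iota-2,i}$, not $\mathscr H_{\iota-1,i}$. The top row is therefore $w_\iota(\tfrac{\phi}{2\mathscr H_{\iota-2,i}}+c)=\tfrac{\phi^2}{4\eta_i}$ with $c=-\tfrac12\psi_i\phi-1=\tfrac{4-3\phi^2}{2(\phi^2-2)}$. Only after multiplying by $\tfrac{2}{\phi}$ and substituting $1/\mathscr H_{\iota-2,i}=\tfrac{2}{\phi}-\mathscr H_{\iota-1,i}$ does it become $w_\iota(-\mathscr H_{\iota-1,i}-\tfrac{\phi}{\phi^2-2})=\tfrac{\phi}{\phi^2-2}$, which is the top row of the paper's $\check\Gamma_i$.

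This identity is the real content of the corollary, and you left it at ``should simplify''. With it, $w_\iota=\tfrac{\phi}{(2-\phi^2)\mathscr H_{\iota-1,i}-\phi}$, and the factor $2-\phi^2$ does cancel in $(-1-\zeta_iw_\iota)/\eta_i$. All pivots are nonzero because $|\mathscr H_{k,i}|\ge1>|\tfrac{\phi}{\phi^2-2}|$. Your $\theta=1$ check should use $\phi^2+\phi-2=(\phi+2)(\phi-1)$, giving $\tfrac{2-\phi}{(1-\phi)(2+\phi)}$, which matches the explicit $\theta=1$ formula at $\alpha=\tfrac12$.

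\textbf{The inequality $\varpi_{\theta,i}\ge1$.} This is not established: your expression for $\varpi-1$ is wrong, which is why the signs clashed. The correct identity is
\[
\varpi_{\theta,i}-1=\frac{-\phi^2\mathscr H_{\iota-1,i}}{(\phi^2-2)\mathscr H_{\iota-1,i}+\phi}=\frac{\phi^2}{2-\phi^2-\phi/\mathscr H_{\iota-1,i}}.
\]
Since $|\mathscr H_{\iota-1,i}|\ge1$ (Lemma~\ref{lem:cfrac_property} for $\iota\ge2$; $\mathscr F_0=1$ and $|\mathscr G_0|=1/|\phi|>1$ for $\iota=1$), the denominator exceeds $2-\phi^2-|\phi|=(1-|\phi|)(2+|\phi|)>0$. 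This uniform bound is preferable to your sign-matching plan, which breaks for $\theta=1$: there $\mathscr H_{0,i}=1$ whatever the sign of $\phi_i$. The paper's proof does not spell out $\varpi\ge1$ at all, so you have to supply this step yourself.

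\textbf{The case $\phi_i=0$.} Your caveat here is backwards. For $\theta\ge2$, $\mathscr H_{\iota-1,i}$ is undefined at $\phi_i=0$ (it blows up like $1/\phi$ or $2/\phi$). So $\varpi_{\theta,i}=1$ must be read as the limit $\phi_i\to0$. The second form above makes this immediate, and it matches the contribution $1$ of such indices in Theorem~\ref{thm:H2_calculation}.
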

\begin{proof}
	For $\theta=1,2,3,4$,
	the results can be readily obtained by solving \eqref{eq:linear_equations} with $\alpha=\frac{1}{2}$ and then substituting the solution into \eqref{eq:H2_analytic_expression}.
	However, for $\theta>4$, a more sophisticated treatment is required.
	
	Define the row-scaling matrix as $D_i(k)=I_{\iota}+(k-1)\bar{\mathbf{e}}_i\bar{\mathbf{e}}_i^\top\in\mathbb{R}^{\iota\times\iota}$,
	where $\bar{\mathbf{e}}_i$ is the standard basis vector in $\mathbb{R}^{\iota}$.
	Let $\bar{E}_{i,j}(k)=I_{\iota}+k\bar{\mathbf{e}}_i\bar{\mathbf{e}}_j^\top$ be a new row-addition matrix in $\mathbb{R}^{\iota\times\iota}$,
	$\check{R}=H\bar{E}_{\iota-1,\iota}(1)D$,
	$\check{\Gamma}_i=\check{R}\Gamma_i$,
	and
	$\check{\xi}_i=\check{R}\xi_i$,
	where
	\begin{align*}
		H&=\prod_{m=1}^{\iota-2}E_{m,m+1}\left(\tfrac{1}{\mathscr{H}_{\iota-1-m,i}}\right), \\
		D&=\prod_{m=1}^{\iota-1}D_m\left(\tfrac{2}{\phi_i}\right).
	\end{align*}
	Then, the solution of \eqref{eq:linear_equations} with $\alpha=\frac{1}{2}$ is equivalent to that of $\check{\Gamma}_iw=\check{\xi}_i$,
	where
	\begin{equation*}
		\check{\Gamma}_i=
		\begin{bmatrix}
			& & & & -\mathscr{H}_{\iota-1,i}-\frac{\phi_i}{\phi_i^2-2} \\
			& & & -\mathscr{H}_{\iota-2,i} & 1 \\
			& & \adots & \adots &  \\
			& -\mathscr{H}_{1,i} & 1 &  &  \\
			-1 & 1 & &  & 
		\end{bmatrix}
	\end{equation*}
	is a lower anti-triangular matrix and $\check{\xi}_i=[\frac{\phi_i}{\phi_i^2-2},\mathbf{0}_{\iota-1}^\top]^\top$.
	
	Recall that $\beta\in(0,\frac{2}{\lambda_{n}})$ yields $-1<\phi_i<1$ $\forall i\in\mathcal{I}_{2,n}$.
	One can observe that $0<\frac{\phi_i}{\phi_i^2-2}<1$ when $\phi_i\in(-1,0)$ and $-1<\frac{\phi_i}{\phi_i^2-2}<0$ when $\phi_i\in(0,1)$.
	As evidence from Lemma \ref{lem:cfrac_property},
	there is $-\mathscr{H}_{\iota-1,i}-\frac{\phi_i}{\phi_i^2-2}\neq0$ $\forall \phi_i\in(-1,0)\cup(0,1)$.
	Then, the last component of the solution to $\check{\Gamma}_iw=\check{\xi}_i$ is $\frac{\phi_i}{(2-\phi_i^2)\mathscr{H}_{\iota-1,i}-\phi_i}$,
	thereby yielding \eqref{eq:H2_analytic_expression_1/2} according to Theorem \ref{thm:H2_calculation}.
	Furthermore, since $\varpi_{\theta,i}(\phi_i)=1$ when $\phi_i=0$,
	the presented results encompass this case as well.
\end{proof}

For simplicity, we denote $\|T_{1,\beta}(z)\|_2^2$ as the $H_2$ performance metric in the memoryless case (i.e., $\alpha=1$),
which can be given by
\begin{equation*}
	\|T_{1,\beta}(z)\|_2^2=\sum_{i=2}^{n}\frac{1}{1-\phi_i^2}.
\end{equation*}
With the explicit analytic expressions of the $H_2$ performance metric established,
we are now positioned to examine the effects of memory information.
It will undertake a comparative analysis of two aspects: the memory versus memoryless cases, and short-term versus long-term memory cases.
The results are provided as follows. 

\begin{corollary}\label{cor:opt_depth}
	Consider the memory factor $\alpha=\frac{1}{2}$ and the sequence of memory depths $1,\dots,\hat{\theta}-1,\hat{\theta}$.
	If $\beta\in(0,\frac{2}{\lambda_{n}})$ and $(\alpha,\beta)\in\Omega_{\hat{\theta}}$,
	then the following statements are deduced:
	\begin{itemize}
		\item[1)] For all $\theta\in\mathcal{I}_{1,\hat{\theta}}$,
		there holds 
		\begin{equation*}
			\|T_{\frac{1}{2},\beta,\theta}(z)\|_2^2\leq\|T_{1,\beta}(z)\|_2^2;
		\end{equation*}
		\item[2)] For all even $\theta\in\mathcal{I}_{1,\hat{\theta}-2}$,
		there is
		\begin{equation*}
			\|T_{\frac{1}{2},\beta,\theta}(z)\|_2^2
			\geq\|T_{\frac{1}{2},\beta,\theta+2}(z)\|_2^2;
		\end{equation*}
		\item[3)] For all odd $\theta\in\mathcal{I}_{1,\hat{\theta}-2}$,
		we have
		\begin{equation*}
			\|T_{\frac{1}{2},\beta,\theta}(z)\|_2^2
			\geq\|T_{\frac{1}{2},\beta,\theta+1}(z)\|_2^2
			\geq\|T_{\frac{1}{2},\beta,\theta+2}(z)\|_2^2
		\end{equation*}
		when $0<\beta\leq\frac{1}{\lambda_{n}}$,
		and we obtain
		\begin{equation*}
			\|T_{\frac{1}{2},\beta,\theta}(z)\|_2^2
			\leq\|T_{\frac{1}{2},\beta,\theta+2}(z)\|_2^2
			\leq\|T_{\frac{1}{2},\beta,\theta+1}(z)\|_2^2
		\end{equation*}
		when $\frac{1}{\lambda_{2}}\leq\beta<\frac{2}{\lambda_{n}}$;
		\item[4)] We can conclude that
		\begin{equation*}
			\min_{\theta\in\mathcal{I}_{1,\hat{\theta}}}
			\|T_{\frac{1}{2},\beta,\theta}(z)\|_2^2
			=\begin{cases}
				\|T_{\frac{1}{2},\beta,\hat{\theta}}(z)\|_2^2, & \mathrm{if}~0<\beta\leq\frac{1}{\lambda_{n}}, \\
				\|T_{\frac{1}{2},\beta,1}(z)\|_2^2, & \mathrm{if}~\frac{1}{\lambda_{2}}\leq\beta<\frac{2}{\lambda_{n}}.
			\end{cases}
		\end{equation*}
	\end{itemize}
	Moreover, the equalities in statements 1)--3) hold if and only if $\mathcal{G}$ is a complete graph with identical edge weights and $\beta=\frac{1}{\lambda_{2}}$.
\end{corollary}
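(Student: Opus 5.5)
The plan is to reduce everything to a termwise comparison of the scalar functions $\varpi_{\theta,i}$ from Corollary \ref{cor:alpha=1/2}, and to read off the orderings from the monotonicity of $\varpi$ in its continued-fraction argument together with Lemma \ref{lem:cfrac_property}. Theorem \ref{thm:consensus_property} and $(\frac12,\beta)\in\Omega_{\hat\theta}$ give $(\frac12,\beta)\in\Omega_\theta$ for every $\theta\le\hat\theta$. Hence \eqref{eq:H2_analytic_expression_1/2} is valid for all depths under comparison. Since $\beta<\frac{2}{\lambda_n}$, every $\phi_i\in(-1,1)$. A term with $\phi_i=0$ contributes exactly $1$ to every $H_2$ metric. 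It also contributes exactly $1$ to the memoryless metric. So only indices with $\phi_i\neq0$ need to be compared.

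\textbf{Monotonicity in $\mathscr{H}$.} Regard $\varpi$ as a function of $H$ with $\phi$ fixed:
\begin{equation*}
f_\phi(H)=\frac{\phi-2H}{(\phi^2-2)H+\phi}.
\end{equation*}
A direct differentiation gives
\begin{equation*}
f_\phi'(H)=\frac{-\phi^3}{[(\phi^2-2)H+\phi]^2}.
\end{equation*}
So $f_\phi$ is decreasing in $H$ when $\phi>0$ and increasing when $\phi<0$, on each side of its pole $H^\ast=\frac{\phi}{2-\phi^2}\in(-1,1)$. By Lemma \ref{lem:cfrac_property}, together with the conventions $\mathscr{F}_0=1$ and $\mathscr{G}_0=\frac1\tau$, all relevant values satisfy
\begin{itemize}
\item[a)] $\mathscr{H}\ge1$ when $\phi>0$;
\item[b)] $\mathscr{H}\le-1$ when $\phi<0$, except $\mathscr{F}_0=1$.
\end{itemize}
Consequently, every comparison among depths $\theta\ge2$ stays on one side of the pole. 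For the base index $n=0$, I check by hand that $\mathscr{F}_1=\frac2\tau-1$ and $\mathscr{G}_0=\frac1\tau$ extend the chains \eqref{eq:Gn} and \eqref{eq:Fn} down to $n=0$.

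\textbf{Statements 2) and 3).} Statement 2) concerns even $\theta$, where moving from $\theta$ to $\theta+2$ replaces $\mathscr{G}_{\iota-1}$ by $\mathscr{G}_{\iota}$.
\begin{itemize}
\item[a)] If $\phi>0$, then $\mathscr{G}$ increases with the index and $f_\phi$ is decreasing, so the term decreases.
\item[b)] If $\phi<0$, then $\mathscr{G}$ decreases with the index and $f_\phi$ is increasing, so the term again decreases.
\end{itemize}
Thus 2) holds for every $\beta$.

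For statement 3), $\theta$ is odd, and $\theta$ and $\theta+1$ share the same $\iota$. The three depths $\theta,\theta+1,\theta+2$ therefore correspond to $\mathscr{F}_{\iota-1}$, $\mathscr{G}_{\iota-1}$ and $\mathscr{F}_{\iota}$.
\begin{itemize}
\item[a)] If $0<\beta\le\frac1{\lambda_n}$, all $\phi_i\ge0$. Then $\mathscr{F}_\iota>\mathscr{G}_{\iota-1}>\mathscr{F}_{\iota-1}$ and decreasing $f_\phi$ give the first chain.
\item[b)] If $\beta\ge\frac1{\lambda_2}$, all $\phi_i\le0$. Then $\mathscr{F}_{\iota-1}<\mathscr{F}_{\iota}<\mathscr{G}_{\iota-1}$ and increasing $f_\phi$ give the second chain.
\end{itemize}
The case $\iota=1$ with $\phi<0$, where $\mathscr{F}_0=1$ sits across the pole, has to be verified directly using the $\theta=1,2$ closed forms.

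\textbf{Statement 1).} I bound each term by its memoryless counterpart.
\begin{itemize}
\item[a)] For $\phi>0$, $H\ge1$ and $f_\phi$ is decreasing, so
\begin{equation*}
f_\phi(H)\le f_\phi(1)=\frac{2-\phi}{(2+\phi)(1-\phi)}.
\end{equation*}
The inequality $f_\phi(1)\le\frac{1}{1-\phi^2}$ is equivalent to $(2-\phi)(1+\phi)\le2+\phi$, that is, to $-\phi^2\le0$.
\item[b)] For $\phi<0$ with $H\le-1$, $f_\phi$ is increasing, so
\begin{equation*}
f_\phi(H)\le f_\phi(-1)=\frac{2+\phi}{(2-\phi)(1+\phi)},
\end{equation*}
and $f_\phi(-1)\le\frac1{1-\phi^2}$ reduces to the same inequality $-\phi^2\le0$.
\item[c)] The remaining case $H=\mathscr{F}_0=1$ with $\phi<0$ is exactly the computation in a).
\end{itemize}

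\textbf{Statement 4) and equality.} Statement 4) follows by chaining 2) and 3) along all depths: the metric decreases toward $\hat\theta$ in the first regime and is smallest at $\theta=1$ in the second. For equality, every inequality above is strict whenever $\phi_i\neq0$. Hence equality throughout forces $\phi_i=0$ for all $i\ge2$, i.e. $\lambda_2=\dots=\lambda_n=\frac1\beta$. This holds exactly for a complete graph with identical weights and $\beta=\frac1{\lambda_2}$. I expect the main obstacle to be the bookkeeping around the pole $H^\ast$ and the low-index cases ($\iota=1$, where $\mathscr{F}_0=1$ lies on the opposite side for $\phi<0$). I will handle these first by explicitly checking the $\theta=1,2,3$ closed forms.
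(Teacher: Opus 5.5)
Your proposal is correct and follows essentially the paper's route. Your derivative $f_\phi'(H)=-\phi^3/[(\phi^2-2)H+\phi]^2$ is the local form of the paper's difference identities $f_\phi(H_1)-f_\phi(H_2)=\phi^3(H_2-H_1)/(D_1D_2)$, with $D_k=(\phi^2-2)H_k+\phi$. It uses the same orderings from Lemma~\ref{lem:cfrac_property}, and the same separate check at $\theta=1$, where $\mathscr{F}_0=1$ lies across the pole when $\phi<0$.

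One bookkeeping point, which you share with the paper: for even $\hat\theta$ (including $\hat\theta=2$), statement 4) also needs the comparison between depths $\hat\theta-1$ and $\hat\theta$. This is not literally contained in 3), since 3) is restricted to $\theta\le\hat\theta-2$, but it follows at once from your termwise comparison of $\mathscr{F}_{\iota-1}$ with $\mathscr{G}_{\iota-1}$.
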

\begin{proof}
	We structure the proof into four parts, addressing each statement in order.
	Recall that $\beta\in(0,\frac{2}{\lambda_{n}})$ yields $-1<\phi_i<1$ $\forall i\in\mathcal{I}_{2,n}$.
	
	\textbf{Proof of 1).} For $\beta\neq\frac{1}{\lambda_{i}}$ and $\theta\in\mathcal{I}_{1,\bar{\theta}}$,
	it follows from \eqref{eq:Gn} and \eqref{eq:Fn} that
	\begin{equation*}
		\varpi_{\theta,i}(\phi_i)(1-\phi_i^2)=1+\frac{\phi_i^2(\mathscr{H}_{\iota-1,i}-\phi_i)}{(\phi_i^2-2)\mathscr{H}_{\iota-1,i}+\phi_i}<1,
	\end{equation*}
	namely, $\varpi_{\theta,i}(\phi_i)<\frac{1}{1-\phi_i^2}$ $\forall i\in\mathcal{I}_{2,n}$.
	Once $\beta=\frac{1}{\lambda_{i}}$, we have $\varpi_{\theta,i}(\phi_i)=\frac{1}{1-\phi_i^2}$.
	Given the summation form of the $H_2$ performance metric,
	we know the statement 1) holds.
	
	\textbf{Proof of 2).} For $\beta\neq\frac{1}{\lambda_{i}}$ and even $\theta$,
	it follows from $\mathscr{F}_0(\tau)=1$, $\mathscr{G}_0(\tau)=\frac{1}{\tau}$, and \eqref{eq:Gn} that
	\begin{equation*}
		\varpi_{\theta,i}(\phi_i)-\varpi_{\theta+2,i}(\phi_i)=\frac{\phi_i^3[\mathscr{G}_{\iota}(\phi_i)-\mathscr{G}_{\iota-1}(\phi_i)]}{[(\phi_i^2-2)\mathscr{G}_{\iota-1}(\phi_i)+\phi_i][(\phi_i^2-2)\mathscr{G}_{\iota}(\phi_i)+\phi_i]}>0
	\end{equation*}
	holds for all $i\in\mathcal{I}_{2,n}$.
	As long as $\beta=\frac{1}{\lambda_{i}}$,
	it is observed that $\varpi_{\theta,i}(\phi_i)=\varpi_{\theta+2,i}(\phi_i)$ holds.
	Given the summation form of the $H_2$ performance metric,
	the statement 2) is correct.
	
	\textbf{Proof of 3).} Consider odd $\theta>1$, and the following equalities:
	\begin{align*}
		&\varpi_{\theta,i}(\phi_i)-\varpi_{\theta+1,i}(\phi_i)=\frac{\phi_i^3[\mathscr{G}_{\iota-1}(\phi_i)-\mathscr{F}_{\iota-1}(\phi_i)]}{[(\phi_i^2-2)\mathscr{F}_{\iota-1}(\phi_i)+\phi_i][(\phi_i^2-2)\mathscr{G}_{\iota-1}(\phi_i)+\phi_i]}, \\
		&\varpi_{\theta,i}(\phi_i)-\varpi_{\theta+2,i}(\phi_i)=\frac{\phi_i^3[\mathscr{F}_{\iota}(\phi_i)-\mathscr{F}_{\iota-1}(\phi_i)]}{[(\phi_i^2-2)\mathscr{F}_{\iota-1}(\phi_i)+\phi_i][(\phi_i^2-2)\mathscr{F}_{\iota}(\phi_i)+\phi_i]}, \\
		&\varpi_{\theta+1,i}(\phi_i)-\varpi_{\theta+2,i}(\phi_i)=\frac{\phi_i^3[\mathscr{F}_{\iota}(\phi_i)-\mathscr{G}_{\iota-1}(\phi_i)]}{[(\phi_i^2-2)\mathscr{G}_{\iota-1}(\phi_i)+\phi_i][(\phi_i^2-2)\mathscr{F}_{\iota}(\phi_i)+\phi_i]}.
	\end{align*}
	For $\beta\neq\frac{1}{\lambda_{i}}$,
	it is inferred from \eqref{eq:Fn} that
	\begin{equation*}
		\begin{cases}
			\varpi_{\theta,i}(\phi_i)>\varpi_{\theta+1,i}(\phi_i)>\varpi_{\theta+2,i}(\phi_i), & \mathrm{if}~0<\beta\leq\frac{1}{\lambda_{n}}, \\
			\varpi_{\theta,i}(\phi_i)<\varpi_{\theta+2,i}(\phi_i)<\varpi_{\theta+1,i}(\phi_i), & \mathrm{if}~\frac{1}{\lambda_{2}}\leq\beta<\frac{2}{\lambda_{n}}.
		\end{cases}
	\end{equation*}
	If $\beta=\frac{1}{\lambda_{i}}$,
	there holds $\varpi_{\theta,i}(\phi_i)=\varpi_{\theta+1,i}(\phi_i)=\varpi_{\theta+2,i}(\phi_i)$.
	Through simple algebra,
	one can verify that these results are still valid for $\theta=1$.
	Given the summation form of the $H_2$ performance metric,
	the statement 3) is proved.
	
	Based on the above analysis, one can observe that the equalities in statements 1)--3) hold if and only if
	$\beta=\frac{1}{\lambda_{2}}=\dots=\frac{1}{\lambda_{n}}$,
	which implies that $\mathcal{G}$ is a complete graph with identical edge weights.
	
	\textbf{Proof of 4).} This statement is an immediate consequence from statements 2) and 3).
\end{proof}

\begin{figure*}[ht!]
	\centering
	\begin{subfigure}{0.3\textwidth}
		\centering
		\includegraphics[width=\linewidth]{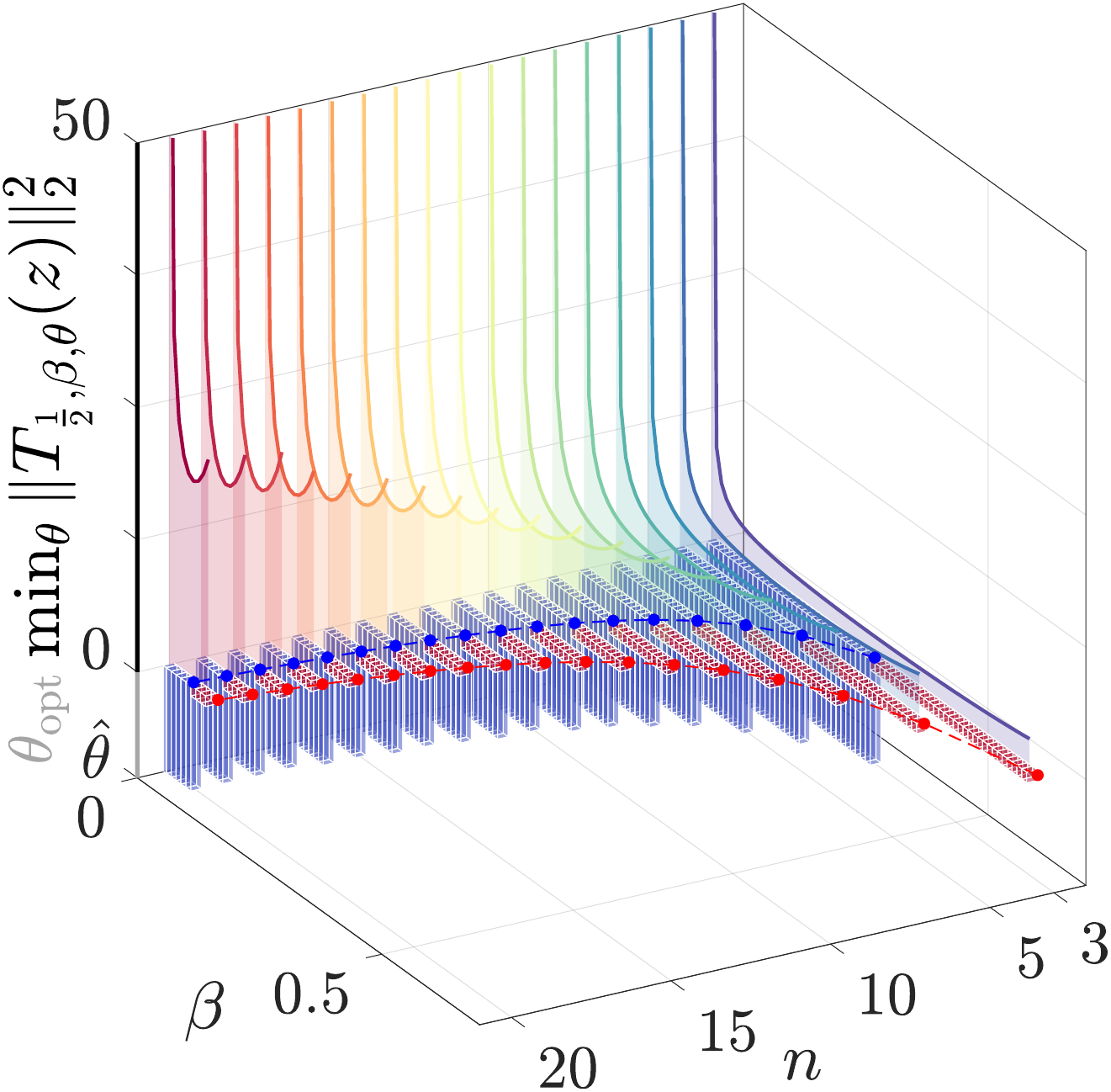}
		\caption{$K_{20}$}
		\label{Fig1-1}
	\end{subfigure}
	\begin{subfigure}{0.3\textwidth}
		\centering
		\includegraphics[width=\linewidth]{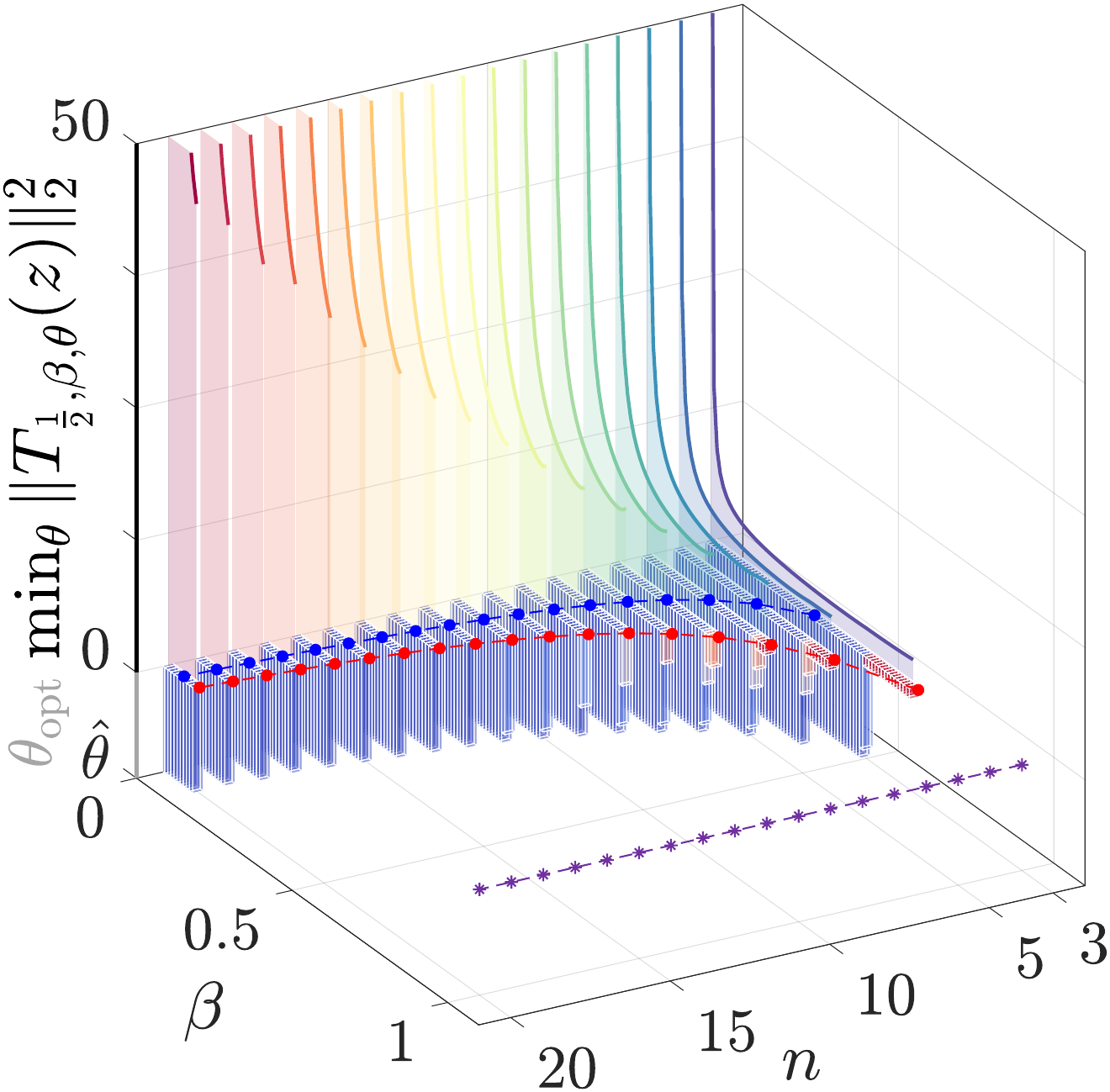}
		\caption{$S_{20}$}
		\label{Fig1-2}
	\end{subfigure}
	\begin{subfigure}{0.3\textwidth}
		\centering
		\includegraphics[width=\linewidth]{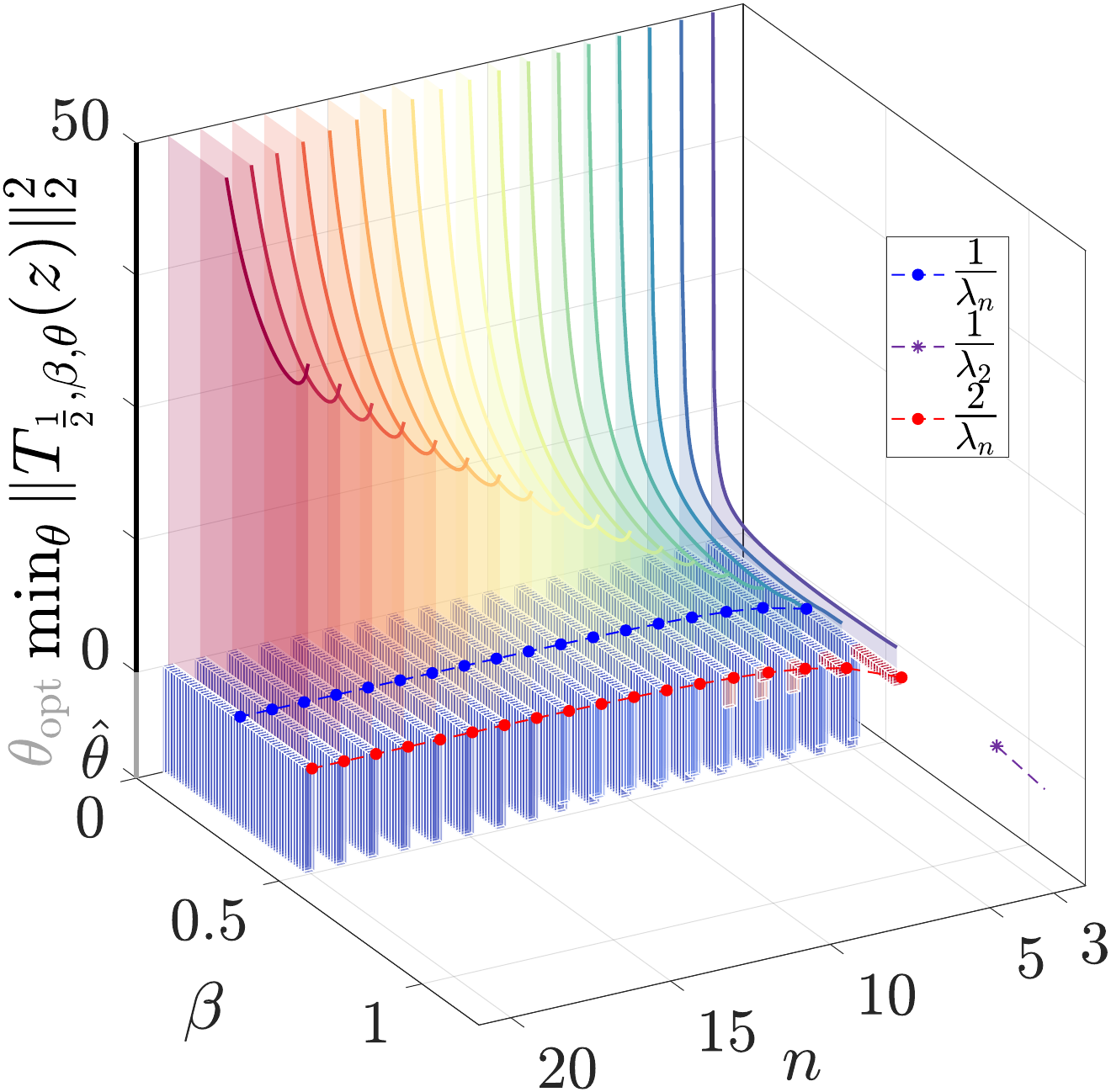}
		\caption{$P_{20}$}
		\label{Fig1-3}
	\end{subfigure}\\
	\begin{subfigure}{0.3\textwidth}
		\centering
		\includegraphics[width=\linewidth]{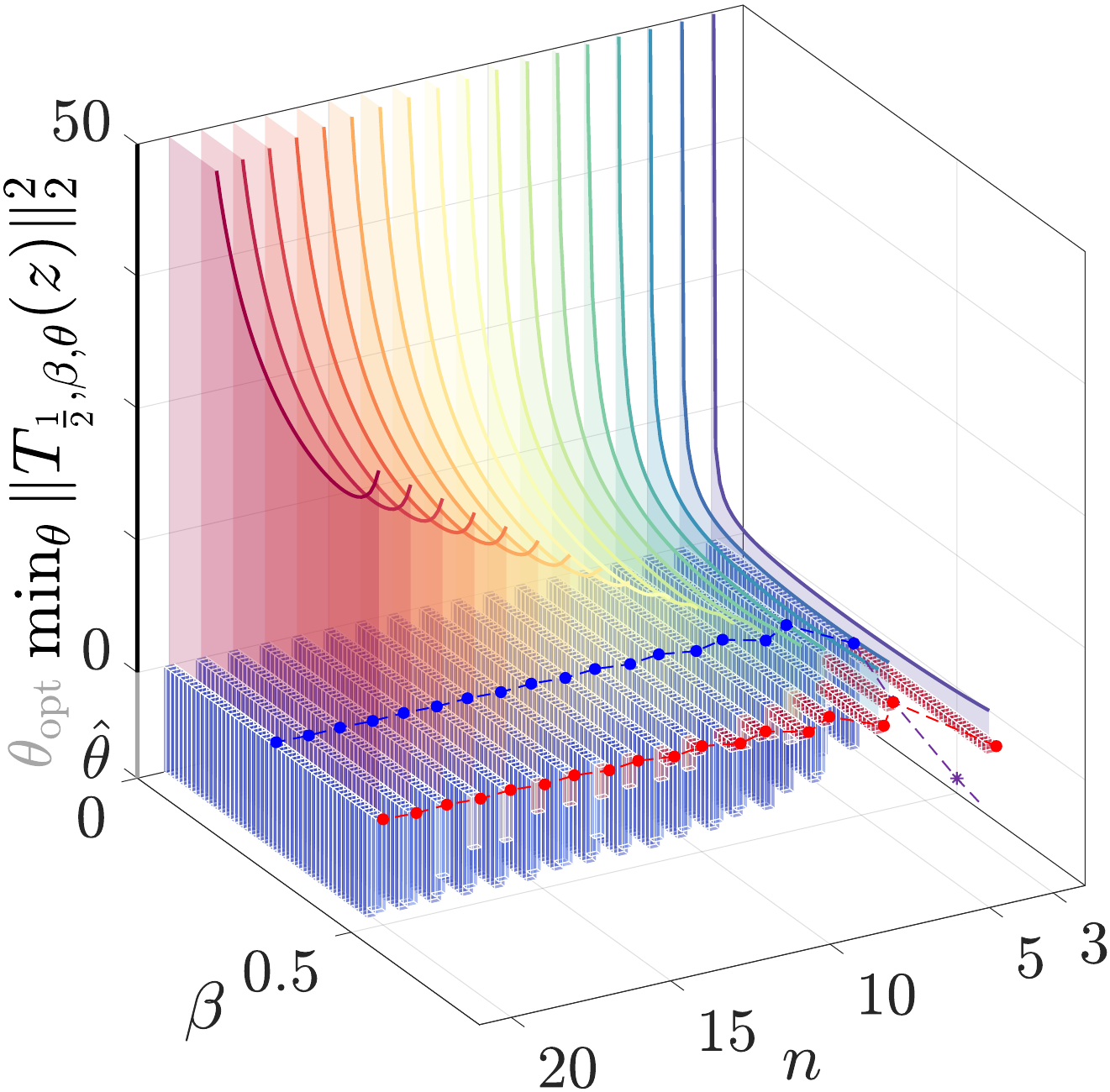}
		\caption{$C^1_{20}$}
		\label{Fig1-4}
	\end{subfigure}
	\begin{subfigure}{0.3\textwidth}
		\centering
		\includegraphics[width=\linewidth]{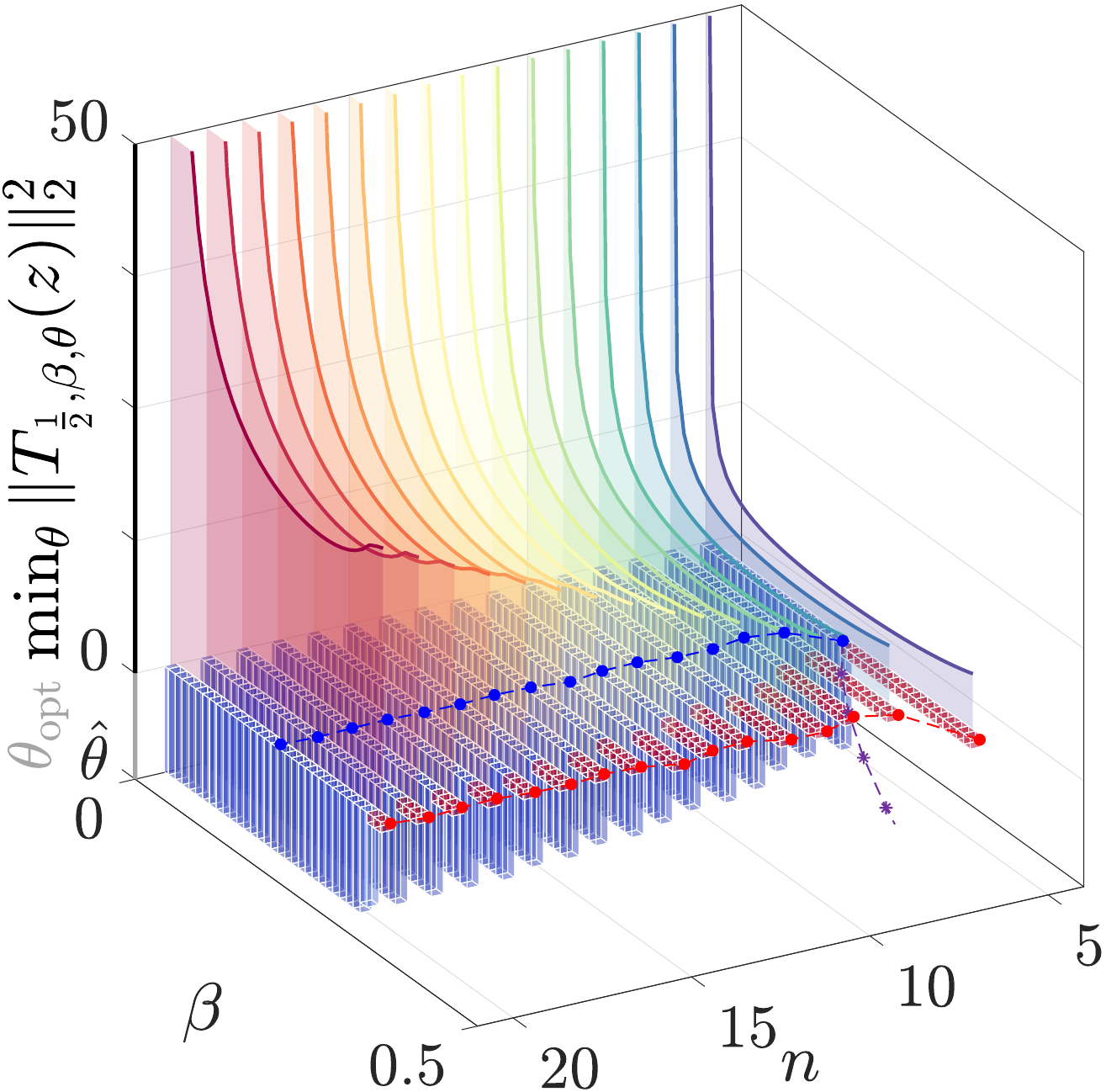}
		\caption{$C^2_{20}$}
		\label{Fig1-5}
	\end{subfigure}
	\begin{subfigure}{0.3\textwidth}
		\centering
		\includegraphics[width=\linewidth]{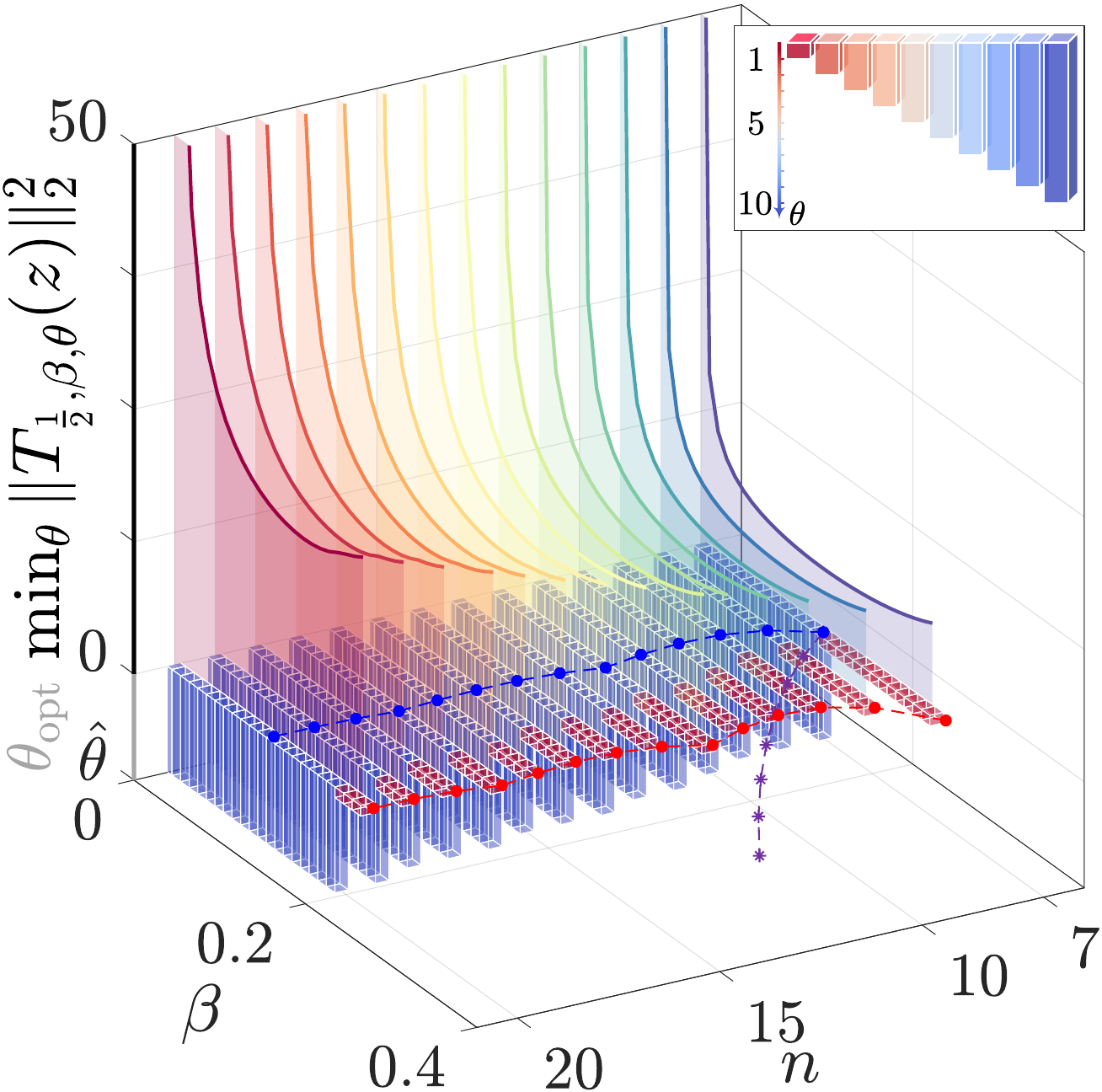}
		\caption{$C^3_{20}$}
		\label{Fig1-6}
	\end{subfigure}
	\caption{Trajectories of the optimal $H_2$ performance metrics and optimal memory depths over typical graph families.}
	\label{Fig1}
\end{figure*}

The result reveals that incorporating real-time and memory information in a balanced manner enables memory at any accessible depth to enhance the robustness of consensus networks,
although the effect is not uniform across memory depths. 
Specifically, long-term memory yields better robustness for $0<\beta\leq\frac{1}{\lambda_{n}}$,
whereas for $\frac{1}{\lambda_{2}}\leq\beta<\frac{2}{\lambda_{n}}$,
invoking only the most recent memory is optimal.
The underlying mechanism is that each component of the $H_2$ performance metric,
associated with each nonzero Laplacian eigenvalue,
is monotonically decreasing with memory depth in the former case,
but shows completely opposite behaviors for odd and even memory depths in the latter case.

In addition, statements 2) and 3) jointly reveal an intriguing phenomenon.
For large coupling gains, robustness enhances as the even memory depth increase,
whereas it deteriorates as the odd memory depth grows.
Nevertheless, the robustness at odd memory depths consistently outperforms that at the even memory depths.

Having established the optimal memory depth for the case of balanced memory invocation,
it is natural to ask what would happen if pure memory information is employed (i.e., $\alpha=0$).
Firstly, it is easy to verify that $\mathscr{P}(\theta,\gamma,\lambda_{i})=\gamma^{\theta+1}-(1-\beta\lambda_{i})$ is Schur stable for all $i\in\mathcal{I}_{2,n}$ and for all $\theta$ when $\beta\in(0,\frac{2}{\lambda_{n}})$.
Thus, Lemma \ref{lem:consensus_condition} and Theorem \ref{thm:consensus_property} are still valid.
Then, directly solving \eqref{eq:system_of_linear_equaions_3} gives the following analytic expression of $H_2$ performance metric regardless of memory depth
\begin{equation*}
		\|T_{0,\beta,\theta}(z)\|_2^2=\|T_{1,\beta}(z)\|_2^2=\sum_{i=2}^{n}\frac{1}{1-\phi_i^2},
\end{equation*}
which exhibits identical robustness to the memoryless case.
Combined with the statement 1) given in Corollary \ref{cor:opt_depth},
this result indicates that robustness can be enhanced only through a mixed utilization of memory and real-time information.

\section{Discussion}\label{Sec:5}
\subsection{Optimal Memory Depth over Typical Graph Families}
Consider the following families of undirected graphs with $n\geq3$ vertices and uniform unit edge weights:
the complete graph family $\mathbb{K}_n=\{K_3,\dots,K_n\}$, the star graph family $\mathbb{S}_n=\{S_3,\dots,S_n\}$,
the chain graph family $\mathbb{P}_n=\{P_3,\dots,P_n\}$,
and the $2d$-regular ring lattice family $\mathbb{C}^d_n=\{C_{2d+1}^d,\dots,C_{n}^d\}$,
where every $C^d_i$ (with $2d+1\leq i\leq n$) represents a network with a highly regular structure in which vertices are placed on a circle,
each connected to its $2d$ closest neighbors.
Further information on these graphs can be found in \cite{PietVanMieghem2023}.

Let $\alpha=\frac{1}{2}$, $\hat{\theta}=10$, and $n=20$.
Fig. \ref{Fig1} illustrates the trends of the optimal memory depth and the corresponding optimal $H_2$ performance over the above graph families.
Each bar’s height represents the optimal memory depth $\theta_{\mathrm{opt}}$ and is colored accordingly.
The dashed lines show the trajectories of $\frac{1}{\lambda_{2}}$, $\frac{1}{\lambda_{n}}$, and $\frac{2}{\lambda_{n}}$ with respect to the network size.
In addition, the solid curves depict the trajectories of the $H_2$ performance metric under the corresponding optimal memory depths.
We observe that, for any given network size,
there exists an optimal coupling gain that maximizes the $H_2$ performance metric.
However, the $H_2$ performance deteriorates as the network size grows,
indicating that larger-scale networks are more susceptible to noise propagation.

As we can see, for all graphs, the optimal memory depth is always equal to $1$ when $0<\beta\leq\frac{1}{\lambda_{n}}$.
The case $\frac{1}{\lambda_{2}}\leq\beta<\frac{2}{\lambda_{n}}$ occurs only in complete graphs and small-scale ring lattices,
where the optimal memory depth remains consistently $\hat{\theta}$.
These observations are consistent with the results stated in Corollary \ref{cor:opt_depth}.
For the case $\frac{1}{\lambda_{n}}\leq\beta<\frac{1}{\lambda_{2}}$,
we only need to analyze non-complete graphs.
As illustrated in Fig. \ref{Fig1},
the optimal memory depth transitions from long-term memory to short-term memory as the coupling gain increases.
This provides an important insight for optimizing the robustness of consensus networks.
Specifically, when agents rely more heavily on information from their neighbors,
employing short-term memory becomes more effective for suppressing noise.

\subsection{Effects of Unbalanced Memory Invocation}
We further examine the effect of imbalance between real-time and memory information (i.e., $\alpha\neq\frac{1}{2}$) on the three representative graphs $K_{15}$, $C^2_{15}$, and $P_{15}$,
which respectively represent the densest, a moderately connected, and the sparsest topologies.
The curves in Fig. \ref{Fig2} depict the trajectories of the $H_2$ performance metric with respect to the memory factor under different coupling gains,
with colors distinguishing the performance at various memory depths.

\begin{figure}[!ht]
	\centering
	\includegraphics[width=0.4\textwidth]{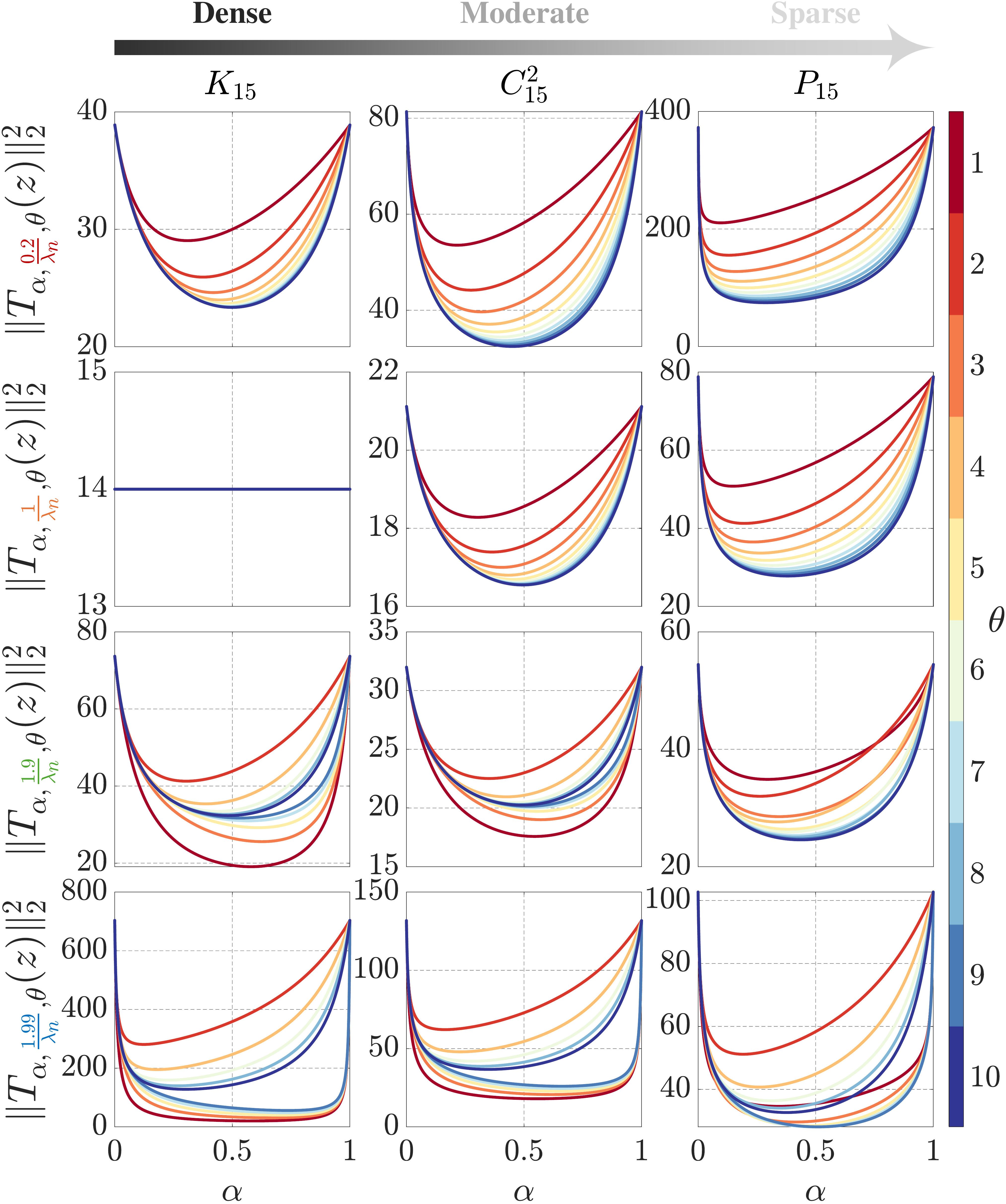}
	\caption{Trends of the $H_2$ performance metric on $K_{15}$, $C^2_{15}$, and $P_{15}$.}
	\label{Fig2}
\end{figure}

We can observe that the optimal memory depth still shifts from long-term memory to short-term memory as the coupling gain increases.
On the one hand, for small coupling gains,
the $H_2$ performance metric decreases monotonically with memory depth, regardless of the memory factor.
On the other hand, when the coupling gain becomes large, a striking pattern emerges in the denser graphs $K_{15}$ and $C^2_{15}$:
odd memory depths consistently outperform even depths in $H_2$ performance,
and the $H_2$ performance exhibits opposite monotonic trends for odd versus even memory depths.
Moreover, the hybrid strategy, combining memory and real-time information, always outperforms pure memory or memoryless strategies.
These observations confirm that statements 1)--3) in Corollary \ref{cor:opt_depth} remain fully valid for $\alpha\neq\frac{1}{2}$.
In contrast, this phenomena is absent in the sparse graph $P_{15}$.

These findings provide valuable insights for the robust design of consensus networks.
In particular, the $H_2$ performance of dense networks follows consistent and regular trends with respect to both the memory factor and memory depth,
suggesting better resilience to unbalanced memory invocation.
In contrast, sparse networks are sensitive to these parameters and require careful parameter optimization to maintain performance.

\begin{figure}[!ht]
	\centering
	\begin{subfigure}{0.35\textwidth}
		\centering
		\includegraphics[width=\linewidth]{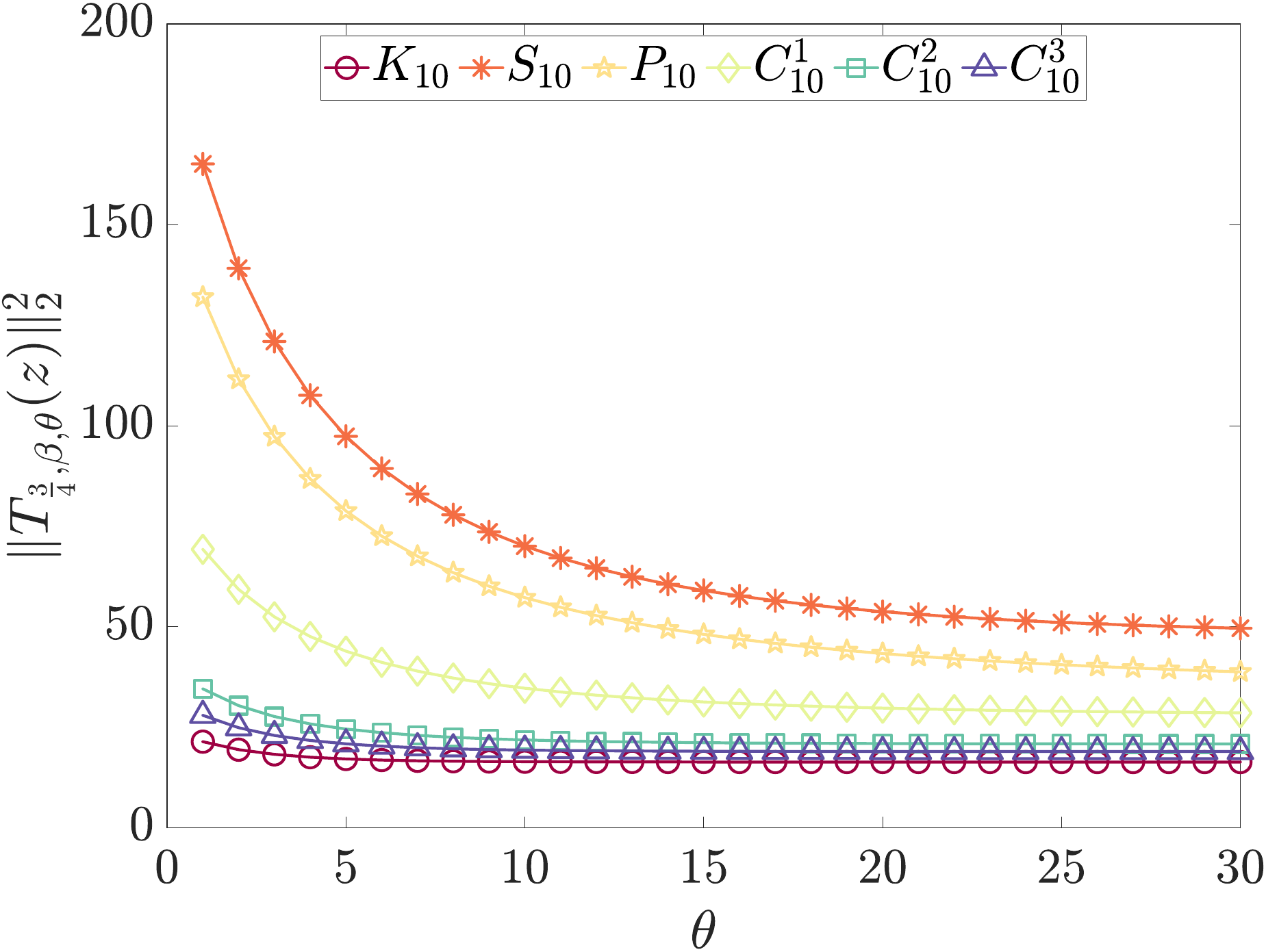}
		\caption{$\beta=\frac{0.2}{\lambda_{n}}$}
		\label{Fig3-1}
	\end{subfigure}
	\begin{subfigure}{0.35\textwidth}
		\centering
		\includegraphics[width=\linewidth]{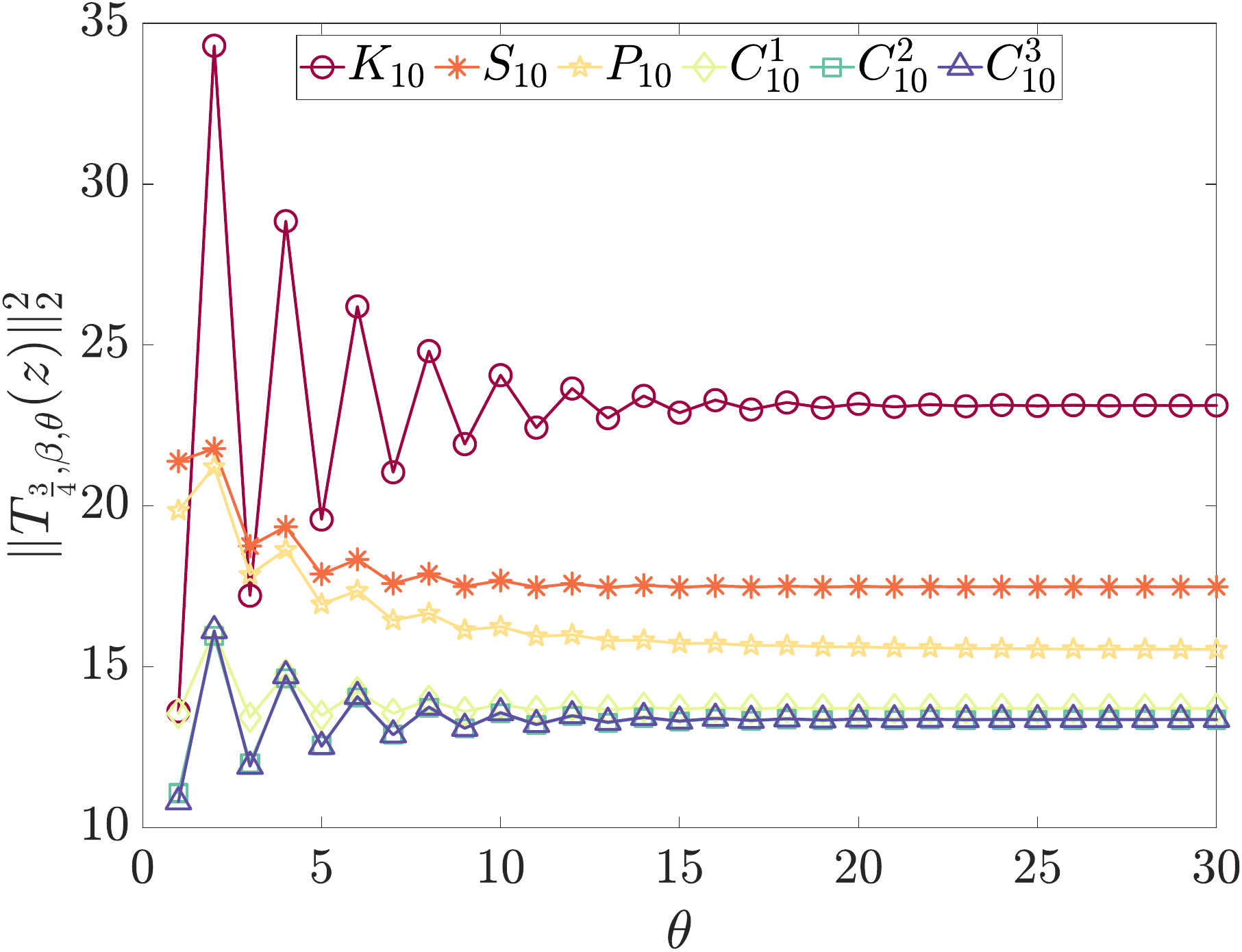}
		\caption{$\beta=\frac{1.9}{\lambda_{n}}$}
		\label{Fig3-2}
	\end{subfigure}
	\caption{Trajectories of the $H_2$ performance metric with respect to memory depth.}
	\label{Fig3}
\end{figure}

\subsection{Marginal Utility of Increasing Memory Depth}
Here, agents are considered to be omniscient,
in the sense that they can use memories of arbitrary depth.
It follows from \eqref{eq:Gn} and \eqref{eq:Fn} that
\begin{equation*}
	\lim_{\theta\to\infty}\|T_{\frac{1}{2},\beta,\theta}(z)\|_2^2
	=\sum_{i=2}^{n}\frac{2}{2-\phi_i^2},
\end{equation*}
which exhibits pronounced \textit{marginal utility} with increasing memory depth; that is, the metric converges to a constant as memory depth grows.

For $\alpha=\frac{3}{4}$,
Fig. \ref{Fig3} depicts the trajectories of the $H_2$ performance metric with respect to memory depth under two representative parameter configurations.
Specifically, as shown in Fig. \ref{Fig3-1}, access to more remote memories consistently enhances $H_2$ performance across all network topologies.
In contrast, Fig. \ref{Fig3-2} illustrates that this phenomenon is restricted to the star and chain graphs.
Furthermore, as the coupling gain varies,
the relative performance ranking of the communication topologies can reverse,
with the original optimal topology potentially becoming the worst-performing one.
Nevertheless, the \textit{marginal utility} was still observed under both configurations,
as well as in many other numerical tests not presented here.
These observations indicate that it is enough to select optimal memory depth over a finite temporal extent,
beyond which more remote memory yields negligible performance gains.
This substantially facilitates the practicality and feasibility of the robustness enhancement scheme based on the memory mechanism.

\subsection{Optimal Parameter Pair}
Another important issue is to find the optimal parameter pair $(\alpha,\beta)$ for a given communication network and memory depth.
The two-dimensional surface of $\|T_{\alpha,\beta,1}(z)\|_2^2$ with respect to $\alpha$ and $\beta$ on $C^2_{20}$ is illustrated in Fig. \ref{Fig4}.
To highlight one-dimensional optima,
we overlay two curves.
The orange solid curve represents the optimal $H_2$ performance obtained by minimizing over $\beta$,
whereas the black solid curve depicts the optimal $H_2$ performance obtained by minimizing over $\alpha$.
The global minimizer $(\alpha_{\mathrm{opt}},\beta_{\mathrm{opt}})$, marked with a red circle, lies exactly at the intersection of the two curves.
We can get similar visualizations for other graphs and memory depths.
These tests collectively indicate that the optimal parameter pair is unique.

\begin{figure}[!ht]
	\centering
	\includegraphics[width=0.4\textwidth]{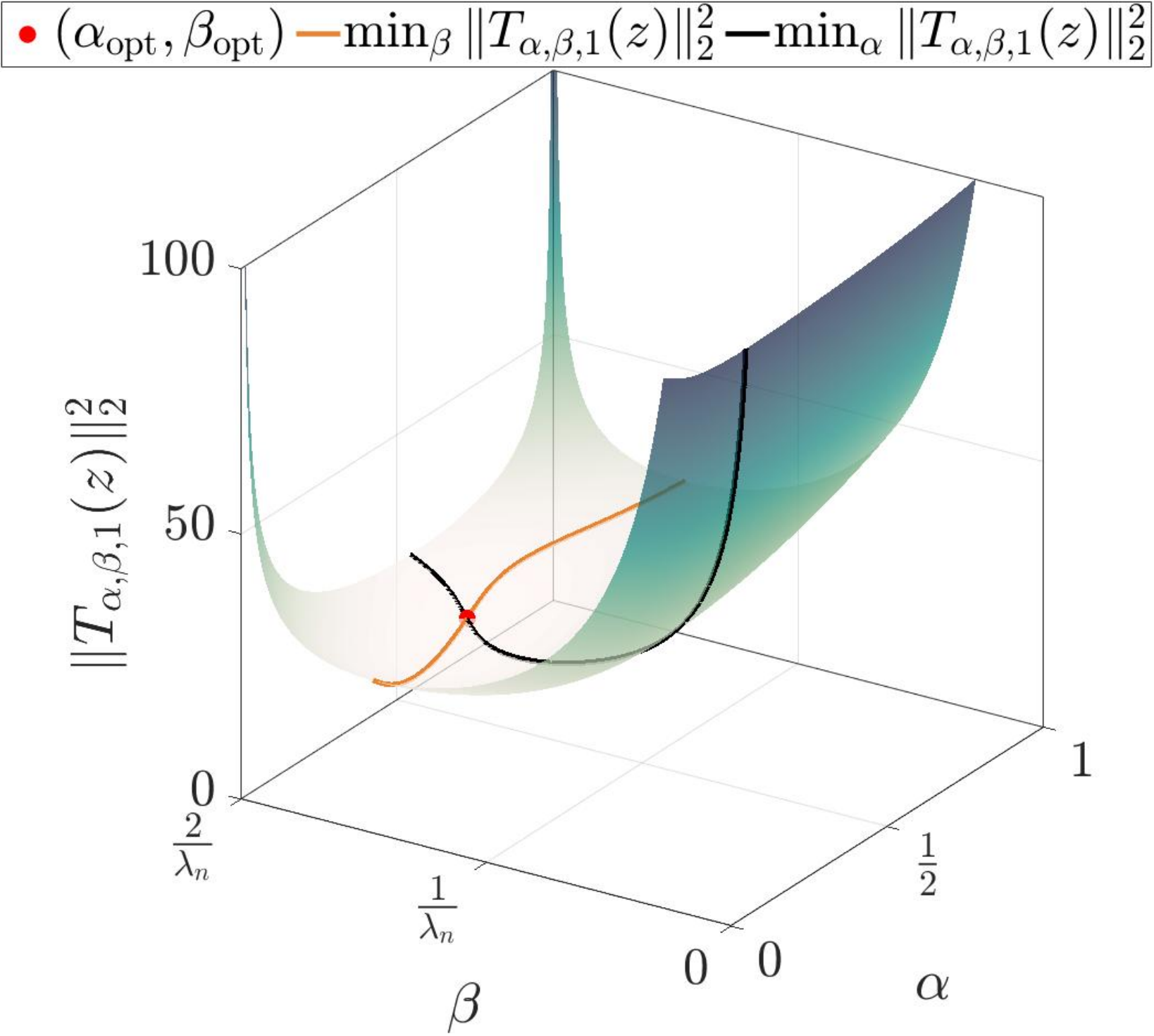}
	\caption{Surface of $\|T_{\alpha,\beta,1}(z)\|_2^2$ on $C^2_{20}$.}
	\label{Fig4}
\end{figure}

{
\subsection{Tests on Scale-free Networks}
Aforementioned discussions focus on networks with symmetric geometric structures,
but they often fail to capture the complexity and heterogeneity of the real-world connections.
We now tests our results on scale-free networks \cite{Albert-LaszloBarabasi2009},
which are widespread in the World Wide Web, social networks, power grids, etc.
These networks can be characterized by a power-law degree distribution,
where a few hubs possess an exceptionally large number of connections, while most nodes have only a few links.
We generate a scale-free network $\mathcal{G}_{\mathrm{sf}}^{50}$ with $50$ vertices via the Barab{\'a}si-Albert model \cite{Albert-LaszloBarabasi1999},
 which is shown in Fig. \ref{Fig5-1}.
The degree of each node is visually encoded through both color and size; specifically, larger and lighter nodes denote higher degrees.
For the scale-free network, Fig. \ref{Fig5-2} illustrates the trends of optimal memory depths with respect to $(\alpha,\beta)$ and the corresponding surfaces of $H_2$ performance metrics.
The global minimizers $(\alpha_{\mathrm{opt}},\beta_{\mathrm{opt}},\theta_{\mathrm{opt}})$, indicated by a red circle, occurs at the intersection of the gray (minimized over $\beta$) and black (minimized over $\alpha$) curves.

\begin{figure}[!ht]
	\centering
	\begin{subfigure}{0.35\textwidth}
			\centering
			\includegraphics[width=\linewidth]{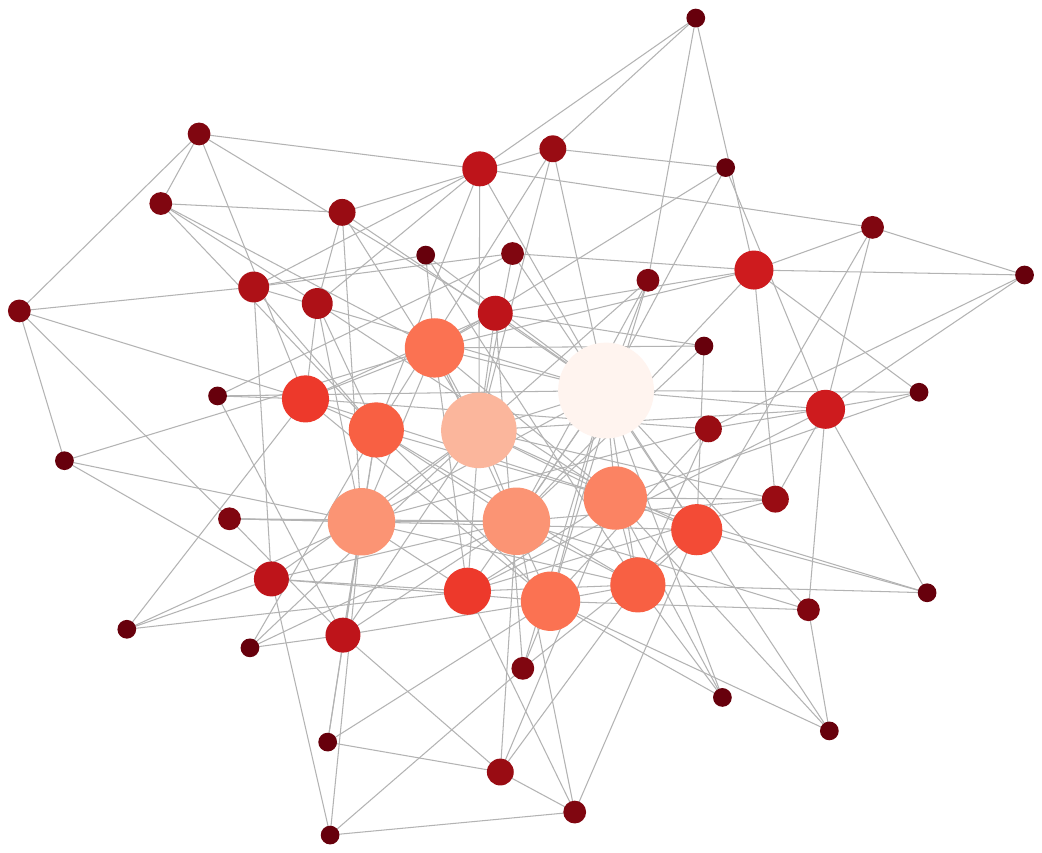}
			\caption{}
			\label{Fig5-1}
	\end{subfigure}
	\begin{subfigure}{0.35\textwidth}
			\centering
			\includegraphics[width=\linewidth]{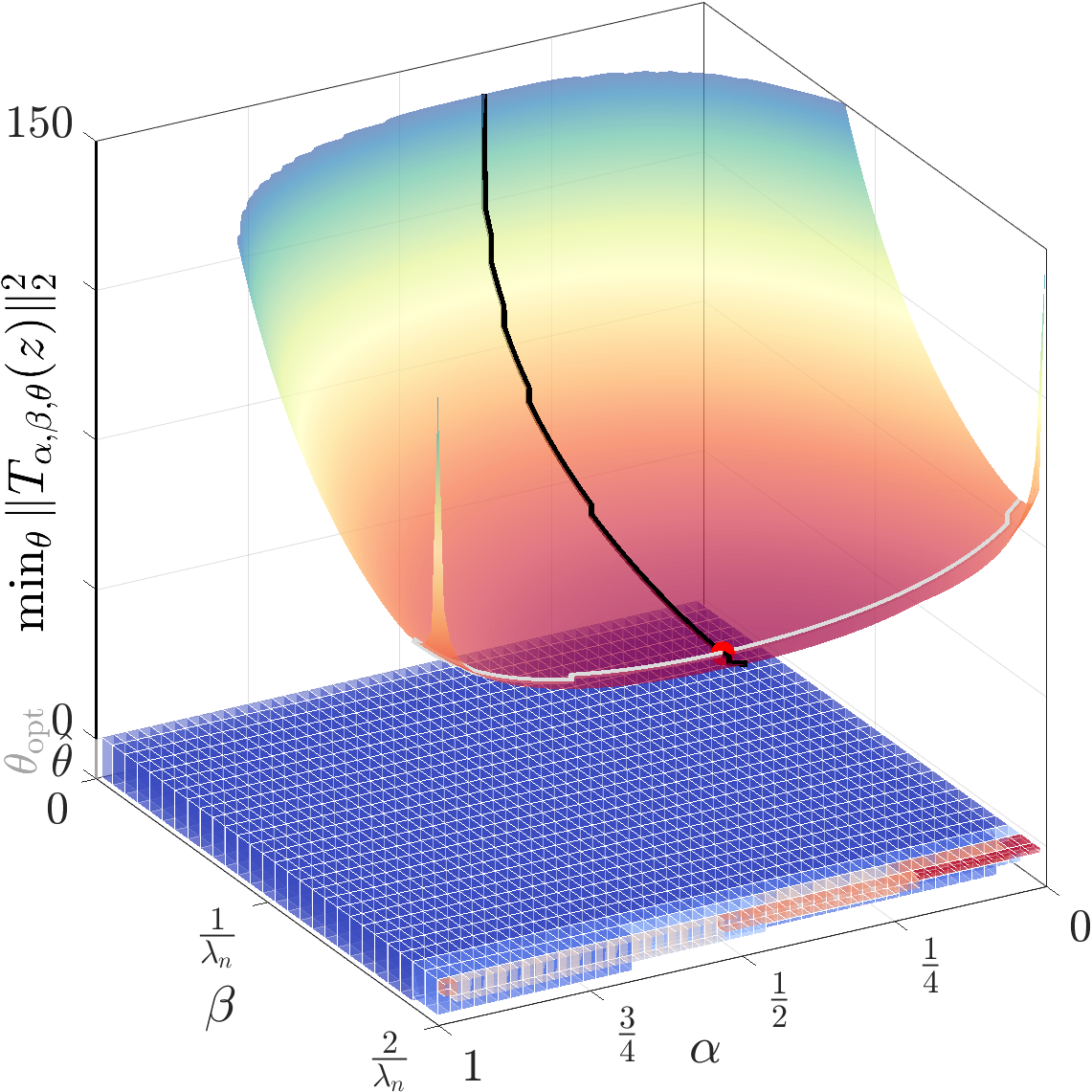}
			\caption{}
			\label{Fig5-2}
	\end{subfigure}
	\caption{Experimental results on the scale-free network $\mathcal{G}_{\mathrm{sf}}^{50}$.}
	\label{Fig5}
\end{figure}

Supported by the experimental results and extensive tests not shown here,
we find that the transition of optimal memory depth from long-term to short-term memory, with increasing coupling gain, is exclusive to denser scale-free networks.
In contrast, for sparser networks, long-term memory is optimal.
These tests also confirm the uniqueness of the optimal parameter pair $(\alpha_{\mathrm{opt}},\beta_{\mathrm{opt}},\theta_{\mathrm{opt}})$ under finite memory depths.
}
\section{Conclusion}\label{Sec:6}
This paper studied the role of memory information, particularly memory depth, on the $H_2$ performance of multi-agent consensus networks.
The memory depth refers to the temporal extent to which the agent invoke its past iteration state.
We proposed a memory-based consensus protocol in the form of linear extrapolation,
which employs past single-step states from agents and their neighbors, with tunable memory depth.
A necessary and sufficient condition for achieving consensus was given,
and we discovered an useful property of the proposed protocol: consensus attained under remote memory ensure consensus under recent memory.
Furthermore, we developed a universal approach to establish the quantitative relation between the $H_2$ performance metric, memory factor, memory depth, coupling gain, and Laplacian spectrum.
This leads to a more explicit analytical expression for the $H_2$ performance metric,
characterized by continued fractions in terms of the memory factor and coupling gain.
For the case of $\alpha=\frac{1}{2}$,
we proved that memory information at any accessible depth enhances $H_2$ performance,
and determined the optimal memory depth within specific parameter regions.
Finally, we discussed the direct correlation between the optimal memory depth with topological structure, group size, and network density;
trends of $H_2$ performance metric;
transitions from long-term to short-term memory;
the superiority of hybrid information strategies;
and the marginal utility of increasing memory depth.
These findings deserve to be further studied in future works.
\appendices
\section{Proof of Lemma \ref{lem:cfrac_property}}\label{app:pro_cfrac}
For $\mathscr{G}_n(\tau)$ and $\tau\in(0,1)$, combining $\mathscr{G}_1(\tau)>1$ with the following recursive form
\begin{equation}\label{eq:Appendix_A_2}
	\mathscr{G}_{n+1}(\tau)=\frac{2}{\tau}-\frac{1}{\mathscr{G}_n(\tau)},
\end{equation}
it can be verified that $\mathscr{G}_{n}(\tau)>1$, $\forall n\in\mathbb{N}^+$.
Similarly, we can demonstrate that $\mathscr{G}_{n}(\tau)<-1$ $\forall n\in\mathbb{N}^+$ when $\tau\in(-1,0)$.
Then, it follows from the facts
\begin{equation*}
	\mathscr{G}_{n+1}(\tau)-\mathscr{G}_{n}(\tau)
	=\frac{\mathscr{G}_{2}(\tau)-\mathscr{G}_{1}(\tau)}{\prod_{m=1}^{n-1}[\mathscr{G}_{m+1}(\tau)\mathscr{G}_{m}(\tau)]}
\end{equation*}
and $\mathscr{G}_{m+1}(\tau)\mathscr{G}_{m}(\tau)>0$, $\forall m\in\mathbb{N}^+$
that the sign of $\mathscr{G}_{n+1}(\tau)-\mathscr{G}_{n}(\tau)$ is ultimately determined by the sign of $\mathscr{G}_{2}(\tau)-\mathscr{G}_{1}(\tau)=\tau-\frac{\tau}{2-\tau^2}$.
Note that $\tau-\frac{\tau}{2-\tau^2}>0$ when $\tau\in(0,1)$ and $\tau-\frac{\tau}{2-\tau^2}<0$ when $\tau\in(-1,0)$.
Therefore, \eqref{eq:Gn} is proved.

For $\mathscr{F}_n(\tau)$, we can get the following recursive form
\begin{equation}\label{eq:Appendix_A_1}
	\mathscr{F}_{n+1}(\tau)=\frac{2}{\tau}-\frac{1}{\mathscr{F}_n(\tau)}.
\end{equation}
If $\tau\in(0,1)$, it is obvious that $\mathscr{F}_{n+1}(\tau)>1$ as long as $\mathscr{F}_{n}(\tau)>1$.
Since $\mathscr{F}_{1}(\tau)>1$,
we have $\mathscr{F}_{n}(\tau)>1$ $\forall n\in\mathbb{N}^+$.
Similarly, we can prove that $\mathscr{F}_{n}(\tau)<-1$ $\forall n\in\mathbb{N}^+$ when $\tau\in(-1,0)$.
It follows from \eqref{eq:Appendix_A_2} and \eqref{eq:Appendix_A_1} that
\begin{gather*}
	\mathscr{F}_{n+1}(\tau)-\mathscr{F}_{n}(\tau)
	=\frac{\mathscr{F}_{2}(\tau)-\mathscr{F}_{1}(\tau)}{\prod_{m=1}^{n-1}[\mathscr{F}_{m+1}(\tau)\mathscr{F}_{m}(\tau)]}, \\
	\mathscr{G}_{n}(\tau)-\mathscr{F}_{n}(\tau)
	=\frac{\mathscr{G}_{1}(\tau)-\mathscr{F}_{1}(\tau)}{\prod_{m=1}^{n-1}[\mathscr{G}_{m}(\tau)\mathscr{F}_{m}(\tau)]}, \\
	\mathscr{F}_{n+1}(\tau)-\mathscr{G}_{n}(\tau)
	=\frac{\mathscr{F}_{2}(\tau)-\mathscr{G}_{1}(\tau)}{\prod_{m=1}^{n-1}[\mathscr{F}_{m+1}(\tau)\mathscr{G}_{m}(\tau)]}.
\end{gather*}

Note that $\mathscr{F}_{2}(\tau)-\mathscr{F}_{1}(\tau)=1-\frac{\tau}{2-\tau}>0$ and
$\mathscr{G}_{1}(\tau)-\mathscr{F}_{1}(\tau)=1-\tau>0$ hold for all $\tau\in(-1,0)\cup(0,1)$.
Furthermore, $\mathscr{F}_{2}(\tau)-\mathscr{G}_{1}(\tau)=\tau-\frac{\tau}{2-\tau}>0$ when $\tau\in(0,1)$ and $\tau-\frac{\tau}{2-\tau}<0$ when $\tau\in(-1,0)$.
Combining with the fact that $\mathscr{F}_{n}(\tau)$ and $\mathscr{G}_{n}(\tau)$ share the same sign,
\eqref{eq:Fn} is obtained.
\hfill\qedsymbol

\section{Proof of Lemma \ref{lem:consensus_condition}}\label{app:pro_consensus_condition}
Recall that $\mathcal{G}$ is connected,
hence $0=\lambda_1<\lambda_2\leq\cdots\leq\lambda_n$.
It follows from
$\mathscr{P}(\theta,1,\lambda_{1})=0$,
$\frac{\mathrm{d}\mathscr{P}(\theta,\gamma,\lambda_{1})}{\mathrm{d}\gamma}\big|_{\gamma=1}\neq0$,
and
$\mathscr{P}(\theta,1,\lambda_{i})\neq0$ $\forall i\in\mathcal{I}_{2,n}$
that $1$ is an algebraically simple eigenvalue of $\Theta$,
which is denoted as $\gamma_{1,1}=1$.

For other eigenvalues $\gamma_{1,2},\dots,\gamma_{1,\theta+1}$,
assume that at least one eigenvalue
$\gamma_{1,k}$ ($k\in\mathcal{I}_{2,\theta+1}$)
is not inside the unit circle, i.e., $|\gamma_{1,k}|\geq1$.
If $|\gamma_{1,k}|>1$,
then we have 
$|\gamma_{1,k}^{\theta+1}-\gamma_{1,k}^{\theta}\alpha|
=|\gamma_{1,k}|^{\theta}\cdot |\gamma_{1,k}-\alpha|
\geq|\gamma_{1,k}|^{\theta}(|\gamma_{1,k}|-\alpha)
>1-\alpha$,
which contradicts the fact
$\mathscr{P}(\theta,\gamma_{1,k},\lambda_{1})=0$.
If $|\gamma_{1,k}|=1$, 
then
$|\gamma_{1,k}^{\theta+1}-\gamma_{1,k}^{\theta}\alpha|
=|\gamma_{1,k}-\alpha|=1-\alpha$
holds if and only if $\gamma_{1,k}=1$.
However, this contradicts the uniqueness of the eigenvalue $\gamma_{1,1}=1$.
Thus, $\gamma_{1,2},\dots,\gamma_{1,\theta+1}$ are all within the unit circle in the complex plane.

Then, the remaining proof of this lemma is essentially similar to that of Theorem 1 in \cite{FengXiao2006}.
According to this theorem,
we know that the noise-free multi-agent network \eqref{eq:DT_dynamics} achieves consensus if and only if $1$ is an algebraically simple eigenvalue of $\Theta$ and all the other eigenvalues lie within the unit circle in the complex plane.
Therefore, combining with aforementioned properties of $\gamma_{1,k}$,
the noise-free multi-agent network \eqref{eq:DT_dynamics} achieves consensus under protocol \eqref{eq:memory_protocol} if and only if $|\gamma_{i,k}|<1$ $\forall i\in\mathcal{I}_{2,n}, k\in\mathcal{I}_{1,\theta+1}$.

\section{Proof of Theorem \ref{thm:consensus_property}}\label{app:pro_consensus_property}
Lemma \ref{lem:consensus_condition} implies that the consensus is achieved under $(\theta+1)$-depth memory if and only if the trinomials
$\mathscr{P}(\theta+1,\gamma,\lambda_{i})$
are Schur stable for all $i\in\mathcal{I}_{2,n}$.
Then, we only need to prove that
$\mathscr{P}(\theta,\gamma,\lambda_{i})$
is Schur stable as long as
$\mathscr{P}(\theta+1,\gamma,\lambda_{i})$
is Schur stable.

According to the Jury stability criterion \cite{KatsuhikoOgata1995},
we construct the Jury stability table of the trinomial
$\mathscr{P}(\theta+1,\gamma,\lambda_{i})$ in Table \ref{tab:I-a},
where the elements in the first row is composed of the coefficients in $\mathscr{P}(\theta+1,\gamma,\lambda_{i})$ arranged in the ascending order with respect to the powers of $\gamma$, i.e.,
\begin{equation*}
	a_0=-(1-\alpha)(1-\beta\lambda_{i}),a_{\theta+1}=-\alpha(1-\beta\lambda_{i}),a_{\theta+2}=1.
\end{equation*}

\begin{table*}[!ht]\caption{The Jury stability tables for two trinomials.}
	\centering
	\subfloat[$\mathscr{P}(\theta+1,\gamma,\lambda_{i})$]{
		\begin{tblr}
			{colspec=
				{Q[2.45em,c]
					|
					Q[wd=1.7em,c]
					Q[wd=1.7em,c]
					Q[wd=1.7em,c]
					Q[wd=1.7em,c]
					Q[wd=1.7em,c]
					Q[wd=1.7em,c]
					Q[wd=1.7em,c]
					Q[wd=1.7em,c]
					Q[wd=1.7em,c]}
			}
			\hline[1pt]
			Row & $\gamma^0$ & $\gamma^1$ & $\gamma^2$ & $\gamma^3$ & $\cdots$ & $\gamma^{\theta-1}$ & $\gamma^{\theta}$ & $\gamma^{\theta+1}$ & $\gamma^{\theta+2}$ \\
			\hline
			$1$ & \SetCell{bg=brown9}$a_0$ & $0$ & $0$ & $0$ & $\cdots$ & $0$ & $0$ & \SetCell{bg=brown9}$a_{\theta+1}$ & \SetCell{bg=brown9}$a_{\theta+2}$ \\
			$2$ & \SetCell{bg=brown9}$a_{\theta+2}$ & \SetCell{bg=brown9}$a_{\theta+1}$ & $0$ & $0$ & $\cdots$ & $0$ & $0$ & $0$ & \SetCell{bg=brown9}$a_0$ \\
			$3$ & \SetCell{bg=teal9}$b_0$ & \SetCell{bg=teal9}$b_1$ & $0$ & $0$ & $\cdots$ & $0$ & $0$ & \SetCell{bg=teal9}$b_{\theta+1}$ &  \\
			$4$ & \SetCell{bg=teal9}$b_{\theta+1}$ & $0$ & $0$ & $0$ & $\cdots$ & $0$ & \SetCell{bg=teal9}$b_1$ & \SetCell{bg=teal9}$b_0$ & \\
			$5$ & \SetCell{bg=azure9}$c_0$ & \SetCell{bg=azure9}$c_1$ & $0$ & $0$ & $\cdots$ & $0$ & \SetCell{bg=azure9}$c_{\theta}$ &  & \\
			$6$ & \SetCell{bg=azure9}$c_{\theta}$ & $0$ & $0$ & $0$ & $\cdots$ & \SetCell{bg=azure9}$c_1$ & \SetCell{bg=azure9}$c_0$ & & \\
			$\vdots$ & $\vdots$ & $\vdots$ & $\vdots$ & $\vdots$ & & & & & \\
			$2\theta-1$ & \SetCell{bg=blue9}$p_0$ & \SetCell{bg=blue9}$p_1$ & $0$ & \SetCell{bg=blue9}$p_3$ &  &  &  &  & \\
			$2\theta$ & \SetCell{bg=blue9}$p_3$ & $0$ & \SetCell{bg=blue9}$p_1$ & \SetCell{bg=blue9}$p_0$ &  &  &  &  & \\
			$2\theta+1$ & \SetCell{bg=purple9}$q_0$ & \SetCell{bg=purple9}$q_1$ & \SetCell{bg=purple9}$q_2$ &  &  &  &  &  & \\
			$2\theta+2$ & \SetCell{bg=purple9}$q_2$ & \SetCell{bg=purple9}$q_1$ & \SetCell{bg=purple9}$q_0$ &  &  &  &  &  & \\
			\hline[1pt]
		\end{tblr}\label{tab:I-a}
	}
	\subfloat[$\mathscr{P}(\theta,\gamma,\lambda_{i})$]{
		\begin{tblr}
			{colspec=
				{Q[2.45em,c]
					|
					Q[wd=1.7em,c]
					Q[wd=1.7em,c]
					Q[wd=1.7em,c]
					Q[wd=1.7em,c]
					Q[wd=1.7em,c]
					Q[wd=1.7em,c]
					Q[wd=1.7em,c]
					Q[wd=1.7em,c]
					Q[wd=1.7em,c]}
			}
			\hline[1pt]
			Row & $\gamma^0$ & $\gamma^1$ & $\gamma^2$ & $\cdots$ & $\gamma^{\theta-2}$ & $\gamma^{\theta-1}$ & $\gamma^{\theta}$ & $\gamma^{\theta+1}$ \\
			\hline
			$1$ & \SetCell{bg=brown9}$a_0$ & $0$ & $0$ & $\cdots$ & $0$ & $0$ & \SetCell{bg=brown9}$a_{\theta+1}$ & \SetCell{bg=brown9}$a_{\theta+2}$ \\
			$2$ & \SetCell{bg=brown9}$a_{\theta+2}$ & \SetCell{bg=brown9}$a_{\theta+1}$ & $0$ & $\cdots$ & $0$ & $0$ & $0$ & \SetCell{bg=brown9}$a_0$ \\
			$3$ & \SetCell{bg=teal9}$b_0$ & \SetCell{bg=teal9}$b_1$ & $0$ & $\cdots$ & $0$ & $0$ & \SetCell{bg=teal9}$b_{\theta+1}$ &  \\
			$4$ & \SetCell{bg=teal9}$b_{\theta+1}$ & $0$ & $0$ & $\cdots$ & $0$ & \SetCell{bg=teal9}$b_1$ & \SetCell{bg=teal9}$b_0$ & \\
			$5$ & \SetCell{bg=azure9}$c_0$ & \SetCell{bg=azure9}$c_1$ & $0$ & $\cdots$ & $0$ & \SetCell{bg=azure9}$c_{\theta}$ &  & \\
			$6$ & \SetCell{bg=azure9}$c_{\theta}$ & $0$ & $0$ & $\cdots$ & \SetCell{bg=azure9}$c_1$ & \SetCell{bg=azure9}$c_0$ & & \\
			$\vdots$ & $\vdots$ & $\vdots$ & $\vdots$ & & & & & \\
			$2\theta-1$ & \SetCell{bg=blue9}$p_0$ & \SetCell{bg=blue9}$p_1$ &  \SetCell{bg=blue9}$p_3$ &  &  &  &  & \\
			$2\theta$ & \SetCell{bg=blue9}$p_3$ & \SetCell{bg=blue9}$p_1$ & \SetCell{bg=blue9}$p_0$ &  &  &  &  & \\
			\hline[1pt]
		\end{tblr}\label{tab:I-b}
	}
\end{table*}

For each $k\in\mathcal{I}_{0,\theta}$, the $(2k+2)$-th row consists of the elements of the $(2k+1)$-th row arranged in reverse order,
and only the potentially nonzero elements are labeled and colored,
which are given by
\begin{gather*}
	b_0=a_0^2-a_{\theta+2}^2,
	b_1=-a_{\theta+1}a_{\theta+2},
	b_{\theta+1}=a_0a_{\theta+1},\\
	c_0=b_0^2-b_{\theta+1}^2,
	c_1=b_0b_1,
	c_{\theta}=-b_1b_{\theta+1},\\
	\vdots\\
	q_0=p_0^2-p_3^2,
	q_1=p_0p_1,
	q_2=-p_1p_3.
\end{gather*}
Then, $\mathscr{P}(\theta+1,\gamma,\lambda_{i})$ is Schur stable if and only if
$\mathscr{P}(\theta+1,1,\lambda_{i})>0$,
\begin{equation}\label{eq:Jury_condition_1}
	\mathscr{P}(\theta+1,-1,\lambda_{i})
	\begin{cases}
		>0, & \mathrm{if}~\theta~\mathrm{is\ even},  \\
		<0, & \mathrm{if}~\theta~\mathrm{is\ odd},
	\end{cases}
\end{equation}
and the following stability conditions are all satisfied:
\begin{equation}\label{eq:Jury_condition_2}
	\begin{gathered}
		|a_0|<a_{\theta+2},|b_0|>|b_{\theta+1}|,|c_0|>|c_{\theta}|,\\
		\vdots\\
		|p_0|>|p_3|,|q_0|>|q_2|.
	\end{gathered}
\end{equation}
Similarly, the Jury stability table of the trinomial
$\mathscr{P}(\theta,\gamma,\lambda_{i})$ can be given by Table \ref{tab:I-b},
which contains exactly the same elements as those in Table \ref{tab:I-a}, excluding $q_0$, $q_1$, and $q_2$.

Then, $\mathscr{P}(\theta,\gamma,\lambda_{i})$ is Schur stable if and only if $\mathscr{P}(\theta,1,\lambda_{i})>0$,
\begin{equation}\label{eq:Jury_condition_3}
	\mathscr{P}(\theta,-1,\lambda_{i})
	\begin{cases}
		>0, & \mathrm{if}~\theta~\mathrm{is\ odd},  \\
		<0, & \mathrm{if}~\theta~\mathrm{is\ even},
	\end{cases}
\end{equation}
and all conditions in \eqref{eq:Jury_condition_2}, except $|q_0|>|q_2|$, are satisfied.

It follows from $\beta\in(0,\frac{2}{\lambda_{n}})$ that
$\mathscr{P}(\theta,1,\lambda_{i})=\beta\lambda_{i}>0$, \eqref{eq:Jury_condition_1}, and \eqref{eq:Jury_condition_3} hold for all $i\in\mathcal{I}_{2,n}$ and $\theta\in\mathbb{N}^+$.
Furthermore, one can observe that the potentially nonzero elements in Table \ref{tab:I-b} must satisfy the stability conditions of $\mathscr{P}(\theta,1,\lambda_{i})$ if \eqref{eq:Jury_condition_2} holds.
Therefore, for all $i\in\mathcal{I}_{2,n}$ and $\theta\in\mathbb{N}^+$,
if $\mathscr{P}(\theta+1,\gamma,\lambda_{i})$ is Schur stable,
then $\mathscr{P}(\theta,\gamma,\lambda_{i})$ is also Schur stable.

The proof is completed. \hfill\qedsymbol

\section{Proof of Lemma \ref{lem:H2_property}}\label{app:pro_H2_property}
	Define a block matrix
\begin{equation*}
	\mathcal{P}=
	\begin{bmatrix}
		\mathcal{P}_{1,1} & \dots & \mathcal{P}_{1,\theta+1}\\
		\vdots & \ddots & \vdots \\
		\mathcal{P}_{\theta+1,1} & \dots & \mathcal{P}_{\theta+1,\theta+1}
	\end{bmatrix}\in\mathbb{R}^{h\times h},
\end{equation*}
where each block is an $n\times n$ matrix,
such that $\mathcal{P}=(zI_h-\tilde{\Theta})^{-1}$.
It is inferred from $(zI_h-\tilde{\Theta})\mathcal{P}=I_h$ that
\begin{equation}\label{eq:P_property_1}
	z\mathcal{P}_{i,\theta+1}-\mathcal{P}_{i+1,\theta+1}=\mathbf{0}\ \forall i\in\mathcal{I}_{1,\theta}
\end{equation}
and
\begin{equation}\label{eq:P_property_2}
	(\alpha-1)\bar{\Phi}\mathcal{P}_{1,\theta+1}+(zI_n-\alpha\bar{\Phi})\mathcal{P}_{\theta+1,\theta+1}=I_n.
\end{equation}
By inspecting \eqref{eq:P_property_1},
we can derive that $\mathcal{P}_{\theta+1,\theta+1}=z\mathcal{P}_{\theta,\theta+1}=\cdots=z^\theta \mathcal{P}_{1,\theta+1}$.
Combining with \eqref{eq:P_property_2},
we can get
$\mathcal{P}_{\theta+1,\theta+1}=z^\theta[(\alpha-1)\bar{\Phi}+z^\theta(zI_n-\alpha\bar{\Phi})]^{-1}$.
Then, we have
\begin{align*}
	T_1(z)&=
	\begin{bmatrix}
		\mathbf{0} & \bar{\Psi}
	\end{bmatrix}
	\mathcal{P}
	\begin{bmatrix}
		\mathbf{0}\\
		I_n
	\end{bmatrix} \\
	&=\bar{\Psi}\mathcal{P}_{\theta+1,\theta+1} \\
	&=\begin{bmatrix}
		0 & \mathbf{0}_n^\top \\
		\mathbf{0}_n & z^\theta[(\alpha-1)\Phi+z^\theta(zI_n-\alpha\Phi)]^{-1}
	\end{bmatrix}.
\end{align*}
Similarly, the transfer matrix of system \eqref{eq:augmented_system_3} can be given by
\begin{equation*}
	T_2(z)=z^\theta[(\alpha-1)\Phi+z^\theta(zI_n-\alpha\Phi)]^{-1}.
\end{equation*}
It can be readily obtained that
\begin{equation}\label{eq:P_property_3}
	T_1(z)
	=
	\begin{bmatrix}
		0 & \mathbf{0}_n^\top \\
		\mathbf{0}_n & T_2(z)
	\end{bmatrix}
	=
	Q^\top T_{\alpha,\beta,\theta}(z)Q.
\end{equation}

We can verify that $\hat{\Theta}$ has the same eigenvalues of $\Theta$ except $\gamma_{1,1},\dots,\gamma_{1,\theta+1}$.
As $\mathcal{G}$ is connected and $(\alpha,\beta)\in\Omega_\theta$,
system \eqref{eq:augmented_system_3} is asymptotically stable,
which means that $\|T_2(z)\|_2$ is well-defined.
Then, it follows from \eqref{eq:P_property_3} and Definition \ref{def:H2_norm} that $\|T_{\alpha,\beta,\theta}(z)\|_2=\|T_{1}(z)\|_2=\|T_{2}(z)\|_2$.
Therefore, $\|T_{\alpha,\beta,\theta}(z)\|_2$ is also well-defined and is completely determined by the asymptotically stable subsystem \eqref{eq:augmented_system_3}.\hfill\qedsymbol






\ifCLASSOPTIONcaptionsoff
  \newpage
\fi

\footnotesize 
\bibliography{references}

\end{document}